\newtheorem*{rep@theorem}{\rep@title}
\newcommand{\newreptheorem}[2]{%
\newenvironment{rep#1}[1]{%
 \def\rep@title{#2 \ref{##1}}%
 \begin{rep@theorem}}%
 {\end{rep@theorem}}}
\newtheorem{theorem}{Theorem}[section]
\newtheorem*{theorem*}{Theorem}
\newtheorem{lemma}[theorem]{Lemma}
\newtheorem{cor}[theorem]{Corollary}
\newtheorem*{theoremA}{Theorem A}
\newtheorem*{theoremB}{Theorem B}
\theoremstyle{definition}
\newtheorem{definition}[theorem]{Definition}
\newtheorem{remark}[theorem]{Remark}
\newtheorem{example}[theorem]{Example}
\newcommand{\ket}[1]{|#1\rangle}
\newcommand{\bra}[1]{\langle#1|}
\newcommand{\ct}{^{\dagger}}
\newcommand{\tp}{^{\mathsf{T}}}
\newcommand{\N}{\mathbb{N}}
\newcommand{\C}{\mathbb{C}}
\title{Counterexamples in self-testing}
\author{Laura Man\v{c}inska}\email{mancinska@math.ku.dk}
\author{Simon Schmidt}\email{sisc@math.ku.dk}
\affiliation{QMATH, Department of Mathematical Sciences, University of Copenhagen, Universitetsparken 5, 2100 \linebreak Copenhagen \O, Denmark}
\begin{document}

\maketitle

\begin{abstract}
In the recent years self-testing has grown into a rich and active area of study with applications ranging from practical verification of quantum devices to deep complexity theoretic results. Self-testing allows a classical verifier to deduce which quantum measurements and on what state are used, for example, by provers Alice and Bob in a nonlocal game. Hence, self-testing as well as its noise-tolerant cousin---robust self-testing---are desirable features for a nonlocal game to have. 

Contrary to what one might expect, we have a rather incomplete understanding of if and how self-testing could fail to hold. In particular, could it be that every 2-party nonlocal game or Bell inequality with a quantum advantage certifies the presence of a specific quantum state? Also, is it the case that every self-testing result can be turned robust with enough ingeniuty and effort? We answer these questions in the negative by providing simple and fully explicit counterexamples. To this end, given two 
 nonlocal games $\mathcal{G}_1$ and $\mathcal{G}_2$, we introduce the $(\mathcal{G}_1 \lor \mathcal{G}_2)$-game, in which the players get pairs of questions and choose which game they want to play. The players win if they choose the same game and win it with the answers they have given. Our counterexamples are based on this game and we believe this class of games to be of independent interest.
\end{abstract}

\section{Introduction}
The notion of self-testing was first introduced by Mayers and Yao \cite{MY} and it has since developed into an active and rich area of study (see \cite{SB} for a review). One major motivation behind self-testing is that it can be used by a classical verifier to certify that untrusted quantum devices perform according to their specification. This is accomplished by deriving a quantum mechanical description of a quantum device merely from classical observations.
More precisely, if Alice and Bob want to know the state and measurements of their quantum devices, they can play a nonlocal game and check the output probabilities. If this game is a self-test then after observing the desired output probabilities, Alice and Bob can conclude that their devices must be implementing certain measurements on a certain quantum state.

In addition to the early applications of self-testing to certification and device - independent protocols, in the recent years it has been a key ingredient for important results in quantum complexity theory \cite{Fitzsimons, NatarajanVidick, NatarajanWright, mipre}. The most notable among these results is the recent breakthrough \cite{mipre} establishing that MIP*=RE and resolving Connes' Embedding Problem which had resisted all attempts for over 50 years. 

It was shown by Bell \cite{Bell64} that the predictions of quantum mechanics are not compatible with any local hidden-variable theory. He obtained this result by showing that quantum mechanics predicts violations of what are now known as \emph{Bell inequalities}. Our definition of self-testing will be given in the setting of \emph{nonlocal games} \cite{CHTW} which are another way to explore Bell inequalities and their violations.
A (2-player) nonlocal game is played by two collaborating players, Alice and Bob, and a referee. The referee gives questions to Alice and Bob and they each have to respond with an answer. The players win according to a function on questions and answers, which is known to the players beforehand. Crucially, Alice and Bob are not allowed to communicate after they have received their questions. They can, however, agree on a strategy ahead of time. In a \emph{quantum strategy} Alice and Bob can  share an entangled state and use local measurements to come up with their answers. Finding a quantum strategy for a nonlocal game which exceeds the winning probability of any classical strategy yields a violation of a Bell inequality. We call quantum strategies with the highest winning probability in a nonlocal game \emph{optimal} quantum strategies. 

\paragraph{Motivation and results.} In a nutshell, \emph{self-testing} says that any optimal quantum strategy for some game $\mathcal G$ is equal to a reference strategy, up to local isometries. There are different forms of self-testing. A game can self-test an optimal quantum strategy or just the shared state of this quantum strategy. Also, a game can \emph{robustly} self-test a strategy in which case we in addition require that any near-optimal quantum strategy must be close to an optimal reference strategy. For a nonlocal game $\mathcal{G}$, an optimal quantum strategy $S$, and state $\ket{\psi}$ used in $S$,  we have the following relation between the above forms of self-testing:
\begin{align}
    \mathcal{G} \text{ robustly self-tests }S \Rightarrow \mathcal{G} \text{ self-tests }S \Rightarrow \mathcal{G} \text{  self-tests }\ket{\psi}.
    \label{Chain}
\end{align}
So we see that robust self-tests are the strongest while self-tests of states are a weaker form of self-testing. 
It is natural to ask for counterexamples showing that there are no equivalences above.

Unsurprisingly, it is often easier to prove that a certain game self-tests a reference strategy~$S$ but more care and effort is needed to show that this self-test is in fact robust. In most known cases, however, the same argument that is used to establish a self-testing result can also be turned robust by simply arguing that all the desired relations hold approximately rather than exactly. In fact, there are no known examples of self-tests that are not robust! Moreover, if a self-test is proven using a specific technique that leverages representation theory of finite groups then robustness of such self-tests follows from the Gowers-Hatami theorem \cite{GowersHatami, VidickGH}. The idea here is that all representations of finite groups are stable in a certain precise mathematical sense. Hence, one might hope that the same holds true for self-tests and any (exact) self-test is necessarily robust. Regretfully, we show that this is not the case, \emph{i.e.}, the converse of the first implication from~(\ref{Chain}) fails to hold:

\begin{theoremA}[Theorem \ref{thm:nonrobustselftest} \& Example \ref{ex:nonrobust}]\label{theoremA}
There exists a nonlocal game $\mathcal{G}$ that exactly self-tests a quantum strategy $S$, but this self-test is not robust.
\end{theoremA}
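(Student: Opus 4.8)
The plan is to realize $\mathcal G$ as a $\lor$-game $\mathcal G_1\lor\mathcal G_2$ for two carefully chosen ingredients. For $\mathcal G_1$ I would take a nonlocal game of quantum value $1$ that is already known to be an exact self-test — for concreteness the Mermin--Peres magic square game, which has a perfect quantum strategy $S_1$ (two maximally entangled qubit pairs together with the standard observables) and self-tests $S_1$. For $\mathcal G_2$ I would take an \emph{explicit} nonlocal game whose quantum value equals $1$ but is \emph{not attained} by any quantum strategy; such games exist, for instance games built from the entanglement-embezzlement construction of van Dam and Hayden, where finite-dimensional strategies win with probability tending to $1$ while winning with probability exactly $1$ is impossible. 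Put $\mathcal G:=\mathcal G_1\lor\mathcal G_2$ and let $S$ be the strategy ``always declare game $1$ and then answer according to $S_1$''. Since any strategy for a $\lor$-game wins with probability at most the probability that the two players declare the same game, the winning probability never exceeds $1$; and $S$ wins with certainty because $S_1$ is a perfect strategy for $\mathcal G_1$. Hence $S$ is an optimal strategy for $\mathcal G$.

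One then shows that $\mathcal G$ exactly self-tests $S$. Let $S'$ be any optimal strategy, so $S'$ wins $\mathcal G$ with certainty. This forces (i) the two players never declare different games, and (ii) whenever they both declare game $i$ they win $\mathcal G_i$ with certainty. Let $q$ be the probability that both declare game $2$. If $q>0$, then conditioning on this event turns $S'$ into a quantum strategy for $\mathcal G_2$ (possibly with some shared randomness, which cannot help) that wins $\mathcal G_2$ with certainty, contradicting the fact that the value of $\mathcal G_2$ is not attained. Hence $q=0$: under $S'$ both players declare game $1$ with probability $1$ and win $\mathcal G_1$ with certainty. By the self-testing property of $\mathcal G_1$, the declare-game-$1$ part of $S'$ equals $S_1$ up to a local isometry (and a local auxiliary state), while the declaration is deterministically $1$; the declare-game-$2$ POVM elements of $S'$ annihilate the shared state and so are absorbed by the isometry. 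Therefore $S'$ equals $S$ up to local isometries, i.e.\ $\mathcal G$ self-tests $S$.

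The final step is to show that this self-test is not robust. For each $n$ pick a strategy $T_n$ for $\mathcal G_2$ winning with probability $p_n$, where $p_n\to 1$, and let $S^{(n)}$ be the strategy for $\mathcal G$ in which the players use shared randomness to decide, with probability $\tfrac12$ each, either to both declare game $1$ and play $S_1$, or to both declare game $2$ and play $T_n$. Then $S^{(n)}$ wins $\mathcal G$ with probability $\tfrac12(1+p_n)\to 1$, so the $S^{(n)}$ are arbitrarily close to optimal. On the other hand, under $S^{(n)}$ each player declares game $2$ with probability $\tfrac12$, whereas under $S$ this event has probability $0$; since this probability is a measurement statistic and hence unchanged by local isometries applied to a strategy, no local isometry can bring $S^{(n)}$ within some fixed positive distance of $S$. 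Thus we obtain a sequence of strategies whose winning probability tends to the optimum while they stay bounded away from $S$ up to local isometries, which is exactly the failure of robustness. Combined with the previous paragraph, this proves the statement.

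The substantive ingredient is $\mathcal G_2$, and I expect pinning down a concrete nonlocal game whose quantum value equals $1$ and is not attained, and verifying its properties in the tensor-product model, to be the most delicate part; the embezzlement games are the natural candidates. The second point requiring care is the claim in the middle step that every optimal strategy of $\mathcal G$ coincides with $S$ up to local isometries: one must handle the dependence of the declare-game-$1$ answers on the (discarded) $\mathcal G_2$-questions, check that the unused declare-game-$2$ branch is genuinely harmless, and — if $\mathcal G_2$ is chosen with a nontrivial question distribution — control the reweighting of $\mathcal G_2$'s questions caused by conditioning on ``both declare game $2$''. Taking $\mathcal G_1$ of value exactly $1$ and $\mathcal G_2$ with essentially a single question keeps all of this routine, so that the real content of the argument is the combination of (a) exact rigidity of $\mathcal G_1$ and (b) non-attainability of the value of $\mathcal G_2$.
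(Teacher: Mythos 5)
Your overall strategy is the same as the paper's: take the $(\mathcal G_1\lor\mathcal G_2)$-game with one factor a pseudo-telepathy self-test and the other a game whose quantum value is $1$ but unattained, show that every perfect strategy must always ``declare'' the self-test game (this is essentially the paper's Lemma~\ref{perfectqstratG1}), and then break robustness with the near-perfect strategies of the unattained game. The non-robustness half of your argument is sound, and the Slofstra game \cite{SlofstraQcor} is indeed the concrete ingredient the paper uses.

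There is, however, a genuine gap in the middle step which the paper spends real effort on and which you flag but do not actually close. When you invoke the rigidity of $\mathcal G_1$ you need, for a perfect strategy $S'$ of the $\lor$-game, that Alice's operator $E_{(x_1,x_2)a}$ for a game-$1$ answer $a$ is (on the support of the state) \emph{independent of the game-$2$ question $x_2$}; otherwise for each $x_2$ you get a \emph{different} perfect magic-square strategy, and the self-test of $\mathcal G_1$ only relates each of them to $S_1$ \emph{by a different isometry}, which does not produce a single local dilation of $S'$. The paper's Lemma~\ref{samefordifferentindex} is exactly this statement, and its proof uses synchronicity of the self-testing game; this is why the paper constructs and proves a self-test for the \emph{synchronous} magic square game (Corollary~\ref{cor:Selftestsyncmagicsquare}) rather than using the standard Mermin--Peres magic square. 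Your proposed workaround — ``$\mathcal G_2$ with essentially a single question'' — does not help: the problematic dependence is on the question set of the \emph{other} game, and the Slofstra game (or any game with quantum value $1$ that is unattained) has many questions; indeed a one-question-per-player game always attains its value classically. You would also need to address that the magic-square self-test in \cite{WBMS} is proven only for projective measurements, whereas perfect strategies for the $\lor$-game are a priori POVM strategies; the synchronous reformulation and \cite[Corollary 3.6]{MPS} are what let the paper pass from POVMs to projections.
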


The second question we address in this article is whether every 2-party nonlocal game with quantum advantage self-tests some quantum state. The only known non-trivial set-ups that do not self-test any quantum state are known in the many party case \cite{GeoQCor}. We answer our second question in the negative:

\begin{theoremB}[Theorem \ref{thm:2pseud} \& Example \ref{ex:nostateselftest}]
    There exists a game $\mathcal{G}$ that does not self-test any state. 
\end{theoremB}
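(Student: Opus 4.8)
The plan is to realize $\mathcal{G}$ as a $(\mathcal{G}_1 \lor \mathcal{G}_2)$-game. First I would fix the rules precisely: Alice receives a pair $(x_1,x_2)$ and Bob a pair $(y_1,y_2)$, where $(x_1,y_1)$ is drawn from the question distribution of $\mathcal{G}_1$ and $(x_2,y_2)$ from that of $\mathcal{G}_2$, with the two pairs independent; each player replies with a label $i \in \{1,2\}$ together with an answer for $\mathcal{G}_i$, and the players win exactly when the two labels agree, say to $i$, and the two answers win $\mathcal{G}_i$ on $(x_i,y_i)$. I would choose both $\mathcal{G}_1$ and $\mathcal{G}_2$ to be \emph{pseudo-telepathy} self-tests (perfect quantum value) whose reference states $\ket{\psi_1}$ and $\ket{\psi_2}$ have \emph{coprime Schmidt ranks} --- concretely, $\mathcal{G}_1$ the Mermin--Peres magic square game, which self-tests the maximally entangled state of two pairs of qubits (Schmidt rank $4$), and $\mathcal{G}_2$ a perfect-value self-test of the maximally entangled state on $\mathbb{C}^3 \otimes \mathbb{C}^3$ (Schmidt rank $3$), for instance a suitable linear-constraint-system pseudo-telepathy game over $\mathbb{Z}_3$.

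Two easy observations prepare the ground. Since $\mathcal{G}_1$ has quantum value $1$, so does $\mathcal{G}_1 \lor \mathcal{G}_2$: the strategy ``ignore $(x_2,y_2)$, always output the label $1$, and play an optimal $\mathcal{G}_1$-strategy'' wins with certainty, and symmetrically for the label $2$. Hence $\mathcal{G}_1 \lor \mathcal{G}_2$ admits two optimal quantum strategies $S_1$ and $S_2$ whose shared states may be taken to be exactly $\ket{\psi_1}$ and $\ket{\psi_2}$ (using that $\mathcal{G}_i$ self-tests $\ket{\psi_i}$, so an optimal $\mathcal{G}_i$-strategy on precisely that state exists). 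Moreover the game has a genuine quantum advantage: in a classical strategy with shared randomness, once the randomness is fixed the label output by Alice is a function of $(x_1,x_2)$ and the label output by Bob a function of $(y_1,y_2)$, and since the question distribution is a product of full-support distributions these functions can agree on all inputs only if both are a single common constant; so the classical players effectively commit in advance to one game, and the classical value equals $\max(\omega(\mathcal{G}_1),\omega(\mathcal{G}_2)) < 1$.

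The heart of the proof is a rigidity obstruction. Suppose, for contradiction, that $\mathcal{G}_1 \lor \mathcal{G}_2$ self-tested some entangled state $\ket{\psi}$. Applying the self-testing property to $S_1$ yields local isometries $V_A \otimes V_B$ with $(V_A \otimes V_B)\ket{\psi_1} = \ket{\psi} \otimes \ket{\mathrm{junk}_1}$, and applying it to $S_2$ yields $(V_A' \otimes V_B')\ket{\psi_2} = \ket{\psi} \otimes \ket{\mathrm{junk}_2}$. A local isometry leaves the Schmidt spectrum unchanged, and Schmidt rank is multiplicative under tensor products, so $\mathrm{rk}(\ket{\psi})$ divides $\mathrm{rk}(\ket{\psi_1}) = 4$ and also divides $\mathrm{rk}(\ket{\psi_2}) = 3$; since $\gcd(3,4)=1$, this forces $\mathrm{rk}(\ket{\psi}) = 1$, i.e.\ $\ket{\psi}$ is a product state, contradicting that it is entangled. (Equivalently, the only state $\mathcal{G}_1 \lor \mathcal{G}_2$ could possibly self-test is a product state, which is a vacuous notion and anyway impossible here since its optimal strategies are entangled.) Phrasing the general principle as a theorem --- the $\lor$ of two pseudo-telepathy self-tests with coprime reference dimensions is a game with quantum advantage that self-tests no state --- and then substituting the explicit $\mathcal{G}_1, \mathcal{G}_2$ gives Theorem B.

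I expect the main obstacle to be making $\mathcal{G}_2$ fully explicit: the whole argument is clean provided one has an explicit \emph{perfect-value} self-test of a maximally entangled state whose local dimension is coprime to the $2$-power dimension self-tested by the magic square, so the real work is to exhibit such a game and verify that it self-tests the $\mathbb{C}^3 \otimes \mathbb{C}^3$ maximally entangled state. A secondary but routine point is the bookkeeping behind the two easy observations: checking that ``always play $\mathcal{G}_i$'' is a bona fide strategy for $\mathcal{G}_1 \lor \mathcal{G}_2$ with shared state exactly $\ket{\psi_i}$, and that the ambient definition of self-testing a state genuinely permits pulling back along $V_A \otimes V_B$ in the manner used above.
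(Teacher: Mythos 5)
Your overall structure matches the paper's: you form the $(\mathcal{G}_1\lor\mathcal{G}_2)$-game, observe that ``always play $\mathcal{G}_i$'' gives two perfect quantum strategies whose states are $\ket{\psi_1}$ and $\ket{\psi_2}$, argue the classical value stays below $1$, and then use multiplicativity of Schmidt rank under local isometries to show that any putative self-tested state would have Schmidt rank dividing both $\mathrm{rk}(\ket{\psi_1})$ and $\mathrm{rk}(\ket{\psi_2})$, hence rank $1$, hence no quantum advantage --- a contradiction. This is exactly the paper's Lemma \ref{nonselftest} combined with Theorem \ref{thm:2pseud}. One small remark: you do not actually need $\mathcal{G}_1, \mathcal{G}_2$ to be self-tests themselves; you only need them to be pseudo-telepathy games admitting perfect strategies on states of coprime Schmidt rank, and the paper's Theorem \ref{thm:2pseud} is stated in exactly that weaker form.

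The genuine gap is in the instantiation, and you flag it yourself as ``the real work.'' Producing a bipartite pseudo-telepathy game with a perfect strategy on a Schmidt-rank-$3$ state is not routine, and your proposal --- ``a suitable linear-constraint-system pseudo-telepathy game over $\mathbb{Z}_3$'' --- is not a construction: it is not at all clear that such a game exists, and the dimension certified by an LCS game need not be $3$ even if the alphabet is $\mathbb{Z}_3$. This is precisely where the paper departs from your sketch. It takes $\mathcal{G}_1$ to be the $(G_S,k)$-independent set game (Example \ref{ex:nostateselftest}) where $G_S$ is the orthogonality graph of a $3$-dimensional weak Kochen--Specker set (Peres' $33$-vector set with $16$ bases, so $k=16$); Lemma \ref{explicitqindepset} turns the KS set into a rank-one quantum independent set in $\C^{3\times 3}$, Theorem \ref{biggerthanindep} gives $k>\alpha(G_S)$ so the game is pseudo-telepathy, and Lemma \ref{specificstrat} produces a perfect strategy on the maximally entangled state $\ket{\psi_3}$ of Schmidt rank exactly $3$. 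Without some such explicit odd-Schmidt-rank pseudo-telepathy game, your argument establishes the conditional statement (Theorem \ref{thm:2pseud}) but not the existence claim of Theorem B. Everything else in your write-up --- the classical-value argument using the full support of the product distribution, the divisibility argument via local isometries --- is correct and aligns with the paper.
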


\paragraph{Discussion and outlook.} Another way to get different forms of self-testing is to vary the classical observations one has access to. In this article we have focused on self-testing where we assume that a quantum strategy achieves optimal or near-optimal winning probability in a nonlocal game. Another common form is where we derive self-testing from optimal or \mbox{near-optimal} Bell inequality violations. Since every nonlocal game can be cast as a Bell inequality, our counterexamples also hold in that setting. Finally, rather than self-testing from a single numerical value (\emph{e.g.}~Bell violation or winning probability), we can prove a self-testing result for quantum strategies that produce a certain quantum correlation (collection of probabilities). As we explain in more detail below, our counterexamples do not carry over to the setting of correlations. We summarize the known counterexamples to self-testing, including results of this article, in the table below:

\begin{table}[H]
\begin{center}
\begin{tabular}{lccc}
\toprule
&\makecell{Non-robust\\self-test}& \makecell{Not a measurement\\ self-test}&\makecell{Not a state\\ self-test}\\[.5ex]
\midrule
Nonlocal games&Theorem A&\cite{CMMNS}, \cite{slofstra2011lower}, \cite{Sigurd}&Theorem B\\
Bell inequalities&&\cite{kaniewski2020weak}, \cite{kaniewski2019maximal}&\cite{BASA}, \cite{FA}, \cite{GeoQCor}\\
Extreme correlations&?&\cite{Sigurd}, \cite{tavakoli2021mutually}&?\\
\bottomrule
\end{tabular}
\end{center}
\caption{Counterexamples for equivalence of different forms of self-testing in the three settings: nonlocal games, Bell inequalities, and extreme correlations. Note that every counterexample for nonlocal games also yields a counterexample for Bell inequalities.}\label{Table 1}
\end{table}

We now comment on Table \ref{Table 1}. We present the first example of a game that self-tests a quantum strategy, but this self-test is not robust (Theorem A) in this article. Since the correlations of the approximate optimal strategies of our game do not necessarily converge to the self-tested strategy, we do not get a non-robust self-test for correlations. We leave this as an open question.
For obtaining an example of a game that does not self-test measurements, one needs to construct games with inequivalent optimal strategies. In \cite{CMMNS} it was shown that the glued magic square game does not self-test measurements. It turns out that this game still self-tests a quantum state \cite{MNP}. This gives an example of a game that self-tests a state but not the whole strategy. Another, elementary construction of such a game is due to Storgaard \cite[Section 5.6]{Sigurd}: Take a game that is a self-test and add a question to Bob's question set for which the players always lose if Bob receives this question. The game is still a state self-test, but the operators Bob uses for the additional question are unknown to us. Another counterexample for measurement self-testing was obtained by Kaniewski \cite{kaniewski2020weak} in the framework of Bell inequalities. An extreme correlation obtained by inequivalent strategies was first presented in \cite{tavakoli2021mutually} and studied further in \cite{Sigurd}.
In this article, we construct the first examples of two-party games that do not self-test any state (Theorem B). As shown in the table above, an example of a three-party Bell inequaltiy that does not self-test any state has been identified in \cite{GeoQCor}. Furthermore, Bell inequalities that are maximally violated by states in an entangled subspace are studied in \cite{BASA} and \cite{FA}, but all their examples use more than two parties. We do not know an extreme correlation that does not self-test any state.

\paragraph{Proof ideas.} We now explain the key ideas behind our two main theorems. For obtaining the examples of non-robust self-tests and games that do not self-test any state, we introduce the \mbox{$(\mathcal{G}_1 \lor \mathcal{G}_2)$}-game for nonlocal games $\mathcal{G}_1$ and $\mathcal{G}_2$. In this game the players receive pairs of questions, one from each game. They choose which question they answer and win if they answered a question from the same game and additionally win this game with their answers. We believe that the \mbox{$(\mathcal{G}_1 \lor \mathcal{G}_2)$}-game may be of independent interest, beyond the applications for self-testing results.

For the non-robust self-test (Theorem A), we let $\mathcal{G}_1$ be a game that has no perfect quantum strategy, but a sequence of strategies whose winning probabilities converge to one. Note that Slofstra constructed such a game in \cite{SlofstraQcor}. We choose $\mathcal{G}_2$ to be a pseudo-telepathy game (\emph{i.e.} a game with perfect quantum strategy, but no perfect classical strategy) that self-tests some strategy $S_2$. Then the $(\mathcal{G}_1 \lor \mathcal{G}_2)$-game self-tests a strategy in which Alice and Bob always choose game $\mathcal{G}_2$ and play according to $S_2$. This self-test is not robust, however, as there are near-optimal strategies for the $(\mathcal{G}_1 \lor \mathcal{G}_2)$-game coming from the near-optimal strategies for $\mathcal{G}_1$. 



To obtain nonlocal games that do not self-test any state (Theorem B), it suffices to construct a game with two optimal quantum strategies that use states with coprime Schmidt ranks (see Lemma \ref{nonselftest}). In our case, those will be perfect quantum strategies. Note that we can get perfect strategies for the $(\mathcal{G}_1 \lor \mathcal{G}_2)$-game from perfect quantum strategies for $\mathcal{G}_1$ and $\mathcal{G}_2$, respectively, by Alice and Bob always choosing the same game and playing the perfect quantum strategy of this game. Thus, to get a game that does not self-test any state, it suffices to find two pseudo-telepathy games that have perfect strategies using states of coprime Schmidt ranks. Then the $(\mathcal{G}_1 \lor \mathcal{G}_2)$-game will not self-test any state. 

The magic square game is a pseudo-telepathy game with a strategy that uses a state with Schmidt rank $4$. Thus, we are left with finding a pseudo-telepathy game that has a perfect quantum strategy with a state of odd Schmidt rank. Our example of such a game is a $(G,t)$-independent set game, where $G$ is the orthogonality graph of a $3$-dimensional weak Kochen-Specker set and some $t \in \N$. The perfect quantum strategy uses a state of Schmidt rank $3$. 


\paragraph{Structure of the article.} In Section \ref{sect:prelim} we collect background material on nonlocal games. We then define all forms of self-tests in Section \ref{sect:selftest} and provide some known results we will use later on. Furthermore, we show that the synchronous magic square game self-tests a quantum strategy. In Section \ref{sect:newgame} we introduce the $(\mathcal{G}_1 \lor \mathcal{G}_2)$-game and establish some of its properties in case when both $\mathcal G_1$ and $\mathcal G_2$ are pseudo-telepathy games. Using the $(\mathcal{G}_1 \lor \mathcal{G}_2)$-game, we construct a non-robust self-test in Section \ref{sect:Nonrobustselftest}. Finally, we obtain games that do not self-test states in Section~\ref{sect:gamesnotselfteststates}.  

\section{Preliminaries}\label{sect:prelim}

We state some basic facts and notions we will use throughout this article. 
We write $[m]:=\{1, \dots, m\}$. Each Hilbert space $\mathcal{H}$ considered in this article is \textbf{finite-dimensional} which means that we have $\mathcal{H}\cong \C^d$ for some $d\in \N$. We let $\|\ket{\xi}\|=(\bra{\xi}\xi\rangle)^{\frac{1}{2}}$ for $\ket{\xi}\in \mathcal{H}$. An operator $X\in B(\mathcal{H})$ is \emph{positive}, if $\bra{\xi}X\ket{\xi}\geq 0$ for all $\ket{\xi}\in \mathcal{H}$. We write $X \leq Y$ if $Y-X$ is positive. 

A \emph{positive operator valued measure} (POVM) consists of a family of positive operators $\{M_i \in B(\mathcal{H})\,|\, i \in [m]\}$ such that $\sum_{i=1}^m M_i= 1_{B(\mathcal{H})}$. If all positive operators are projections ($M_i=M_i^*=M_i^2$), then we call $\{M_i \in B(\mathcal{H})\,|\, i \in [m]\}$ with $\sum_{i=1}^m M_i= 1_{B(\mathcal{H})}$ a \emph{projective measurement} (PVM).
A \emph{state} $\ket{\psi}$ is a unit vector in a Hilbert space $\mathcal{H}$. Each state $\ket{\psi}\in \mathcal{H}_A \otimes \mathcal{H}_B$ admits the so-called \emph{Schmidt decomposition} $\ket{\psi}=\sum_{i=1}^m\lambda_i a_i \otimes b_i$, where $\{a_i \,|\,i \in [m]\}$ and $\{b_i\,|\,i\in [m]\}$ are orthonormal sets in $\mathcal{H}_A$ and $\mathcal{H}_B$, respectively, and $\lambda_i \geq 0$ for all $i$. The strictly positive values $\lambda_i > 0$ are called \emph{Schmidt coefficients}. The \emph{Schmidt rank} of a state $\ket{\psi}$ is the number $n$ of Schmidt coefficients $\lambda_i$ (counted with multiplicity) in a Schmidt decomposition. We say that $\ket{\psi}$ has \emph{full Schmidt rank} if $n=\mathrm{dim}(\mathcal{H}_A)=\mathrm{dim}(\mathcal{H}_B)$. For $\ket{\psi}=\sum_{i=1}^m\lambda_i a_i \otimes b_i$, we let $\mathrm{supp}_A(\ket{\psi})=\mathrm{span}\{a_i\}_{i=1}^m$ and $\mathrm{supp}_B(\ket{\psi})=\mathrm{span}\{b_i\}_{i=1}^m$. Furthermore, we say that a subspace $\mathcal{K}\subseteq \mathcal{H}$ is an \emph{invariant subspace} of $X\in B(\mathcal{H})$ if $X(\mathcal{K})\subseteq \mathcal{K}$. 

In the following, we describe the framework of nonlocal games \cite{CHTW}. A two-player \emph{nonlocal game} $\mathcal{G}$ is played between two collaborating players, Alice and Bob, and a referee. It is specified by finite input sets $I_A$, $I_B$ and finite output sets $O_A$, $O_B$ for Alice and Bob, respectively as well as a \emph{verification function} $V:I_A\times I_B \times O_A \times O_B\to \{0,1\}$ and a probability distribution $\pi$ on $I_A \times I_B$. In the game, the referee samples a pair $(x,y) \in I_A\times I_B$ using the distribution $\pi$ and sends $x$ to Alice and $y$ to Bob. Alice and Bob respond with $a\in O_A$ and $b\in O_B$, respectively. They win if $V(x, y,a,b)=1$. In our examples, we always use the uniform distribution $\pi$. 
The players are not allowed to communicate during the game, but they can agree on a strategy beforehand. A \emph{classical strategy} for a nonlocal game is a strategy in which Alice and Bob only have access to shared randomness. In a \emph{quantum strategy}, the players are allowed to perform local measurements on a shared entangled state. Thus, a quantum strategy may be written as $S=(\ket{\psi} \in \mathcal{H}_A\otimes \mathcal{H}_B, \{E_{xa}\}_x, \{F_{yb}\}_y)$, where $\ket{\psi}$ is the shared entangled state, $\{E_{xa}\}_x$ are the POVMs for each $x \in I_A$ for Alice and $\{F_{yb}\}_y$ are the POVMs for each $y \in I_B$ for Bob. 
A nonlocal game $\mathcal{G}$ is called \emph{synchronous} \cite{paulsen2016estimating} if we have $I_A=I_B$, $O_A=O_B$ and if Alice and Bob receive identical inputs, they must answer with identical outputs to win, i.e. $V(x,x,a,b)=0$ for $a\neq b$. 

The \emph{classical value} $\omega(\mathcal{G})$ of a nonlocal game $\mathcal{G}$ is the greatest probability of winning the game with a classical strategy. The \emph{quantum value} $\omega^*(\mathcal{G})$ is the supremum over the winning probabilities of the quantum strategies for the game. An \emph{optimal quantum strategy} is a quantum strategy achieving this quantum value. In contrary to the classical value, the quantum value is not always attained \cite{SlofstraQcor}. In general, the quantum value is bigger than the classical value. We are especially interested in \emph{pseudo-telepathy} games, which are games having a \emph{perfect} quantum strategy (a strategy that allows the players to win with probability one), but no perfect classical strategy.
 
\section{Self-testing}\label{sect:selftest}

In this section, we formally introduce the notion of self-testing. The idea of self-testing is the following. A game self-tests an optimal quantum strategy $\tilde{S}$ if the strategy is unique, in the sense that every other optimal strategy $S$ is related the strategy $\tilde{S}$ by a local isometry. As in \cite{MPS}, we will say that $\tilde{S}$ is a local dilation of $S$.

\begin{definition}
Let $\tilde{S}=(\tilde{\ket{\psi}} \in \tilde{\mathcal{H}}_A\otimes \tilde{\mathcal{H}}_B, \{\tilde{E}_{xa}\}_x, \{\tilde{F}_{yb}\}_y)$, $S=(\ket{\psi} \in \mathcal{H}_A\otimes \mathcal{H}_B, \{E_{xa}\}_x,\allowbreak \{F_{yb}\}_y)$ be quantum strategies. We say that $\tilde{S}$ is a \emph{local dilation} of $S$ if there exist Hilbert spaces $\mathcal{H}_{A,aux}$ and $\mathcal{H}_{B,aux}$, a state $\ket{aux}\in \mathcal{H}_{A,aux}\otimes\mathcal{H}_{B,aux}$ and isometries $U_A: \mathcal{H}_A \to \tilde{\mathcal{H}}_A \otimes \mathcal{H}_{A,aux}$, $U_B:\mathcal{H}_B \to \tilde{\mathcal{H}}_B \otimes \mathcal{H}_{B,aux}$ such that with $U:=U_A \otimes U_B$ it holds
\begin{align}
    U\ket{\psi}&= \ket{\tilde{\psi}}\otimes \ket{aux},\label{selfteststate}\\
    U(E_{xa}\otimes 1)\ket{\psi}&=[ (\tilde{E}_{xa}\otimes 1)\ket{\tilde{\psi}}]\otimes \ket{aux},\label{selftestmA}\\
    U(1 \otimes F_{yb})\ket{\psi}&=[(1 \otimes \tilde{F}_{yb})\ket{\tilde{\psi}}]\otimes \ket{aux}\label{selftestmB}.
\end{align}
\end{definition}

Note that if $\tilde{S}$ is a local dilation of $S$, then they induce the same correlation and thus the same winning probability for a game $\mathcal{G}$. 

\begin{remark}
There is a slight abuse of notation in the previous Definition. The lefthand side of equations \eqref{selfteststate}, \eqref{selftestmA} and \eqref{selftestmB} is an element of $\tilde{\mathcal{H}}_A \otimes \mathcal{H}_{A,aux} \otimes \tilde{\mathcal{H}}_B \otimes \mathcal{H}_{B,aux}$, whereas the righthand side is an element of  $\tilde{\mathcal{H}}_A\otimes \tilde{\mathcal{H}}_B  \otimes\mathcal{H}_{A,aux}  \otimes \mathcal{H}_{B,aux}$. We identify those Hilbert spaces via the unitary that flips the second and the third tensor factors.
\end{remark}

\begin{remark}
The equations \eqref{selftestmA} and \eqref{selftestmB} are equivalent to     
\begin{align*}
    U(E_{xa}\otimes F_{yb})\ket{\psi}&=[ (\tilde{E}_{xa}\otimes \tilde{F}_{yb})\ket{\tilde{\psi}}]\otimes \ket{aux}.
\end{align*}
\end{remark}

\begin{definition}
Let $\mathcal{G}$ be a nonlocal game and $\tilde{S}=(\tilde{\ket{\psi}} \in \tilde{\mathcal{H}}_A\otimes \tilde{\mathcal{H}}_B, \{\tilde{E}_{xa}\}_x, \{\tilde{F}_{yb}\}_y)$ be an optimal quantum strategy. We say that $\mathcal{G}$ \emph{self-tests} the strategy $\tilde{S}$ if $\tilde{S}$ is a local dilation of any optimal quantum strategy $S=(\ket{\psi} \in \mathcal{H}_A\otimes \mathcal{H}_B, \{E_{xa}\}_x, \{F_{yb}\}_y)$.
If only equation \eqref{selfteststate} holds for all optimal quantum strategies $S$, we say that $\mathcal{G}$ \emph{self-tests the state} $\ket{\tilde{\psi}}$.
\end{definition}

\begin{remark}
Some authors only impose that $\tilde{S}$ is a local dilation of any projective strategy $S$. This is a priori a weaker notion of self-testing. We will see, however, that in some cases (e.g. the magic square game) the PVM self-test can be used to prove a POVM self-test for the synchronous version of the game.    
\end{remark}

We will now state some results of \cite{MPS} on local dilations. Those will mostly help to reduce proving self-testing for general quantum strategies to quantum strategies having states with full Schmidt rank. 

\begin{lemma}\cite[Lemma 4.8]{MPS}\label{lem:invariance}
Let $X\in B(\mathcal{H}_A)$, $Y\in B(\mathcal{H}_B)$ and $\ket{\psi}\in \mathcal{H}_A\otimes \mathcal{H}_B$ such that $(X\otimes 1)\ket{\psi}=(1 \otimes Y)\ket{\psi}$. Then $\mathrm{supp}_A(\ket{\psi})$ is invariant under $X$ and $\mathrm{supp}_B(\ket{\psi})$ is invariant under $Y$.
\end{lemma}

The next lemma and its corollary split \cite[Lemma 4.9]{MPS} into two parts, making the lemma slightly more general. We do this since we need to use the lemma in its more general form later on. 

\begin{lemma}\label{lem:DilationfullSchmidtrank}
Let $\mathcal{G}$ be a nonlocal game. Let $S=(\ket{\psi}\in \C^{d_A}\otimes \C^{d_B},\{E_{xa}\}_{x},\{F_{yb}\}_{y})$ be a quantum strategy such that $\mathrm{supp}_A(\ket{\psi})$ and $\mathrm{supp}_B(\ket{\psi})$ are invariant under each $E_{xa}$ and $F_{yb}$, respectively. Then there exists a quantum strategy $S'=(\ket{\psi'},\{E_{xa}'\}_{x},\{F_{yb}'\}_{y})$ such that $\ket{\psi'}$ has full Schmidt rank and $S'$ is a local dilation of $S$. 
\end{lemma}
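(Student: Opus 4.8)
The plan is to construct $S'$ by restricting the strategy $S$ to the supports of $\ket{\psi}$ on each side. Write $\mathcal{K}_A = \mathrm{supp}_A(\ket{\psi}) \subseteq \C^{d_A}$ and $\mathcal{K}_B = \mathrm{supp}_B(\ket{\psi}) \subseteq \C^{d_B}$, and let $m = \dim \mathcal{K}_A = \dim \mathcal{K}_B$ (these agree since they equal the Schmidt rank of $\ket{\psi}$). By hypothesis each $E_{xa}$ maps $\mathcal{K}_A$ into itself and each $F_{yb}$ maps $\mathcal{K}_B$ into itself, so we may define $E_{xa}' := E_{xa}|_{\mathcal{K}_A} \in B(\mathcal{K}_A)$ and $F_{yb}' := F_{yb}|_{\mathcal{K}_B} \in B(\mathcal{K}_B)$. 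First I would check that $\{E_{xa}'\}_a$ is still a POVM on $\mathcal{K}_A$ for each $x$: positivity is inherited by restriction to an invariant subspace (using that the $E_{xa}$ are positive, hence self-adjoint, hence $\mathcal{K}_A$ is also invariant under $E_{xa}^* = E_{xa}$ so the restriction is a well-defined self-adjoint positive operator), and the completeness relation $\sum_a E_{xa} = 1$ restricts to $\sum_a E_{xa}' = 1_{\mathcal{K}_A}$. Same for Bob. Setting $\ket{\psi'} := \ket{\psi}$ viewed as an element of $\mathcal{K}_A \otimes \mathcal{K}_B$, it has full Schmidt rank $m$ by construction, and $S' = (\ket{\psi'}, \{E_{xa}'\}_x, \{F_{yb}'\}_y)$ is a genuine quantum strategy.

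Next I would exhibit the local dilation data witnessing that $S'$ is a local dilation of $S$. Take $\mathcal{H}_{A,aux} = \mathcal{H}_{B,aux} = \C$ with $\ket{aux} = 1$ the trivial one-dimensional auxiliary state, and let $U_A : \C^{d_A} \to \mathcal{K}_A \otimes \C \cong \mathcal{K}_A$ be any isometry extending the identity on $\mathcal{K}_A$ — concretely, pick an orthonormal basis of $\mathcal{K}_A$ and extend it to one of $\C^{d_A}$, then let $U_A$ be the inverse of that inclusion on $\mathcal{K}_A$ and send the orthogonal complement anywhere consistent with isometry; since $\dim \mathcal{K}_A = m \le d_A$, such an isometry $U_A : \C^{d_A} \to \C^m$ exists precisely when... wait, an isometry must be injective, so I actually need $U_A$ to go the other way. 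The correct move: since a local dilation only needs $U_A$ to be an isometry $\mathcal{H}_A \to \tilde{\mathcal{H}}_A \otimes \mathcal{H}_{A,aux}$, and here $\mathcal{H}_A = \C^{d_A}$ while $\tilde{\mathcal{H}}_A = \mathcal{K}_A \cong \C^m$ with $m \le d_A$, I instead enlarge the auxiliary space: set $\mathcal{H}_{A,aux} = \C^{d_A}$ and define $U_A$ by $U_A = P \oplus \iota$ in a suitable sense, or more cleanly, let $U_A : \C^{d_A} \to \mathcal{K}_A \otimes \C^{d_A}$ send $\ket{\xi} \mapsto \sum_i (P_{\mathcal{K}_A}\ket{\xi} \text{ component}) \otimes \ket{i}$ — the standard trick is $U_A \ket{\xi} = \sum_{i=1}^{m} (\bra{e_i}\xi\rangle) \ket{e_i} \otimes \ket{f(\xi)}$... rather than belabor this, the clean formulation is: the inclusion $\mathcal{K}_A \hookrightarrow \C^{d_A}$ has an isometric "inverse with junk", namely any isometry $W_A : \C^{d_A} \to \mathcal{K}_A \otimes \mathcal{H}_{A,aux}$ such that $W_A \ket{\eta} = \ket{\eta} \otimes \ket{\phi_A}$ for all $\ket{\eta} \in \mathcal{K}_A$ and a fixed unit vector $\ket{\phi_A}$; such $W_A$ exists whenever $\dim(\mathcal{K}_A \otimes \mathcal{H}_{A,aux}) \ge d_A$, so take $\mathcal{H}_{A,aux}$ large enough (e.g.\ dimension $d_A$). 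Set $\ket{aux} = \ket{\phi_A} \otimes \ket{\phi_B}$.

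With this choice, since $\ket{\psi} \in \mathcal{K}_A \otimes \mathcal{K}_B$ we get $U\ket{\psi} = (W_A \otimes W_B)\ket{\psi} = \ket{\psi} \otimes \ket{aux} = \ket{\psi'} \otimes \ket{aux}$, verifying \eqref{selfteststate}. For \eqref{selftestmA}: $(E_{xa} \otimes 1)\ket{\psi}$ again lies in $\mathcal{K}_A \otimes \mathcal{K}_B$ because $\mathcal{K}_A$ is $E_{xa}$-invariant, and its $A$-part equals $E_{xa}'$ applied within $\mathcal{K}_A$; hence $U(E_{xa}\otimes 1)\ket{\psi} = (E_{xa} \otimes 1)\ket{\psi} \otimes \ket{aux} = (E_{xa}' \otimes 1)\ket{\psi'} \otimes \ket{aux}$, which is \eqref{selftestmA}. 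Equation \eqref{selftestmB} is identical with Bob's operators. The only subtle point — and the place I expect to spend the most care — is the bookkeeping around defining $U_A$ as a genuine isometry while ensuring it acts as "include into the first tensor factor" on $\mathcal{K}_A$; this is a standard but fiddly maneuver, and the rest is routine verification. Note the hypothesis on invariance is exactly what makes $(E_{xa}\otimes 1)\ket{\psi}$ stay in the support, so no further argument (like invoking Lemma \ref{lem:invariance}) is needed here — that lemma is what one would use to \emph{check} the hypothesis in applications.
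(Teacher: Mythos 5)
Your proof is correct and is essentially the paper's argument written coordinate-free: the paper's isometry $V_A(v)=U_A^*(v)\otimes\xi_1+e_1\otimes(1-U_AU_A^*)(v)$, with auxiliary space $\C^{d_A}$, is exactly your $W_A$ with $\ket{\phi_A}=\xi_1$ once one identifies $\C^r$ with $\mathrm{supp}_A(\ket{\psi})$ via the Schmidt basis, and the paper's $E'_{xa}=U_A^*E_{xa}U_A$ is your restriction $E_{xa}|_{\mathcal{K}_A}$ in that basis. The only cosmetic difference is that the paper writes an explicit formula where you argue existence by a dimension count, and your dimension count $\dim(\mathcal{K}_A\otimes\mathcal{H}_{A,aux})\ge d_A$ is indeed the right condition (satisfied by taking $\dim\mathcal{H}_{A,aux}=d_A$, as the paper does).
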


\begin{proof}
We follow the proof of \cite[Lemma 4.9]{MPS}. Consider the Schmidt decomposition $\ket{\psi}=\sum_{i=1}^r \alpha_i \xi_i \otimes \eta_i$ and let $U_A:\C^r\to \C^{d_A}$, $U_B:\C^r\to \C^{d_B}$ be isometries given by
\begin{align*}
    U_A=\sum_{i=1}^r\xi_i e_i^*, \quad U_B=\sum_{i=1}^r\eta_i e_i^*.
\end{align*}
Then
\begin{align*}
    E_{xa}'=U_A^*E_{xa}U_A, \quad F_{yb}'=U_B^*F_{yb}U_B
\end{align*}
are positive operators and $\{E_{xa}'\,|\,x\in [r]\}$, $\{F_{yb}'\,|\,y\in [r]\}$ are POVMs. With $\ket{\psi'}=(U_A^*\otimes U_B^*)\ket{\psi}$, we get a quantum strategy $S'$, where $\ket{\psi'}$ has full Schmidt rank. 

We will now show that $S'$ is a local dilation of $S$. Set $H_{A,aux}=\C^{d_A}$, $H_{B,aux}=\C^{d_B}$ and define isometries $V_A:\C^{d_A}\to \C^r \otimes \C^{d_A}$, $V_B:\C^{d_B}\to \C^r \otimes \C^{d_B}$ by
\begin{align*}
    V_A(v)&=U_A^*(v)\otimes \xi_1 + e_1\otimes (1_{d_A}-U_AU_A^*)(v),\\
    V_B(w)&=U_B^*(w)\otimes \eta_1 + e_1\otimes (1_{d_B}-U_BU_B^*)(w),
\end{align*}
where $v\in \C^{d_A}$, $w\in \C^{d_B}$. Since $\mathrm{supp}_A(\ket{\psi})$ and $\mathrm{supp}_B(\ket{\psi})$ are invariant under each $E_{xa}$ and $F_{yb}$, respectively, and $U_AU_A^*$ and $U_BU_B^*$ are the projections onto $\mathrm{supp}_A(\ket{\psi})$ and $\mathrm{supp}_B(\ket{\psi})$, respectively, we have
\begin{align*}
    (V_A\otimes V_B)(E_{xa}\otimes F_{yb})\ket{\psi}=(U_A^*\otimes U_B^*)((E_{xa}\otimes F_{yb})\ket{\psi})\otimes (\xi_1 \otimes \eta_1).
\end{align*}
Furthermore, it holds
\begin{align*}
    (U_A^*\otimes U_B^*)(E_{xa}\otimes F_{yb})\ket{\psi}
    &=(U_A^*\otimes U_B^*)(E_{xa}\otimes F_{yb})(U_AU_A^*\otimes U_BU_B^*)\ket{\psi}\\
    &=(E_{xa}'\otimes F_{yb}')\ket{\psi'},
\end{align*}
since $U_AU_A^*$ and $U_BU_B^*$ are the projections onto $\mathrm{supp}_A(\ket{\psi})$ and $\mathrm{supp}_B(\ket{\psi})$, respectively. This concludes the proof. 
\end{proof}

\begin{cor}\label{fullSchmidtsync}
Let $\mathcal{G}$ be a synchronous game and let $S=(\ket{\psi},\{E_{xa}\}_{x},\{F_{yb}\}_{y})$ a perfect quantum strategy of $\mathcal{G}$. Then there exists a perfect quantum strategy $S'=(\ket{\psi'},\{E_{xa}'\}_{x},\allowbreak\{F_{yb}'\}_{y})$ of $\mathcal{G}$ such that $\ket{\psi'}$ has full Schmidt rank and $S'$ is a local dilation of $S$. 
\end{cor}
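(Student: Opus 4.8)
The plan is to reduce the corollary to Lemma~\ref{lem:DilationfullSchmidtrank}. That lemma already produces a full-Schmidt-rank strategy $S'$ that is a local dilation of $S$ as soon as one knows that $\mathrm{supp}_A(\ket{\psi})$ and $\mathrm{supp}_B(\ket{\psi})$ are invariant under all the operators $E_{xa}$ and $F_{yb}$; moreover, since a local dilation induces the same correlation as the original strategy (the observation immediately following the definition of local dilation) and $S$ is perfect, $S'$ is automatically perfect as well. So the whole task is to verify the support-invariance hypothesis for a perfect strategy of a synchronous game.

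For this I would first establish the ``synchronicity relation'': for every common input $x \in I_A = I_B$ and every output $a$, one has $(E_{xa}\otimes 1)\ket{\psi} = (1 \otimes F_{xa})\ket{\psi}$. Granting this, Lemma~\ref{lem:invariance} applied with $X = E_{xa}$ and $Y = F_{xa}$ immediately gives that $\mathrm{supp}_A(\ket{\psi})$ is invariant under $E_{xa}$ and $\mathrm{supp}_B(\ket{\psi})$ under $F_{xa}$; as $x$ ranges over $I_A = I_B$ and $a$ over all outputs, this covers every measurement operator of both players, and Lemma~\ref{lem:DilationfullSchmidtrank} finishes the argument.

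To prove the synchronicity relation I would fix $x$ and examine the single nonnegative number $\sum_a \| (E_{xa}\otimes 1)\ket{\psi} - (1\otimes F_{xa})\ket{\psi} \|^2$. Expanding each squared norm, the two ``diagonal'' contributions are $\bra{\psi}(\sum_a E_{xa}^2 \otimes 1)\ket{\psi}$ and $\bra{\psi}(1 \otimes \sum_a F_{xa}^2)\ket{\psi}$, and each is at most $1$: since $0 \le E_{xa} \le 1$ we have $E_{xa}^2 \le E_{xa}$, hence $\sum_a E_{xa}^2 \le \sum_a E_{xa} = 1$, and likewise for the $F_{xa}$. The cross term is $-2\sum_a \bra{\psi}(E_{xa}\otimes F_{xa})\ket{\psi}$, and $\sum_a \bra{\psi}(E_{xa}\otimes F_{xa})\ket{\psi}$ is precisely the probability that the two players answer identically on the input pair $(x,x)$; because the strategy is perfect and, by synchronicity, any pair of distinct answers on $(x,x)$ is losing, this probability equals $1$. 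Hence the whole expression is at most $1+1-2=0$, so every summand vanishes and the relation follows.

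The only point requiring a word of care is that the last step uses $\pi(x,x)>0$ for all $x$ (so that ``wins with probability one'' genuinely constrains the behaviour on every diagonal input); this is automatic in our setting, where $\pi$ is uniform, and is in any case the standard convention for synchronous games. I do not anticipate a serious obstacle: the one slightly delicate feature---that the $E_{xa}$ and $F_{yb}$ are general POVM elements rather than projections---is handled precisely by the inequality $E_{xa}^2 \le E_{xa}$ in the computation above, after which Lemmas~\ref{lem:invariance} and~\ref{lem:DilationfullSchmidtrank} do all the remaining work.
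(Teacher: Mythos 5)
Your proof is correct and follows the same route as the paper: reduce to Lemma~\ref{lem:DilationfullSchmidtrank} by establishing the support-invariance hypothesis, which in turn follows from the synchronicity relation $(E_{xa}\otimes 1)\ket{\psi}=(1\otimes F_{xa})\ket{\psi}$ together with Lemma~\ref{lem:invariance}. The only difference is that the paper obtains the synchronicity relation by citing \cite[Corollary~3.6(a)]{MPS}, whereas you reprove it inline via the standard norm-expansion argument (which is in fact the same calculation the paper itself redoes later in the proof of Lemma~\ref{samefordifferentindex}(i)); your note that $\pi(x,x)>0$ is needed is a correct and worthwhile observation.
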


\begin{proof}
By \cite[Corollary 3.6 (a)]{MPS}, we know $(E_{xa}\otimes 1)\ket{\psi}=(1\otimes F_{xa})\ket{\psi}$. Lemma \ref{lem:invariance} yields that $\mathrm{supp}_A(\ket{\psi})$ and $\mathrm{supp}_B(\ket{\psi})$ are invariant under each $E_{xa}$ and $F_{yb}$, respectively. We obtain the result by Lemma \ref{lem:DilationfullSchmidtrank}. 
\end{proof}

We also have a transitivity result for local dilations. One can use it for proving self-testing results in the follwing way. First show that a general optimal quantum strategy dilates to a quantum strategy having a state of full Schmidt rank. Then check if the reference strategy is a local dilation of any optimal strategy with a state of full Schmidt rank. 

\begin{lemma}\cite[Lemma 4.7]{MPS}\label{transitivity}
Let $S_1$, $S_2$ and $S_3$ be quantum strategies. If $S_1$ is a local dilation of $S_2$ and $S_2$ is a local dilation of $S_3$, then $S_1$ is a local dilation of $S_3$.
\end{lemma}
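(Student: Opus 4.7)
The natural plan is to simply compose the two local dilations and take the tensor product of their auxiliary states. Concretely, write $S_i=(\ket{\psi_i}\in\mathcal{H}_A^i\otimes\mathcal{H}_B^i,\{E_{xa}^i\},\{F_{yb}^i\})$ for $i\in\{1,2,3\}$. From $S_1$ being a local dilation of $S_2$ I get auxiliary spaces $\mathcal{H}_{A,\mathrm{aux}}^{12},\mathcal{H}_{B,\mathrm{aux}}^{12}$, an auxiliary state $\ket{\mathrm{aux}_{12}}$, and isometries $U_A^{12},U_B^{12}$; from $S_2$ dilating $S_3$ I similarly get $\mathcal{H}_{A,\mathrm{aux}}^{23},\mathcal{H}_{B,\mathrm{aux}}^{23}$, $\ket{\mathrm{aux}_{23}}$, and isometries $V_A^{23},V_B^{23}$ (mapping $\mathcal{H}_A^3\to\mathcal{H}_A^2\otimes\mathcal{H}_{A,\mathrm{aux}}^{23}$ and analogously for $B$).

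The candidate dilation from $S_3$ to $S_1$ uses the auxiliary Hilbert spaces $\mathcal{H}_{A,\mathrm{aux}}:=\mathcal{H}_{A,\mathrm{aux}}^{12}\otimes\mathcal{H}_{A,\mathrm{aux}}^{23}$ and $\mathcal{H}_{B,\mathrm{aux}}:=\mathcal{H}_{B,\mathrm{aux}}^{12}\otimes\mathcal{H}_{B,\mathrm{aux}}^{23}$, with auxiliary state $\ket{\mathrm{aux}}:=\ket{\mathrm{aux}_{12}}\otimes\ket{\mathrm{aux}_{23}}$ and composed isometries
\begin{align*}
W_A := (U_A^{12}\otimes 1_{\mathcal{H}_{A,\mathrm{aux}}^{23}})\circ V_A^{23}, \qquad W_B := (U_B^{12}\otimes 1_{\mathcal{H}_{B,\mathrm{aux}}^{23}})\circ V_B^{23}.
\end{align*}
These are isometries as compositions of isometries. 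Setting $W:=W_A\otimes W_B$, I then verify the three dilation equations by substituting twice: first I apply the $V^{23}$-dilation relations to push $E_{xa}^3$ (respectively $F_{yb}^3$, or simply $\ket{\psi_3}$) into an expression involving $E_{xa}^2$ on $\ket{\psi_2}\otimes\ket{\mathrm{aux}_{23}}$, and then I apply the $U^{12}$-dilation relations to the $\ket{\psi_2}$ factor to get an expression involving $E_{xa}^1$ on $\ket{\psi_1}\otimes\ket{\mathrm{aux}_{12}}$. For example,
\begin{align*}
W(E_{xa}^3\otimes 1)\ket{\psi_3} &= (U^{12}\otimes 1_{\mathrm{aux}^{23}})\,V^{23}(E_{xa}^3\otimes 1)\ket{\psi_3} \\
&= (U^{12}\otimes 1_{\mathrm{aux}^{23}})\bigl[(E_{xa}^2\otimes 1)\ket{\psi_2}\otimes\ket{\mathrm{aux}_{23}}\bigr] \\
&= (E_{xa}^1\otimes 1)\ket{\psi_1}\otimes\ket{\mathrm{aux}_{12}}\otimes\ket{\mathrm{aux}_{23}},
\end{align*}
and the analogous computations handle $\ket{\psi_3}$ itself and the $F_{yb}^3$-equation.

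The only real issue is bookkeeping of tensor factors: strictly speaking $W$ lands in $\tilde{\mathcal{H}}_A^1\otimes\mathcal{H}_{A,\mathrm{aux}}^{12}\otimes\mathcal{H}_{A,\mathrm{aux}}^{23}\otimes\tilde{\mathcal{H}}_B^1\otimes\mathcal{H}_{B,\mathrm{aux}}^{12}\otimes\mathcal{H}_{B,\mathrm{aux}}^{23}$, whereas the dilation definition wants the output in $\tilde{\mathcal{H}}_A^1\otimes\tilde{\mathcal{H}}_B^1\otimes\mathcal{H}_{A,\mathrm{aux}}\otimes\mathcal{H}_{B,\mathrm{aux}}$. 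This is handled by the same tensor-flip identification flagged in the remark after the definition of local dilation, so I would just note that the rearrangement is implicit. Beyond this identification the argument is routine, so I expect no genuine obstacle.
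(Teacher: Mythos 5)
The paper cites this lemma from \cite{MPS} without reproducing its proof, so there is no in-paper argument to compare against; your proof is the natural one and it is correct. Composing the isometries $W_A = (U_A^{12}\otimes 1_{\mathcal{H}_{A,\mathrm{aux}}^{23}})\circ V_A^{23}$ (and similarly for $B$), taking $\ket{\mathrm{aux}}=\ket{\mathrm{aux}_{12}}\otimes\ket{\mathrm{aux}_{23}}$, and chaining the two sets of dilation relations verifies all three conditions, and the only subtlety---that the output lands in a permuted tensor ordering---is exactly the identification already sanctioned by the Remark following the definition of local dilation, which you correctly invoke.
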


Finally, we give the definition of robust self-testing. Roughly speaking, a game robustly self-tests a quantum strategy $\tilde{S}$ if it self-tests the strategy and additionally, for every almost optimal strategy $S$, a local dilation of $S$ is close to $\tilde{S}$. For $\delta>0$, we say that a quantum strategy is \emph{$\delta$-optimal} for a game $\mathcal{G}$ if the winning probability is greater or equal to $\omega^*(\mathcal{G})-\delta$, where $\omega^*(\mathcal{G})$ is the quantum value of $\mathcal{G}$.

\begin{definition}
Let $\mathcal{G}$ be a nonlocal game, $\tilde{S}=(\tilde{\ket{\psi}} \in \tilde{\mathcal{H}}_A\otimes \tilde{\mathcal{H}}_B, \{\tilde{E}_{xa}\}_x, \{\tilde{F}_{yb}\}_y)$ an optimal quantum strategy. Then $\mathcal{G}$ is a \emph{robust self-test} for $\tilde{S}$ if $\mathcal{G}$ is a self-test for $\tilde{S}$ and for any $\varepsilon>0$, there exists a $\delta>0$ such that for any $\delta$-optimal strategy $S=(\ket{\psi} \in \mathcal{H}_A\otimes \mathcal{H}_B, \{E_{xa}\}_x, \{B_{yb}\}_y)$, there exists Hilbert spaces $\mathcal{H}_{A,aux}$ and $\mathcal{H}_{B,aux}$, a state $\ket{aux}\in \mathcal{H}_{A,aux}\otimes\mathcal{H}_{B,aux}$ and isometries $U_A: \mathcal{H}_A \to \tilde{\mathcal{H}}_A \otimes \mathcal{H}_{A,aux}$, $U_B:\mathcal{H}_B \to \tilde{\mathcal{H}}_B \otimes \mathcal{H}_{B,aux}$ such that with $U:=U_A \otimes U_B$ it holds
\begin{align*}
\|U\ket{\psi}- \tilde{\ket{\psi}}\otimes \ket{aux}\|&\leq \varepsilon,\\
\|U(E_{xa}\otimes 1)\ket{\psi}- [(\tilde{E}_{xa}\otimes 1)\tilde{\ket{\psi}}]\otimes \ket{aux}\|&\leq \varepsilon,\nonumber\\
\|U(1 \otimes \tilde{F}_{yb})\ket{\psi}- [(1\otimes \tilde{F}_{yb})\tilde{\ket{\psi}}]\otimes \ket{aux}\|&\leq \varepsilon.
\end{align*}
\end{definition}

\subsection{The synchronous magic square game self-tests a quantum strategy}\label{syncmagicsquare}

In this subsection, we show that the synchronous magic square game self-tests a quantum strategy $\tilde{S}$ with maximally entangled state $\ket{\psi_4}$. In \cite{WBMS} it was shown that the magic square game self-tests a strategy $S$. Note that their general strategies consist of projective measurements, the case of strategies with POVMs is not considered. We will deduce a POVM self-testing result for the synchronous magic square game from the PVM self-test of the magic square. 

We will first describe the magic square game of \cite[Section 5]{BBT} and then its synchronous version. Note that there are also other versions of the magic square game. We choose the one with minimal input and output sets. 
Consider the following set of equations
\begin{align*}
    &x_1x_2x_3=1 \quad(r_1), &&x_1x_4x_7=-1\quad(c_1),\\
    &x_4x_5x_6=1\quad(r_2), &&x_2x_5x_8=-1\quad(c_2),\\
    &x_7x_8x_9=1\quad(r_3), &&x_3x_6x_9=-1\quad(c_3).
\end{align*}
In the magic square game, the referee sends one of the three equations on the left hand side to Alice and one of the three equations on the right hand side to Bob. Alice answers with a $\{-1,1\}$-assignment of the variables such that their product is $1$ and Bob answers with a $\{-1,1\}$-assignment of the variables such that their product is $-1$. The players win the game, if their assignments coincide in the common variable of the equations they received. More formally, we have
\begin{align*}
    &I_A=R:=\{r_1,r_2,r_3\},\\
    &I_B=C:=\{c_1,c_2,c_3\},\\
    &O_A=O_1:=\{(1,1,1),(1,-1,-1), (-1,1,-1), (-1,-1,1)\},\\
    &O_B=O_{-1}:=\{(-1,-1,-1),(-1,1,1), (1,-1,1), (1,1,-1)\},\\
    &V_{MS}(r_i, c_j, a, b)=\begin{cases}1 \text{ if }a_j=b_i, \\0 \text{ otherwise.}\end{cases}
\end{align*}
Here $a=(a_1,a_2,a_3), b=(b_1,b_2,b_3)$. Note that it is shown in \cite[Section 5]{BBT} that this game is a pseudo-telepathy game.

The synchronous version of the magic square game is played as follows. The referee sends one of the six equations above to Alice and one of them to Bob. They answer with $\{-1,1\}$-assignments such that the equations are fulfilled. Alice and Bob win the game if their assignments coincide on the common variables of the equations they received. Here, we have
\begin{align*}
    &I_A=I_B=R\cup C, \quad O_A=O_B=O_1\cup O_{-1},\\
    &V(x, y, a, b)=\begin{cases}
    V_{MS}(x, y, a, b)\text{ if }x\in R, y\in C, a\in O_1, b\in O_{-1},\\
    V_{MS}(y, x, b, a)\text{ if }x\in C, y\in R, a\in O_{-1}, b\in O_1,\\
    1 \text{ if }(x,y\in R, x\neq y, a,b \in O_1) \text{ or }(x,y\in C, x\neq y, a,b \in O_{-1}),\\
    \delta_{ab}\text{ if }x=y\in R, a,b \in O_1 \text{ or } x=y\in C, a,b \in O_{-1},\\
    0 \text{ otherwise.}\end{cases}
\end{align*}

The following self-testing statement for the magic square game was shown in \cite{WBMS}. We will use this theorem to deduce a self-testing statement for the synchronous magic square game. Recall the Pauli matrices
\begin{align*}
\sigma_X=\begin{pmatrix}0&1\\1&0\end{pmatrix},\,  \sigma_Y=\begin{pmatrix}0&-i\\i&0\end{pmatrix} \text{ and } \sigma_Z=\begin{pmatrix}1&0\\0&-1\end{pmatrix}.
\end{align*}

\begin{theorem}[\cite{WBMS}]\label{thm:projselftestms}
Restricting to projective measurements, the magic square game self-tests a perfect quantum strategy $S=(\ket{\psi_4}, \{E_{xa}\}_x,\{F_{yb}\}_y)$. Here $\ket{\psi_4}=\frac{1}{2}\sum_{i=1}^4 e_i \otimes e_i$ and $E_{xa}=\frac{1}{8}(1+a_1X_1)(1+a_2X_2)(1+a_3X_3)$, $F_{yb}=\frac{1}{8}(1+b_1Y_1^{\tp})(1+b_2Y_2^{\tp})(1+b_3Y_3^{\tp})$, where $X_i$ and $Y_i$ are the i-th entries in row $x$ and column $y$ of
\begin{center}\begin{tikzpicture}[scale=0.7]
\draw (-4.5,1.5)--(4.5,1.5);\draw (-4.5,0.5)--(4.5,0.5);\draw (-4.5,-0.5)--(4.5,-0.5);\draw (-4.5,-1.5)--(4.5,-1.5);
\draw (-4.5,1.5)--(-4.5,-1.5);\draw (4.5,1.5)--(4.5,-1.5);\draw (1.5,1.5)--(1.5,-1.5);\draw (-1.5,1.5)--(-1.5,-1.5);
 \draw (-3,1) node{$I\otimes \sigma_Z$}; \draw (0,1) node{$\sigma_Z\otimes I$};\draw (3,1) node{$\sigma_Z\otimes \sigma_Z$};
 \draw (-3,0) node{$\sigma_X\otimes I$}; \draw (0,0) node{$I\otimes \sigma_X$};\draw (3,0) node{$\sigma_X\otimes \sigma_X$};
 \draw (-3,-1) node{$-\sigma_X\otimes \sigma_Z$}; \draw (0,-1) node{$-\sigma_Z\otimes \sigma_X$};\draw (3,-1) node{$\sigma_Y\otimes \sigma_Y$};
\end{tikzpicture}\end{center}
\end{theorem}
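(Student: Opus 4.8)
The plan is to establish this by the standard self-testing recipe---extract observables from the optimal strategy, derive the magic-square relations, and appeal to uniqueness of the relevant representation---although in the paper it is cleanest simply to quote \cite{WBMS}. Fix a perfect projective strategy $S=(\ket{\psi}\in\mathcal{H}_A\otimes\mathcal{H}_B,\{E_{xa}\}_x,\{F_{yb}\}_y)$ for the magic square game. Since a winning answer of Alice for row $r_i$ is a triple $(a_1,a_2,a_3)$ with $a_1a_2a_3=1$, and a winning answer of Bob for column $c_j$ is a triple with $b_1b_2b_3=-1$, I would first form the $\pm1$-observables $\hat{A}_{(i,j)}:=\sum_a a_j\,E_{r_i,a}$ and $\hat{B}_{(i,j)}:=\sum_b b_i\,F_{c_j,b}$, one for each of the nine cells $(i,j)$. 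As the $\{E_{r_i,a}\}_a$ form a PVM, the three operators $\hat{A}_{(i,1)},\hat{A}_{(i,2)},\hat{A}_{(i,3)}$ in row $i$ pairwise commute and multiply to $1$; likewise Bob's three operators in column $j$ pairwise commute and multiply to $-1$. Since the referee accepts $(r_i,c_j,a,b)$ exactly when $a_j=b_i$ and $S$ is perfect, one also gets the consistency relations $(\hat{A}_{(i,j)}\otimes 1)\ket{\psi}=(1\otimes\hat{B}_{(i,j)})\ket{\psi}$ for all $i,j$.

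Next I would transport Bob's relations onto Alice's side on the support of $\ket{\psi}$. By Lemma~\ref{lem:invariance}, $\mathcal{K}_A:=\mathrm{supp}_A(\ket{\psi})$ is invariant under each $\hat{A}_{(i,j)}$, hence under every polynomial in them; and if $(M\otimes 1)\ket{\psi}=0$ then $M$ vanishes on $\mathcal{K}_A$. Moving operators across the tensor product one factor at a time via the consistency relations---e.g.\ $(\hat{A}_{(1,j)}\hat{A}_{(2,j)}\hat{A}_{(3,j)}\otimes 1)\ket{\psi}=(1\otimes\hat{B}_{(1,j)}\hat{B}_{(3,j)}\hat{B}_{(2,j)})\ket{\psi}=-\ket{\psi}$, and $((\hat{A}_{(i,j)}\hat{A}_{(i',j)}-\hat{A}_{(i',j)}\hat{A}_{(i,j)})\otimes 1)\ket{\psi}=0$ using that Bob's column operators commute---I conclude that on $\mathcal{K}_A$ the nine self-adjoint unitaries $\hat{A}_{(i,j)}$ pairwise commute within each row and within each column, with row-products $+1$ and column-products $-1$: exactly the Mermin--Peres magic-square relations. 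The analogous statement holds for Bob on $\mathcal{K}_B:=\mathrm{supp}_B(\ket{\psi})$.

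I would then invoke the representation-theoretic heart of the argument: these relations (equivalently, the defining relations of the two-qubit Pauli group, generated say by the cells $(1,1),(1,2),(2,1),(2,2)$, together with the sign constraints above) admit, up to unitary equivalence, a unique irreducible $*$-representation, namely the $4$-dimensional one given by the table. Hence $\mathcal{K}_A\cong\C^4\otimes\C^{n_A}$ with $\hat{A}_{(i,j)}$ acting as $X_{(i,j)}\otimes 1$, and $\mathcal{K}_B\cong\C^4\otimes\C^{n_B}$ with $\hat{B}_{(i,j)}$ acting as $Y_{(i,j)}^{\tp}\otimes 1$ (the transpose is forced by the table). Substituting these forms into $(\hat{A}_{(i,j)}\otimes 1)\ket{\psi}=(1\otimes\hat{B}_{(i,j)})\ket{\psi}$ and using that the two-qubit Paulis act irreducibly on $\C^4$, a standard argument (as in CHSH/Mermin rigidity) shows that under these identifications $\ket{\psi}=\ket{\psi_4}\otimes\ket{aux}$ for some $\ket{aux}\in\C^{n_A}\otimes\C^{n_B}$. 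Composing the reduction maps $\mathcal{H}_A\to\mathcal{K}_A\cong\C^4\otimes\C^{n_A}$ (and, if one prefers to first assume full Schmidt rank, Lemma~\ref{lem:DilationfullSchmidtrank}/Corollary~\ref{fullSchmidtsync}) gives local isometries $U_A,U_B$ with $U\ket{\psi}=\ket{\psi_4}\otimes\ket{aux}$; and since the outcomes of Alice's row-$r_i$ measurement are labelled injectively by $(a_1,a_2,a_3)$, we recover $E_{r_i,a}=\tfrac18(1+a_1\hat{A}_{(i,1)})(1+a_2\hat{A}_{(i,2)})(1+a_3\hat{A}_{(i,3)})$ as a joint spectral projection, so $U_A,U_B$ also intertwine $\{E_{xa}\},\{F_{yb}\}$ with the reference measurements, yielding \eqref{selftestmA}--\eqref{selftestmB}.

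The main obstacle is the uniqueness-of-representation step, together with the bookkeeping needed to glue the two one-sided normal forms into a single bipartite isometry: one must check that every magic-square relation that a priori only holds on the vector $\ket{\psi}$ genuinely promotes to an operator identity on $\mathcal{K}_A$ (resp.\ $\mathcal{K}_B$)---which is where the invariance Lemma~\ref{lem:invariance} and the support argument are used---and that the multiplicity spaces $\C^{n_A}$, $\C^{n_B}$ are matched correctly by the consistency relations, which is exactly what pins down $\ket{aux}$. All of this is carried out in \cite{WBMS}, so in the paper one may simply cite their result.
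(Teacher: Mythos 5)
The paper does not prove this theorem; it is quoted verbatim as a known result of \cite{WBMS}, exactly as you note at the end of your proposal. Your sketch of the underlying argument---forming the $\pm1$ observables $\hat{A}_{(i,j)}$, $\hat{B}_{(i,j)}$, transporting the magic-square relations onto the support of $\ket{\psi}$ via the consistency relations, and invoking uniqueness of the irreducible representation of the two-qubit Pauli group---is a correct outline of the standard proof in \cite{WBMS}, so your treatment matches the paper's.
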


\begin{cor}\label{cor:Selftestsyncmagicsquare}
Let $S$ be as in Theorem \ref{thm:projselftestms}. The synchronous magic square game self-tests the perfect quantum strategy $\tilde{S}=(\ket{\psi_4}, \{\tilde{E}_{xa}\}_x,\{\tilde{F}_{yb}\}_y)$, where 
\begin{align*}
    &\tilde{E}_{xa}=\begin{cases}E_{xa} \text{ if } x \in R, a \in O_1\\ F_{xa}^{\tp}\text{ if } x \in C, a \in O_{-1},\\ 0 \text{ otherwise,}\end{cases}
    &&\tilde{F}_{yb}=\begin{cases}F_{yb}\text{ if } y \in C, b \in O_{-1}\\ E_{yb}^{\tp}\text{ if } y \in R, b \in O_1,\\ 0 \text{ otherwise.}\end{cases}
\end{align*}
\end{cor}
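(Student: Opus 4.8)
The plan is to deduce the corollary from the projective self-test of the (non-synchronous) magic square game, Theorem~\ref{thm:projselftestms}, by combining synchronicity with the transpose trick for the maximally entangled state $\ket{\psi_4}$. First I would take an arbitrary optimal -- hence perfect, since the synchronous magic square game still has a perfect quantum strategy but no perfect classical strategy -- quantum strategy $S=(\ket{\psi},\{E_{xa}\}_x,\{F_{yb}\}_y)$. By \cite[Corollary 3.6]{MPS} together with Lemma~\ref{lem:invariance}, the supports $\mathrm{supp}_A(\ket{\psi})$ and $\mathrm{supp}_B(\ket{\psi})$ are invariant under all $E_{xa}$ and $F_{yb}$, these operators act as projections on their respective supports, and $(E_{xa}\otimes 1)\ket{\psi}=(1\otimes F_{xa})\ket{\psi}$ for every $x\in R\cup C$. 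Moreover, since any answer $a\in O_{-1}$ to a question $x\in R$ (and any $a\in O_1$ to $x\in C$) loses with certainty, perfectness forces $(E_{xa}\otimes 1)\ket{\psi}=0$ in those cases, and symmetrically for Bob. Applying Lemma~\ref{lem:DilationfullSchmidtrank} I pass to a local dilation $S'=(\ket{\psi'},\{E'_{xa}\}_x,\{F'_{yb}\}_y)$ with $\ket{\psi'}$ of full Schmidt rank, all $E'_{xa},F'_{yb}$ projective, $E'_{xa}=0$ unless ($x\in R,a\in O_1$) or ($x\in C,a\in O_{-1}$) and symmetrically for $F'_{yb}$, and $(E'_{xa}\otimes 1)\ket{\psi'}=(1\otimes F'_{xa})\ket{\psi'}$ still holding. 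By Lemma~\ref{transitivity} it suffices to show $\tilde S$ is a local dilation of $S'$.

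Next I form the projective strategy $\hat S=(\ket{\psi'},\{E'_{xa}\}_{x\in R},\{F'_{yb}\}_{y\in C})$ for the non-synchronous magic square game; the vanishing conditions guarantee that $\{E'_{xa}\}_{a\in O_1}$ and $\{F'_{yb}\}_{b\in O_{-1}}$ are genuine PVMs, and since $S'$ wins the synchronous game perfectly it in particular satisfies $V_{MS}$ on all pairs $(x,y)$ with $x\in R$, $y\in C$, so $\hat S$ is a perfect projective strategy. Theorem~\ref{thm:projselftestms} then supplies auxiliary Hilbert spaces, a state $\ket{aux}$, and isometries $U_A,U_B$ with $U:=U_A\otimes U_B$ such that $U\ket{\psi'}=\ket{\psi_4}\otimes\ket{aux}$, $U(E'_{xa}\otimes 1)\ket{\psi'}=[(E_{xa}\otimes 1)\ket{\psi_4}]\otimes\ket{aux}$ for $x\in R$, and $U(1\otimes F'_{yb})\ket{\psi'}=[(1\otimes F_{yb})\ket{\psi_4}]\otimes\ket{aux}$ for $y\in C$. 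I claim these same data witness that $\tilde S$ is a local dilation of $S'$.

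It then remains to verify \eqref{selfteststate}, \eqref{selftestmA} and \eqref{selftestmB} for all of $\tilde S$'s operators. The state equation is immediate. For $x\in R$ one has $\tilde E_{xa}=E_{xa}$ and the equation is exactly what the self-test gives, while for the losing cases both $E'_{xa}$ and $\tilde E_{xa}$ vanish. For $x\in C$, $a\in O_{-1}$, synchronicity gives $(E'_{xa}\otimes 1)\ket{\psi'}=(1\otimes F'_{xa})\ket{\psi'}$, so applying $U$ and the self-test equation for Bob's column operators yields $[(1\otimes F_{xa})\ket{\psi_4}]\otimes\ket{aux}$; the transpose trick $(1\otimes M)\ket{\psi_4}=(M^{\tp}\otimes 1)\ket{\psi_4}$ with $M=F_{xa}$ rewrites this as $[(\tilde E_{xa}\otimes 1)\ket{\psi_4}]\otimes\ket{aux}$, as required. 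Bob's equations are symmetric: for $y\in R$, $b\in O_1$ use $(1\otimes F'_{yb})\ket{\psi'}=(E'_{yb}\otimes 1)\ket{\psi'}$, apply $U$ and the self-test equation for Alice's row operators to obtain $[(E_{yb}\otimes 1)\ket{\psi_4}]\otimes\ket{aux}$, and apply the transpose trick in the other direction, $(E_{yb}\otimes 1)\ket{\psi_4}=(1\otimes E_{yb}^{\tp})\ket{\psi_4}=(1\otimes\tilde F_{yb})\ket{\psi_4}$. This establishes that $\tilde S$ is a local dilation of $S'$, and Lemma~\ref{transitivity} finishes the argument.

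I expect the main obstacle to be the first step: carefully justifying that after the full-Schmidt-rank reduction the measurement operators are projective and that the operators attached to losing answers annihilate the state, since this is precisely what licenses invoking the PVM self-test of Theorem~\ref{thm:projselftestms}. Everything after that is bookkeeping, once the synchronicity relation $(E'_{xa}\otimes 1)\ket{\psi'}=(1\otimes F'_{xa})\ket{\psi'}$ and the transpose trick for $\ket{\psi_4}$ are in hand; along the way one should also confirm that $\tilde S$ is itself a perfect strategy for the synchronous magic square game, which is a direct computation with the listed Pauli observables.
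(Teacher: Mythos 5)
Your proposal is correct and follows essentially the same route as the paper: reduce to a full-Schmidt-rank, projective, perfect strategy via \cite[Corollary 3.6]{MPS} and Lemmas~\ref{lem:invariance}--\ref{lem:DilationfullSchmidtrank}, show the operators attached to losing answers vanish, restrict to a projective perfect strategy for the non-synchronous magic square game, invoke Theorem~\ref{thm:projselftestms}, and extend to the remaining question–answer pairs using synchronicity and the transpose trick $(1\otimes M)\ket{\psi_4}=(M^{\tp}\otimes 1)\ket{\psi_4}$, finishing with Lemma~\ref{transitivity}. The only cosmetic difference is that you establish the vanishing $(E_{xa}\otimes 1)\ket{\psi}=0$ for losing answers before the full-Schmidt-rank dilation, whereas the paper does it afterward via a trace argument with $\hat\varphi=\mathrm{diag}(\hat\lambda_i)$; both are valid.
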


\begin{proof}
Let $S'=(\ket{\psi'}, \{E_{xa}'\},\{F_{yb}'\})$ be a perfect quantum strategy for the synchronous magic square game. By Corollary \ref{fullSchmidtsync}, there exists a perfect quantum strategy $\hat{S}=(\hat{\ket{\psi}}, \{\hat{E}_{xa}\},\{\hat{F}_{yb}\})$ that is a local dilation of $S'$, where $\hat{\ket{\psi}}$ has full Schmidt rank, $\hat{\ket{\psi}}=\sum_{i=1}^d\hat{\lambda}_i e_i \otimes e_i$ with $\hat{\lambda}_i >0$ for all $i \in [d]$. Furthermore, by \cite[Corollary 3.6]{MPS}, the operators $\hat{E}_{xa}$ and $\hat{F}_{yb}$ are projections. 

Let $\hat{\varphi}:=\mathrm{diag}(\hat{\lambda}_i) \in \C^{d\times d}$, i.e. $\hat{\varphi}$ is a diagonal matrix with entries $\hat{\lambda}_i$. Let $x \in R$ and $a \in O_{-1}$. Then 
\begin{align*}
    \mathrm{Tr}(\hat{E}_{xa}\hat{\varphi}^2)=\bra{\hat{\psi}}\hat{E}_{xa}\otimes 1 \ket{\hat{\psi}}=\bra{\hat{\psi}}\hat{E}_{xa}\otimes \sum_{b}\hat{F}_{yb} \ket{\hat{\psi}}=0,
\end{align*}
since $\bra{\hat{\psi}}\hat{E}_{xa}\otimes \hat{F}_{yb} \ket{\hat{\psi}}=0$ for all $y,b$ as $a \in O_{-1}$. Since $\hat{\varphi}$ is invertible, we conclude $\hat{E}_{xa}=0$. We similarly get $\hat{E}_{xa}=0$ for $x\in C, a\in O_1$ and $\hat{F}_{yb}=0$ for $y\in C, b \in O_1$ or $y\in R$, $b\in O_{-1}$.

By the previous argument, we get that $\{\hat{E}_{xa}\,|\, a\in O_1\}$ and  $\{\hat{F}_{yb}\,|\, b\in O_{-1}\}$ are PVMs for $x\in R$, $y\in C$. Together with $\hat{\ket{\psi}}$, they form a perfect quantum strategy for the magic square game. By Theorem \ref{thm:projselftestms}, there exists
Hilbert spaces $\mathcal{H}_{A,aux}$ and $\mathcal{H}_{B,aux}$, a state $\ket{aux}\in \mathcal{H}_{A,aux}\otimes\mathcal{H}_{B,aux}$ and isometries $U_A: \hat{\mathcal{H}}_A \to \mathcal{H}_A \otimes \mathcal{H}_{A,aux}$, $U_B:\hat{\mathcal{H}}_B \to \mathcal{H}_B \otimes \mathcal{H}_{B,aux}$ such that with $U:=U_A \otimes U_B$ it holds
\begin{align}
    U\ket{\hat{\psi}}&= \ket{\psi_4}\otimes \ket{aux},\nonumber\\
    U(\hat{E}_{xa}\otimes 1)\ket{\hat{\psi}}&=[(E_{xa}\otimes 1)\ket{\psi_4}]\otimes \ket{aux},\nonumber\\
    U(1\otimes \hat{F}_{yb} )\ket{\hat{\psi}}&=[(1\otimes F_{xa})\ket{\psi_4}]\otimes \ket{aux}\label{partselftest}
\end{align}
for $x\in R$, $y\in C$ and $a \in O_1$, $b \in O_{-1}$. 

Now, let $x \in C$ and $a \in O_{-1}$. By \cite[Corollary 3.6 (a)]{MPS}, we have
\begin{align*}
     U(\hat{E}_{xa}\otimes 1)\ket{\hat{\psi}}=U(1 \otimes\hat{F}_{xa})\ket{\hat{\psi}}.
\end{align*}
Furthermore, it holds
\begin{align*}
    U(1 \otimes\hat{F}_{xa})\ket{\hat{\psi}}=[(1\otimes F_{yb})\ket{\psi_4}]\otimes \ket{aux}
\end{align*}
by \eqref{partselftest}. We conclude 
\begin{align}
     U(\hat{E}_{xa}\otimes 1)\ket{\hat{\psi}}=[(1\otimes F_{yb})\ket{\psi_4}]\otimes \ket{aux}=[(F_{xa}^{\tp}\otimes 1)\ket{\psi_4}]\otimes \ket{aux}\label{part2selftest}
\end{align}
for all $x\in C$ and $a \in O_{-1}$. One similarly obtains 
\begin{align}
     U(1\otimes \hat{F}_{yb} )\ket{\hat{\psi}}= [(1\otimes E_{yb}^{\tp})\ket{\psi_4}]\otimes \ket{aux}\label{part3selftest}
\end{align}
for all $y\in R$ and $a \in O_1$. Since we know $\hat{E}_{xa}=0=E_{xa}$ and $\hat{F}_{yb}=0=F_{yb}$ for the remaining $x,y,a,b$, we deduce from \eqref{partselftest}, \eqref{part2selftest} and \eqref{part3selftest} that the synchronous magic square game self-tests the perfect quantum strategy $\tilde{S}$.
\end{proof}

\section[x]{The $(\mathcal{G}_1\lor\mathcal{G}_2)$-game}\label{sect:newgame}

Let $\mathcal{G}_1$ and $\mathcal{G}_2$ be nonlocal games. The $(\mathcal{G}_1\lor \mathcal{G}_2)$-game is played as follows: The referee sends Alice and Bob a pair of questions $(x_1, x_2)$ and $(y_1, y_2)$, respectively, where $x_1, y_1$ are questions in $\mathcal{G}_1$ and $x_2, y_2$ in $\mathcal{G}_2$. Each of them chooses one of the questions they received and responds with an answer from the corresponding game. To win the game, two conditions have to be fulfilled:
\begin{itemize}
    \item[(1)] Alice and Bob have to choose questions from the same game,
    \item[(2)] their answers have to win the corresponding game. 
\end{itemize}
More formally, suppose the nonlocal games $\mathcal{G}_1$ and $\mathcal{G}_2$ have input sets $I_{A,i}$, $I_{B,i}$, output sets $O_{A,i}$, $O_{B,i}$, verification functions $V_{\mathcal{G}_i}$ and probability distributions $\pi_i$ on $I_{A,i}\times I_{B,i}$ for $i=1,2$. Then the $(\mathcal{G}_1\lor \mathcal{G}_2)$-game has input sets $I_{A,1}\times I_{A,2}$, $I_{B,1}\times I_{B,2}$, output sets $O_{A,1} \dot\cup O_{A,2}$, $O_{B,1}\dot\cup O_{B,2}$ and verification function
\begin{align*}
    V((x_1, x_2), (y_1,y_2), a,b)=\begin{cases}V_{\mathcal{G}_i}(x_i,y_i,a,b) \,\text{ if } a \in O_{A,i} \text{ and } b \in O_{B,i} \text{ for some $i=1,2$,} \\0 \, \text{ otherwise.}\end{cases}
\end{align*}
For the probability distribution, we take $\pi=\pi_1 \times \pi_2$ on $(I_{A,1}\times I_{A,2})\times (I_{B,1}\times I_{B,2})$, \emph{i.e.} $\pi((x_1,x_2),(y_1,y_2))=\pi_1(x_1,y_1)\pi_2(x_2,y_2)$. In this article, we assume that the probability distributions of $\mathcal{G}_1$ and $\mathcal{G}_2$ are uniform, so this will also be the case for the $(\mathcal{G}_1\lor \mathcal{G}_2)$-game. The $(\mathcal{G}_1\lor \mathcal{G}_2)$-game will be used for our counterexamples in this article, we believe that it can also be of independent interest. 

We first check that the $(\mathcal{G}_1\lor\mathcal{G}_2)$-game keeps its quantum advantage if one of the games is a pseudo-telepathy game and the other one does not have a perfect quantum strategy. 

\begin{lemma}\label{G1orG2pseudotele}
Let $\mathcal{G}_1$ be a pseudo-telepathy game and let $\mathcal{G}_2$ be a game with $\omega(G)<1$. Then the $(\mathcal{G}_1\lor \mathcal{G}_2)$-game is a pseudo-telepathy game.
\end{lemma}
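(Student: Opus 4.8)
The statement claims that if $\mathcal{G}_1$ is a pseudo-telepathy game and $\mathcal{G}_2$ satisfies $\omega(\mathcal{G}_2)<1$, then the $(\mathcal{G}_1\lor\mathcal{G}_2)$-game is itself a pseudo-telepathy game. So I need to establish two things: (i) the $(\mathcal{G}_1\lor\mathcal{G}_2)$-game has a perfect quantum strategy, and (ii) it has no perfect classical strategy.

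For (i), the plan is to lift a perfect quantum strategy for $\mathcal{G}_1$ to a perfect strategy for $(\mathcal{G}_1\lor\mathcal{G}_2)$. Since $\mathcal{G}_1$ is pseudo-telepathy it has a perfect quantum strategy $S_1=(\ket{\psi},\{E_{x_1a}\},\{F_{y_1b}\})$. Define a strategy for the new game in which, upon receiving $(x_1,x_2)$, Alice simply ignores $x_2$ and plays $S_1$ on $x_1$, responding with an answer in $O_{A,1}$; Bob does the same with $y_1$. Then condition (1) is met automatically (both always pick game $\mathcal{G}_1$), and condition (2) is met because $S_1$ wins $\mathcal{G}_1$ with certainty. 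This is essentially the observation already flagged in the proof-ideas paragraph, so it should be a one-line verification against the verification function $V$.

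For (ii), suppose toward a contradiction that there is a perfect classical (deterministic, WLOG) strategy for $(\mathcal{G}_1\lor\mathcal{G}_2)$: functions $a=f_A(x_1,x_2)$ and $b=f_B(y_1,y_2)$ that win on every question pair in the support of $\pi=\pi_1\times\pi_2$ (which is all of $(I_{A,1}\times I_{A,2})\times(I_{B,1}\times I_{B,2})$ since both distributions are uniform, hence full support). The key point is that the choice of which game to answer must be \emph{consistent}: for condition (1) to hold on every pair, whenever Alice's answer lies in $O_{A,i}$ Bob's must lie in $O_{B,i}$. One shows this forces a single fixed $i^\ast\in\{1,2\}$ such that both players always answer in game $\mathcal{G}_{i^\ast}$ — if Alice answered in game $1$ for some $x_2$ and in game $2$ for some other $x_2'$, then pairing against a Bob input that answers in game $1$ would, for an appropriate choice, violate (1). (I need to be slightly careful here: a priori Alice's chosen game could depend on $x_2$ and Bob's on $y_2$; but since $\pi$ is a product distribution all combinations occur, and forcing agreement on all of them pins down a constant game index. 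This is the one place to argue carefully.) Once we know both players always answer in game $\mathcal{G}_{i^\ast}$, the restriction of $f_A,f_B$ to the $i^\ast$-coordinates yields a perfect deterministic classical strategy for $\mathcal{G}_{i^\ast}$. If $i^\ast=2$ this contradicts $\omega(\mathcal{G}_2)<1$. If $i^\ast=1$ this contradicts the fact that $\mathcal{G}_1$, being pseudo-telepathy, has $\omega(\mathcal{G}_1)<1$.

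The main obstacle is the consistency argument in (ii): ruling out strategies where the players' game-choice depends nontrivially on the "other" question. The product structure of $\pi$ is exactly what makes this work, and I expect the cleanest route is to fix any $x_2$ and any $y_2$, look at the induced sub-game on the $(x_1,y_1)$-coordinates, observe it must be perfectly won, deduce the chosen games agree for that $(x_2,y_2)$ pair, and then vary $x_2,y_2$ to conclude the common index is globally constant (using that $I_{A,1},I_{B,1}$ are nonempty). Everything else is routine bookkeeping with the definition of $V$.
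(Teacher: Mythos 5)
Your proposal is correct and matches the paper's proof essentially step for step: part (i) lifts a perfect quantum strategy for $\mathcal{G}_1$ by having both players always pick $\mathcal{G}_1$, and part (ii) reduces to deterministic strategies and shows the game-choice must be globally constant, yielding a perfect classical strategy for one of $\mathcal{G}_1,\mathcal{G}_2$ and hence a contradiction. The "careful" consistency argument you flag is exactly the content of the paper's sentence that if Alice chooses game~$1$ for some pair then Bob must choose game~$1$ for all pairs, and vice versa.
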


\begin{proof}
It is easy to see that the $(\mathcal{G}_1\lor \mathcal{G}_2)$-game has a perfect strategy. Indeed, Alice and Bob can always choose to answer  the question from $\mathcal{G}_1$ and then use a perfect quantum strategy for $\mathcal{G}_1$ to come up with their answers.

It remains to show that there is no perfect classical strategy for the $(\mathcal{G}_1\lor \mathcal{G}_2)$-game. First note that it suffices to show that there is no perfect deterministic strategy. For contradiction assume that there is a perfect deterministic strategy $S$. If in this strategy Alice chooses a question $x_1$ for some pair $(x_1, x_2)$, then we know that Bob has to choose $y_1$ for all pairs $(y_1,y_2)$, as otherwise there are questions for which the players lose the game. A similar argument shows that Alice also always has to choose the questions from $\mathcal{G}_1$. 
This shows that in strategy $S$ the players always choose to answer questions from $\mathcal{G}_1$. Hence, $S$ can be used to construct a perfect classical strategy for game $\mathcal{G}_1$ which is a contradiction. 
Since $\mathcal{G}_2$ also does not admit a perfect classical strategy, a similar argument works if in strategy $S$ Alice chooses a question $x_2$ for some tuple $(x_1, x_2)$.
\end{proof}

The following lemma will be used to show that restricting a POVM measurement with a projection yields a POVM measurement on a smaller Hilbert space.  

\begin{lemma}\cite[Proposition II, 3.3.2]{Bla} \label{projeq1}
Let $A$ be a $C^*$-algebra, let $a\in A$ be a positive element and $p\in A$ a projection. If $a\leq p$, then $ap=pa=a$.
\end{lemma}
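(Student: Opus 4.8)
The statement to prove is the standard $C^*$-algebraic fact: if $a$ is a positive element of a $C^*$-algebra $A$, $p\in A$ is a projection, and $a\le p$, then $ap=pa=a$. Since the paper cites this as \cite[Proposition II, 3.3.2]{Bla}, the expected ``proof'' is a short self-contained argument rather than an invocation. The plan is to work with the complementary projection $q:=1-p$ (passing to the unitization of $A$ if $A$ is nonunital; the conclusion $ap=pa=a$ does not involve the adjoined unit) and show $qaq=0$, from which $qa^{1/2}=0$ and hence $qa=0$ follow, giving $a=pa$; taking adjoints and using $a=a^*$, $p=p^*$ yields $a=ap$ as well.

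The key steps, in order: First, from $a\le p$ and $0\le q=1-p$ with $qp=pq=0$, compress on both sides by $q$ to get $0\le qaq\le qpq=0$, hence $qaq=0$. Second, let $b:=a^{1/2}$, the positive square root of $a$ (which exists in $A$, or in its unitization, and lies in $C^*(a)\subseteq A$). Then $qaq=qb^2q=(qb)(bq)=(bq)^*(bq)\ge 0$, and this being $0$ forces $\|bq\|^2=\|(bq)^*(bq)\|=0$, so $bq=0$, i.e. $a^{1/2}q=0$. Third, multiply by $a^{1/2}$ on the left: $aq=a^{1/2}(a^{1/2}q)=0$, so $a=a(1-q)=ap$. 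Fourth, take adjoints: $pa=(ap)^*=a^*=a$ as well (using that $a$ and $p$ are self-adjoint). This gives $ap=pa=a$.

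The only mild subtlety — and the ``main obstacle'', though it is routine — is the nonunital case: one must either assume $A$ is unital or pass to the unitization $\tilde A$, observing that $q=1-p$, the square root $a^{1/2}$, and all the products used lie in $\tilde A$, while the final identities $ap=a$ and $pa=a$ are statements purely about elements of $A$. One should also make sure the square root used is the continuous-functional-calculus square root of the positive element $a$, so that $a^{1/2}a^{1/2}=a$ and $a^{1/2}\ge 0$; no further machinery is needed. Everything else is a two-line manipulation with positivity of compressions $x\mapsto qxq$ and the $C^*$-identity $\|x^*x\|=\|x\|^2$.
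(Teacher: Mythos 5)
Your proof is correct and is the standard textbook argument; the paper itself does not supply a proof of this lemma but simply cites it to Blackadar, so there is no competing in-paper argument to compare against. Your sequence of steps --- compress $a\le p$ by $q=1-p$ to get $qaq=0$, factor $qaq=(a^{1/2}q)^*(a^{1/2}q)$ and use the $C^*$-identity to conclude $a^{1/2}q=0$, hence $aq=0$ and $ap=a$, then take adjoints --- is exactly how the cited result is proved, and your remark about passing to the unitization when $A$ is nonunital correctly handles the only subtlety.
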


The next lemma shows that if we have a perfect quantum strategy for the $(\mathcal{G}_1\lor\mathcal{G}_2)$-game, where Alice and Bob have non-zero probability of answering with some outputs of $\mathcal{G}_1$, then there is a perfect quantum strategy for $\mathcal{G}_1$. In particular, this shows that if $\mathcal{G}_2$ has a perfect quantum strategy and $\mathcal{G}_1$ does not, the players will always choose to play $\mathcal{G}_2$ and never $\mathcal{G}_1$.

\begin{lemma}\label{perfectqstratG1}
Let $\mathcal{G}_1$ and $\mathcal{G}_2$ be nonlocal games and consider the $(\mathcal{G}_1\lor \mathcal{G}_2)$-game as above. If there is a perfect quantum strategy $S=(\ket{\psi}, \{E_{(x_1,x_2)a}\}_{(x_1,x_2)},\{F_{(y_1,y_2)b}\}_{(y_1,y_2)})$ of the $(\mathcal{G}_1\lor \mathcal{G}_2)$-game, where additionally 
\begin{align}\label{lemmaassumption}
    \bra{\psi}E_{(x_1,x_2)a}\otimes F_{(y_1,y_2)b}\ket{\psi}> 0
\end{align}
for some $a \in O_{A,1}$ and $b \in O_{B,1}$, then $\mathcal{G}_1$ has a perfect quantum strategy. 
\end{lemma}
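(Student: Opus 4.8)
The plan is to carve out of the given perfect strategy $S=(\ket{\psi},\{E_{(x_1,x_2)a}\},\{F_{(y_1,y_2)b}\})$ the ``branch in which both players answer a question of $\mathcal{G}_1$'' and to show that this branch, after compression and renormalisation, is itself a perfect quantum strategy for $\mathcal{G}_1$.

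First I would record the consequences of perfectness. Since $\pi$ is uniform, $S$ being perfect means $\bra{\psi}E_{(x_1,x_2)a}\otimes F_{(y_1,y_2)b}\ket{\psi}=0$, and hence $(E_{(x_1,x_2)a}\otimes F_{(y_1,y_2)b})\ket{\psi}=0$ (write $E\otimes F=(E^{1/2}\otimes F^{1/2})^{\dagger}(E^{1/2}\otimes F^{1/2})$), for every $(a,b)$ with $V((x_1,x_2),(y_1,y_2),a,b)=0$; I will use this both for ``game mismatch'' pairs ($a\in O_{A,i}$, $b\in O_{B,j}$, $i\ne j$) and for ``$\mathcal{G}_1$-losing'' pairs ($a\in O_{A,1}$, $b\in O_{B,1}$, $V_{\mathcal{G}_1}(x_1,y_1,a,b)=0$). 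Setting $P^A_{(x_1,x_2)}:=\sum_{a\in O_{A,1}}E_{(x_1,x_2)a}$ and $P^B_{(y_1,y_2)}:=\sum_{b\in O_{B,1}}F_{(y_1,y_2)b}$, the mismatch relations sum to $(P^A_{(x_1,x_2)}\otimes(1-P^B_{(y_1,y_2)}))\ket{\psi}=0=((1-P^A_{(x_1,x_2)})\otimes P^B_{(y_1,y_2)})\ket{\psi}$, whence $(P^A_{(x_1,x_2)}\otimes 1)\ket{\psi}=(1\otimes P^B_{(y_1,y_2)})\ket{\psi}$. As the two sides involve disjoint variables, this common vector $\ket{\phi}$ is independent of the questions and also satisfies $(P^A_{(x_1,x_2)}\otimes 1)\ket{\phi}=(1\otimes P^B_{(y_1,y_2)})\ket{\phi}=\ket{\phi}$. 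The hypothesis \eqref{lemmaassumption} is exactly what forces $\ket{\phi}\ne 0$: for $a\in O_{A,1}$, $b\in O_{B,1}$ one has $E_{(x_1,x_2)a}\le P^A_{(x_1,x_2)}$ and $F_{(y_1,y_2)b}\le P^B_{(y_1,y_2)}$, so $\bra{\psi}E_{(x_1,x_2)a}\otimes F_{(y_1,y_2)b}\ket{\psi}\le\bra{\psi}P^A_{(x_1,x_2)}\otimes P^B_{(y_1,y_2)}\ket{\psi}=\|\phi\|^2$.

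Next I would fix any $\bar x_2\in I_{A,2}$, $\bar y_2\in I_{B,2}$, write $p_A$, $p_B$ for the projections onto $\mathrm{supp}_A(\ket{\phi})$, $\mathrm{supp}_B(\ket{\phi})$, and propose the $\mathcal{G}_1$-strategy $S_1$ with state $\ket{\psi_1}:=\ket{\phi}/\|\phi\|\in\mathrm{supp}_A(\ket{\phi})\otimes\mathrm{supp}_B(\ket{\phi})$ and operators $E^{(1)}_{x_1 a}:=p_A E_{(x_1,\bar x_2)a}p_A$ ($a\in O_{A,1}$) and $F^{(1)}_{y_1 b}:=p_B F_{(y_1,\bar y_2)b}p_B$ ($b\in O_{B,1}$). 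From $(P^A_{(x_1,\bar x_2)}\otimes 1)\ket{\phi}=\ket{\phi}$ and the Schmidt decomposition of $\ket{\phi}$ one gets $P^A_{(x_1,\bar x_2)}p_A=p_A$, hence $p_A\le P^A_{(x_1,\bar x_2)}$; Lemma~\ref{projeq1} (applied to the positive element $1-P^A_{(x_1,\bar x_2)}$ and the projection $1-p_A$) then gives $p_A P^A_{(x_1,\bar x_2)}p_A=p_A$, so $\{E^{(1)}_{x_1 a}\}_a$ is a POVM on $\mathrm{supp}_A(\ket{\phi})$, and similarly for $B$. To see $S_1$ is perfect, the crucial point is the identity $(E_{(x_1,\bar x_2)a}\otimes F_{(y_1,\bar y_2)b})\ket{\phi}=(E_{(x_1,\bar x_2)a}\otimes F_{(y_1,\bar y_2)b})\ket{\psi}$ for $a\in O_{A,1}$, $b\in O_{B,1}$: using $\ket{\phi}=(1\otimes P^B_{(y_1,\bar y_2)})\ket{\psi}$ and $1-P^B_{(y_1,\bar y_2)}=\sum_{b'\in O_{B,2}}F_{(y_1,\bar y_2)b'}$, the difference of the two sides is $\sum_{b'\in O_{B,2}}(1\otimes F_{(y_1,\bar y_2)b})(E_{(x_1,\bar x_2)a}\otimes F_{(y_1,\bar y_2)b'})\ket{\psi}$, which vanishes because every summand contains a game-mismatch pair. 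Using $(p_A\otimes p_B)\ket{\phi}=\ket{\phi}$ and self-adjointness, this identity turns into $\bra{\psi_1}E^{(1)}_{x_1 a}\otimes F^{(1)}_{y_1 b}\ket{\psi_1}=\|\phi\|^{-2}\bra{\psi}E_{(x_1,\bar x_2)a}\otimes F_{(y_1,\bar y_2)b}\ket{\psi}$ for $a\in O_{A,1}$, $b\in O_{B,1}$; this is $0$ for $\mathcal{G}_1$-losing pairs by the first paragraph and sums to $\|\phi\|^{-2}\bra{\psi}P^A_{(x_1,\bar x_2)}\otimes P^B_{(y_1,\bar y_2)}\ket{\psi}=1$ over all $a\in O_{A,1}$, $b\in O_{B,1}$, so $S_1$ wins $\mathcal{G}_1$ with probability one.

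I expect the identity in the last step — that once $a\in O_{A,1}$ the ``Bob answers $\mathcal{G}_1$'' operator $P^B_{(y_1,\bar y_2)}$ can be inserted into or deleted from $\ket{\phi}$ without changing the action of $E_{(x_1,\bar x_2)a}\otimes F_{(y_1,\bar y_2)b}$ — to be the only genuine obstacle. Morally it says that the sub-correlation of $S$ obtained by conditioning on the event ``both players answer $\mathcal{G}_1$'' is realised by an honest quantum strategy (conditioning on an event can in general leave the quantum set of correlations), and it is precisely this conditional correlation that exhibits a perfect play of $\mathcal{G}_1$. The remaining ingredients — the implication $\bra{\psi}E\otimes F\ket{\psi}=0\Rightarrow(E\otimes F)\ket{\psi}=0$, the question-independence of $\ket{\phi}$, and the POVM normalisation via Lemma~\ref{projeq1} — are routine bookkeeping.
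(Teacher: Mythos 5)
Your proof is correct, and it takes a genuinely different route from the paper's. The paper first invokes Corollary~\ref{fullSchmidtsync}-style support restriction to pass to a strategy whose state has full Schmidt rank; the resulting invertible Schmidt matrix $\varphi$ is then used to promote the vector equations $(p_{(x_1,x_2)i}\otimes 1)\ket{\psi}=(1\otimes q_{(y_1,y_2)i})\ket{\psi}$ to the operator identities that the $p_i$ are question-independent orthogonal projections with $p_1+p_2=1$, and only then is the strategy compressed to $p_1\mathcal{H}_A\otimes q_1\mathcal{H}_B$. You bypass the full-Schmidt-rank reduction entirely: you work directly with the vector $\ket{\phi}=(P^A_{(x_1,x_2)}\otimes 1)\ket{\psi}=(1\otimes P^B_{(y_1,y_2)})\ket{\psi}$, whose question-independence falls out immediately from the separation of variables, and you compress onto $\mathrm{supp}_A(\ket{\phi})\otimes\mathrm{supp}_B(\ket{\phi})$ instead of onto the ranges of $p_1$ and $q_1$. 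This means you never need the $P^A$, $P^B$ to be projections or question-independent; the only operator-level facts you use are $p_A\le P^A_{(x_1,\bar x_2)}$ and $p_B\le P^B_{(y_1,\bar y_2)}$, which you extract from the Schmidt decomposition of $\ket{\phi}$. The trade-off is that the paper's version hands you cleaner structure ($p_1,q_1$ honest projections, $\ket{\tilde\psi}=(p_1\otimes q_1)\ket{\psi}$), whereas your version is shorter and more self-contained; you also correctly normalize $\ket{\psi_1}=\ket{\phi}/\|\ket{\phi}\|$, a step the paper leaves implicit. The one place a reader should pause is your crucial identity $(E_{(x_1,\bar x_2)a}\otimes F_{(y_1,\bar y_2)b})\ket{\phi}=(E_{(x_1,\bar x_2)a}\otimes F_{(y_1,\bar y_2)b})\ket{\psi}$ for $a\in O_{A,1}$, $b\in O_{B,1}$: it holds exactly as you argue, because inserting $1\otimes(1-P^B_{(y_1,\bar y_2)})=1\otimes\sum_{b'\in O_{B,2}}F_{(y_1,\bar y_2)b'}$ produces only mismatch terms $(E_{(x_1,\bar x_2)a}\otimes F_{(y_1,\bar y_2)b'})\ket{\psi}=0$, and the same identity applied on both the bra and the ket side yields $\bra{\phi}E\otimes F\ket{\phi}=\bra{\psi}E\otimes F\ket{\psi}$, which is all that is needed to verify both that the losing probabilities vanish and that the winning probabilities sum to $\|\ket{\phi}\|^{2}$, hence to $1$ after normalization.
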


\begin{proof}
Let $S'=(\ket{\psi'} \in \mathcal{H}_A'\otimes \mathcal{H}_B', \{E_{(x_1,x_2)a}'\}_{(x_1,x_2)},\{F_{(y_1,y_2)b}'\}_{(y_1,y_2)})$ be a perfect quantum strategy of the $(\mathcal{G}_1\lor \mathcal{G}_2)$-game. By restricting the state and the operators from $S'$ to $\mathrm{supp}_A(\ket{\psi'})\otimes \mathrm{supp}_B(\ket{\psi'})$, we get a perfect quantum strategy $S=(\ket{\psi}\in \mathcal{H}_A\otimes \mathcal{H}_B, \{E_{(x_1,x_2)a}\}_{(x_1,x_2)},$\linebreak$\{F_{(y_1,y_2)b}\}_{(y_1,y_2)})$ such that
\begin{align}
    \bra{\psi}E_{(x_1,x_2)a}\otimes F_{(y_1,y_2)b}\ket{\psi}=\bra{\psi'}E_{(x_1,x_2)a}'\otimes F_{(y_1,y_2)b}'\ket{\psi'}\label{same}
\end{align}
and $\ket{\psi}$ has full Schmidt rank, \emph{i.e.} $\ket{\psi}=\sum_{i=1}^d\lambda_i e_i \otimes e_i$ with $\lambda_i >0$ for all $i \in [d]$. We will first show 
\begin{align*}
    \sum_{a \in O_{A,i}}E_{(x_1,x_2)a}=\sum_{a \in O_{A,i}}E_{(s_1,s_2)a}, \qquad
    \sum_{b \in O_{B,j}}F_{(y_1,y_2)b}=\sum_{b \in O_{B,j}}F_{(t_1,t_2)b}
\end{align*}
for all $(x_1,x_2), (s_1,s_2) \in I_{A,1}\times I_{A,2}$, $i=1,2$ and $(y_1,y_2), (t_1,t_2) \in I_{B,1}\times I_{B,2}$, $j=1,2$. Note that we have 
\begin{align}
    \sum_{a \in O_{A,1}\dot\cup O_{A,2}}E_{(x_1,x_2)a}=1= \sum_{b \in O_{B,1}\dot\cup O_{B,2}}F_{(y_1,y_2)b}\label{equal1}
\end{align}
and $\bra{\psi}E_{(x_1,x_2)a} \otimes F_{(y_1,y_2)b}\ket{\psi}=0$ for all $a\in O_{A,i}$, $b\in O_{B,j}$ with $i\neq j$, since $S$ is a perfect quantum strategy.
Let $\varphi:=\mathrm{diag}(\lambda_i) \in \C^{d\times d}$, \emph{i.e.} $\varphi$ is the diagonal matrix with entries $\lambda_i$. Let $p_{(x_1,x_2)i}:=\sum_{a \in O_{A,i}}E_{(x_1,x_2)a}$, $i=1,2$ and $q_{(y_1,y_2)j}:=\sum_{b \in O_{B,j}}F_{(y_1,y_2)b}$, $j=1,2$. It holds
\begin{align*}
\mathrm{Tr}(p_{(x_1,x_2)i} \varphi (q_{(y_1,y_2)j})^{\tp} \varphi)
&=\mathrm{Tr}(\varphi^*p_{(x_1,x_2)i} \varphi (q_{(y_1,y_2)j})^{\tp})\\
&=\bra{\psi}\sum_{a \in O_{A,i}}E_{(x_1,x_2)a} \otimes \sum_{b \in O_{B,j}}F_{(y_1,y_2)b}\ket{\psi}= 0
\end{align*}
for $i \neq j$, since $S$ is a perfect quantum strategy for the $(\mathcal{G}_1\lor \mathcal{G}_2)$-game. Thus, we have 
\begin{align}
    p_{(x_1,x_2)i} \varphi (q_{(y_1,y_2)j})^{\tp} \varphi=0\label{orthogonal}
\end{align}
for all $i \neq j$. Using equations \eqref{equal1} and \eqref{orthogonal} several times, we get
\begin{align*}
    p_{(x_1,x_2)i} \varphi^2&= p_{(x_1,x_2)i} \varphi ((q_{(y_1,y_2)i})^{\tp}+(q_{(y_1,y_2)j})^{\tp})\varphi\\
    &=p_{(x_1,x_2)i} \varphi (q_{(y_1,y_2)i})^{\tp}\varphi\\
    &=(p_{(x_1,x_2)i}+p_{(x_1,x_2)j}) \varphi (q_{(y_1,y_2)i})^{\tp}\varphi\\
    &=\varphi (q_{(y_1,y_2)i})^{\tp}\varphi\\
    &=p_{(s_1,s_2)i}\varphi (q_{(y_1,y_2)i})^{\tp}\varphi\\
    &=p_{(s_1,s_2)i}\varphi^2
\end{align*}
for all $(x_1,x_2), (s_1,s_2) \in I_{A,1}\times I_{A,2}$ and $i,j \in \{1,2\}, i \neq j$. Since $\ket{\psi}$ has full Schmidt rank, $\varphi$ is invertible and we obtain $ p_{(x_1,x_2)i}= p_{(s_1,s_2)i}$ for all $(x_1,x_2), (s_1,s_2) \in I_{A,1}\times I_{A,2}$. One similarly shows $q_{(y_1,y_2)j}=q_{(t_1,t_2)j}$. From now on, we let $p_i:=p_{(x_1,x_2)i}$, $q_j:=q_{(y_1,y_2)j}$ for $i=1,2$, $j=1,2$.

We will now show that $p_i$ and $q_j$ are projections. Using equations \eqref{equal1} and \eqref{orthogonal}, we obtain
\begin{align*}
    p_1p_2\varphi^2&=p_1p_2\varphi(q_1^{\tp}+q_2^{\tp})\varphi\\
    &=p_1p_2\varphi(q_2^{\tp})\varphi\\
    &=p_1(p_1+p_2)\varphi(q_2^{\tp})\varphi\\
    &=p_1\varphi(q_2^{\tp})\varphi\\
    &=0.
\end{align*}
Since $\varphi$ is invertible, we obtain $p_1p_2=0$. In particular, we have $p_1=p_1(p_1+p_2)=p_1^2$ and since we already know $p_1=p_1^*$, we get that $p_1$ is a projection. It immediately follows that $p_2=1-p_1$ is a projection. One can similarly show that $q_1$ and $q_2$ are projections.

In the next step, we show $p_1 \neq 0 \neq q_1$. By \eqref{lemmaassumption} and \eqref{same}, we have 
\begin{align*}
    \bra{\psi}E_{(x_1,x_2)a}\otimes F_{(y_1,y_2)b}\ket{\psi}=\bra{\psi'}E_{(x_1,x_2)a}'\otimes F_{(y_1,y_2)b}'\ket{\psi'}> 0
\end{align*}
for some $a \in O_{A,1}$ and $b \in O_{B,1}$. Since all $E_{(x_1,x_2)a}$, $F_{(y_1,y_2)b}$ are positive, it holds
\begin{align*}
    \bra{\psi}p_1 \otimes q_1 \ket{\psi}\geq\bra{\psi}E_{(x_1,x_2)a}\otimes F_{(y_1,y_2)b}\ket{\psi} >0 
\end{align*}
which implies $p_1 \neq 0 \neq q_1$.

In the last step, we construct a perfect quantum strategy for $\mathcal{G}_1$. Fix some $x_2 \in O_{A,2}$ and $y_2 \in O_{B,2}$ and define the quantum strategy $\tilde{S}$ for $\mathcal{G}_1$ as follows.
\begin{align*}
    \ket{\tilde{\psi}}&:=(p_1 \otimes q_1) \ket{\psi}\in p_1\mathcal{H}_A\otimes q_1\mathcal{H}_B,\\
    \tilde{E}_{x_1a}&:=E_{(x_1,x_2)a}, \\
    \tilde{F}_{y_1b}&:=F_{(y_1,y_2)b}
\end{align*}
for $x_1\in I_{A,1}$, $y_1\in I_{B,1}$ and $a \in O_{A,1}$, $b \in O_{B,1}$. Note that $p_1$ is the identity in $B(p_1\mathcal{H}_A)$ and similarly $q_1$ is the identity in $B(q_1\mathcal{H}_B)$. Using Lemma \ref{projeq1}, we see that $\tilde{E}_{x_1a}\in B(p_1\mathcal{H}_A)$ and $\tilde{F}_{y_1b}\in B(q_1\mathcal{H}_B)$. Thus $\{\tilde{E}_{x_1a}\,|\, a \in O_{A,1}\}$ and $\{\tilde{F}_{y_1b}\,|\, b \in O_{B,1}\}$ are POVMs on the space $p_1\mathcal{H}_A\otimes q_1\mathcal{H}_B$ for all $x_1$, $y_1$. 
Furthermore, we have
\begin{align*}
    \bra{\tilde{\psi}} \tilde{E}_{x_1a} \otimes \tilde{F}_{y_1b} \ket{\tilde{\psi}}&=\bra{\psi} p_1E_{(x_1,x_2)a}p_1 \otimes q_1F_{(y_1,y_2)b}q_1 \ket{\psi}\\
    &=\bra{\psi}E_{(x_1,x_2)a} \otimes F_{(y_1,y_2)b} \ket{\psi}
\end{align*}
by Lemma \ref{projeq1}. Since $S=(\ket{\psi}, \{E_{(x_1,x_2)a}\}_{(x_1,x_2)},\{F_{(y_1,y_2)b}\}_{(y_1,y_2)})$ is a perfect quantum strategy for the $(\mathcal{G}_1 \lor \mathcal{G}_2)$-game, we know that 
\begin{align*}
    \bra{\psi}E_{(x_1,x_2)a} \otimes F_{(y_1,y_2)b} \ket{\psi}=0
\end{align*}
whenever $V((x_1,x_2), (y_1, y_2), a,b)=0$. But since we have $a\in O_{A,1}$ and $b\in O_{B,1}$, this is by definition equivalent to $V_{\mathcal{G}_1}(x_1, y_1, a,b)=0$. We conclude that $\tilde{S}$ is a perfect quantum strategy for $\mathcal{G}_1$.
\end{proof}

We believe that the $(\mathcal{G}_1 \lor \mathcal{G}_2)$-game is of independent interest. We now list a few natural directions of further inquiry. The first one that comes to mind is what happens if we consider games with quantum advantage, but not necessarily perfect quantum strategies in Lemma \ref{G1orG2pseudotele} and Lemma \ref{perfectqstratG1}. Furthermore, one might ask about the non-signalling strategies of the $(\mathcal{G}_1 \lor \mathcal{G}_2)$-game. We leave those as open questions here. In this article, we will use the $(\mathcal{G}_1 \lor \mathcal{G}_2)$-game in the next sections to construct non-robust self-tests and games that do not self-test any states.

\section{A non-robust self-test}\label{sect:Nonrobustselftest}

In this section, we construct a game that non-robustly self-tests a perfect quantum strategy. The idea is to consider the $(\mathcal{G}_1\lor \mathcal{G}_2)$-game with the following games. We let $\mathcal{G}_1$ be a game that has no perfect quantum strategy, but a sequence of quantum strategies whose winning probabilities converge to $1$. Note that such a game was constructed by Slofstra in \cite{SlofstraQcor}. For $\mathcal{G}_2$ we take a pseudo-telepathy game that self-tests a quantum strategy. Then the $(\mathcal{G}_1\lor \mathcal{G}_2)$-game still self-tests this strategy, because $\mathcal{G}_1$ has no perfect quantum strategy. This self-test is not robust, however, since we can construct a near-optimal strategies of $(\mathcal{G}_1\lor \mathcal{G}_2)$-game from the ones of $\mathcal{G}_1$ and these strategies are not close to the self-tested strategy.

For our proof technique to go through, we need to choose a game $\mathcal{G}_2$ that is synchronous. In this case we can ensure that if one of the players gets a pair of questions $(x_1,x_2)$ and chooses to play game $\mathcal{G}_1$, the question $x_2$ does not matter for the output (see part $(ii)$ below). 

\begin{lemma}\label{samefordifferentindex}
Let $\mathcal{G}_1$ be a nonlocal game and let $\mathcal{G}_2$ be a synchronous nonlocal game. Furthermore, let $S=(\ket{\psi},\{E_{(x_1,x_2)a}\}_{(x_1,x_2)},\{F_{(y_1,y_2)b}\}_{(y_1,y_2)})$ be a perfect quantum strategy for the $(\mathcal{G}_1 \lor \mathcal{G}_2)$-game. 
\begin{itemize}
    \item[(i)] It holds $(E_{(x_1,x_2)a}\otimes 1)\ket{\psi}=(1 \otimes F_{(y_1, x_2)a})\ket{\psi}$ for all $x_1\in I_{A,1}, y_1\in I_{B,1}$, $x_2 \in I_2$ and $a\in O_2$.
    \item[(ii)] For all  $x_1,x_3\in I_{A,1}$, $x_2 \in I_{2}$, $a\in O_2$ and $y_1,y_3\in I_{B,1}$, $y_2 \in I_{2}$, $b \in O_2$, we have
    \begin{align*}
        (E_{(x_1,x_2)a} \otimes 1) \ket{\psi}=(E_{(x_3,x_2)a} \otimes 1) \ket{\psi} \,\text{ and }\,
        (1\otimes F_{(y_1,y_2)b}) \ket{\psi}=(1 \otimes F_{(y_3,y_2)b}) \ket{\psi}.
\end{align*}
    \item[(iii)] If $\ket{\psi}$ has full Schmidt rank, then $E_{(x_1,x_2)a}=E_{(x_3,x_2)a}$ for $x_1,x_3\in I_{A,1}$, $x_2 \in I_{2}$, $a\in O_2$ and $F_{(y_1,y_2)b}=F_{(y_3,y_2)b}$ for $y_1,y_3\in I_{B,1}$, $y_2 \in I_{2}$, $b \in O_2$.
\end{itemize}
\end{lemma}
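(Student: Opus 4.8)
The backbone is part~(i); parts~(ii) and~(iii) then follow by short formal manipulations, so the plan is to spend almost all the effort on~(i). The first step is to record the basic consequence of $S$ being perfect: $\bra{\psi}E_{(x_1,x_2)a'}\otimes F_{(y_1,y_2)b'}\ket{\psi}=0$ whenever $V((x_1,x_2),(y_1,y_2),a',b')=0$, because the winning probability is a sum of nonnegative terms equal to $1$. I would then specialize to the case where Bob's $\mathcal{G}_2$-question equals Alice's one, i.e.\ $y_2=x_2$, and fix an output $a\in O_{A,2}=O_{B,2}=:O_2$. For every $b'\neq a$ one has $V((x_1,x_2),(y_1,x_2),a,b')=0$: either $b'\in O_{B,1}$, so the answers $a$ and $b'$ belong to different games and $V=0$ by the definition of the $(\mathcal{G}_1\lor\mathcal{G}_2)$-game, or $b'\in O_2\setminus\{a\}$, so $V=V_{\mathcal{G}_2}(x_2,x_2,a,b')=0$ since $\mathcal{G}_2$ is synchronous. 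By the same reasoning $\bra{\psi}E_{(x_1,x_2)a'}\otimes F_{(y_1,x_2)a}\ket{\psi}=0$ for all $a'\neq a$.

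Next I would run the standard norm estimate. Set $G:=E_{(x_1,x_2)a}$ and $H:=F_{(y_1,x_2)a}$ with $a\in O_2$. Using $\sum_{b'}F_{(y_1,x_2)b'}=1$ together with the vanishing just derived, $\bra{\psi}G\otimes 1\ket{\psi}=\bra{\psi}G\otimes H\ket{\psi}$; symmetrically $\bra{\psi}1\otimes H\ket{\psi}=\bra{\psi}G\otimes H\ket{\psi}$; and this common value is a nonnegative real number since $G\otimes H\geq 0$. As $0\leq G\leq 1$ we have $G^2\leq G$, hence $\bra{\psi}G^2\otimes 1\ket{\psi}\leq\bra{\psi}G\otimes 1\ket{\psi}$, and likewise $\bra{\psi}1\otimes H^2\ket{\psi}\leq\bra{\psi}1\otimes H\ket{\psi}$. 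Expanding the square,
\begin{align*}
\|(G\otimes 1)\ket{\psi}-(1\otimes H)\ket{\psi}\|^2
&=\bra{\psi}G^2\otimes 1\ket{\psi}+\bra{\psi}1\otimes H^2\ket{\psi}-2\bra{\psi}G\otimes H\ket{\psi}\\
&\leq \bra{\psi}G\otimes 1\ket{\psi}+\bra{\psi}1\otimes H\ket{\psi}-2\bra{\psi}G\otimes H\ket{\psi}=0,
\end{align*}
so the norm vanishes and~(i) follows, with the computation carried out directly with the original operators on $\mathcal{H}_A\otimes\mathcal{H}_B$.

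Parts~(ii) and~(iii) are then routine. For~(ii): applying~(i) with $x_1$ and with $x_3$ shows that $(E_{(x_1,x_2)a}\otimes 1)\ket{\psi}$ and $(E_{(x_3,x_2)a}\otimes 1)\ket{\psi}$ both equal $(1\otimes F_{(y_1,x_2)a})\ket{\psi}$ for any fixed $y_1\in I_{B,1}$, and the Bob-side identity follows the same way by applying~(i) with Alice's $\mathcal{G}_2$-question equal to $y_2$. For~(iii): writing the full-Schmidt-rank state as $\ket{\psi}=\sum_i\lambda_i e_i\otimes e_i$ with $\lambda_i>0$ and $\{e_i\}$ an orthonormal basis of each tensor factor, the relation $((E_{(x_1,x_2)a}-E_{(x_3,x_2)a})\otimes 1)\ket{\psi}=0$ from~(ii) forces $(E_{(x_1,x_2)a}-E_{(x_3,x_2)a})e_i=0$ for every $i$, hence equality of the two operators on a spanning set; the $F$-case is identical.

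The only point needing care is that~(i) must be an equality of vectors in the original $\mathcal{H}_A\otimes\mathcal{H}_B$, not merely after compressing to $\mathrm{supp}_A(\ket{\psi})\otimes\mathrm{supp}_B(\ket{\psi})$ — such a compression would discard exactly the information about how $E_{(x_1,x_2)a}$ acts off the support, and without knowing a priori that these supports are invariant under the measurement operators one could not recover it afterwards. The norm-squared computation sidesteps this entirely, since $0\leq G\leq 1$ holds for the uncompressed POVM elements and no invariant-subspace input is used; everything else is bookkeeping with the verification function $V$ of the $(\mathcal{G}_1\lor\mathcal{G}_2)$-game and the synchronicity hypothesis on $\mathcal{G}_2$.
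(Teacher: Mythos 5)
Your proof is correct and follows essentially the same route as the paper's: specialize to matched $\mathcal{G}_2$-questions $y_2=x_2$, use perfection plus synchronicity to get $\bra{\psi}G\otimes 1\ket{\psi}=\bra{\psi}1\otimes H\ket{\psi}=\bra{\psi}G\otimes H\ket{\psi}$, conclude via the $G^2\le G$ norm-squared estimate, and then read off (ii) and (iii). Your justification of (iii) via the Schmidt decomposition is spelled out a bit more carefully than the paper's one-line remark, but it is the same underlying fact.
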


\begin{proof}
Let $S$ be as above and $x_1\in I_{A,1}$, $x_2 \in I_{2}$, $a \in O_2$. Since $\mathcal{G}_2$ is synchronous, it holds 
\begin{align*}
    \bra{\psi}E_{(x_1,x_2)a} \otimes 1 \ket{\psi}=\bra{\psi}E_{(x_1,x_2)a} \otimes F_{(y_1,x_2)a} \ket{\psi}=\bra{\psi}1 \otimes F_{(y_1,x_2)a}\ket{\psi}
\end{align*}
for all $y_1 \in I_{B,1}$. Using the equation above and the fact that $E_{(x_1,x_2)a}^2\leq E_{(x_1,x_2)a}$ and $F_{(y_1, x_2)a}^2\leq F_{(y_1, x_2)a}$, we obtain
\begin{align*}
    \|(E_{(x_1,x_2)a}\otimes 1)\ket{\psi}- (1 \otimes F_{(y_1, x_2)a})\ket{\psi}\|^2&=\bra{\psi}E_{(x_1,x_2)a}^2 \otimes 1\ket{\psi}+\bra{\psi}1 \otimes F_{(y_1,x_2)a}^2 \ket{\psi}\\
    &\quad -2\bra{\psi}E_{(x_1,x_2)a} \otimes F_{(y_1,x_2)a} \ket{\psi}\\
    &\leq \bra{\psi}E_{(x_1,x_2)a} \otimes 1\ket{\psi}+\bra{\psi}1 \otimes F_{(y_1,x_2)a}\ket{\psi}\\
    &\quad -2\bra{\psi}E_{(x_1,x_2)a} \otimes F_{(y_1,x_2)a} \ket{\psi}\\
    &=0.
\end{align*}
This yields $(E_{(x_1,x_2)a}\otimes 1)\ket{\psi}=(1 \otimes F_{(y_1, x_2)a})\ket{\psi}$ and thus proves part $(i)$. Since this equation holds for every $x_1\in I_{A,1}$, we get $(E_{(x_1,x_2)a}\otimes 1)\ket{\psi}=(E_{(x_3,x_2)a}\otimes 1)\ket{\psi}$, $x_1, x_3\in I_{A,1}$ and similarly $(1\otimes F_{(y_1,x_2)a}) \ket{\psi}=(1 \otimes F_{(y_3,x_2)a}) \ket{\psi}$ for $y_1,y_3\in I_{B,1}$ which yields $(ii)$. Part $(iii)$ follows from $(ii)$ since $\ket{\psi}\bra{\psi}$ is invertible if $\ket{\psi}$ has full Schmidt rank. 
\end{proof}

We can now present the non-robust self-test. 

\begin{theorem}\label{thm:nonrobustselftest}
Let $\mathcal{G}_1$ be a nonlocal game that has no perfect quantum strategy, but there exists a sequence of quantum strategies whose winning probabilities converge to $1$. Let $\mathcal{G}_2$ be a synchronous pseudo-telepathy game that self-tests a strategy $S_2=(\tilde{\ket{\psi}} \in \tilde{\mathcal{H}}_A\otimes \tilde{\mathcal{H}}_B, \{\hat{E}_{x_2a}\}_{x_2},\{\hat{F}_{y_2b}\}_{y_2})$. In this case the $(\mathcal{G}_1 \lor \mathcal{G}_2)$-game is a non-robust self-test for the strategy $\tilde{S}_2=(\tilde{\ket{\psi}},\{\tilde{E}_{(x_1,x_2)a}\}_{(x_1,x_2)},$ $\{\tilde{F}_{(y_1,y_2)b}\}_{(y_1,y_2)})$, where
\begin{align*}
    \tilde{E}_{(x_1,x_2)a}=\begin{cases}\hat{E}_{x_2a} \text{ for } a \in O_{A,2},\\ 0 \text{ otherwise,}\end{cases}
    \tilde{F}_{(y_1,y_2)b}=\begin{cases}\hat{F}_{y_2b} \text{ for } b \in O_{B,2},\\ 0 \text{ otherwise.}\end{cases}
\end{align*}
\end{theorem}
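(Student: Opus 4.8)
The plan is to establish two things: first that the $(\mathcal{G}_1\lor\mathcal{G}_2)$-game self-tests $\tilde S_2$, and second that this self-test fails to be robust. For the self-testing part, let $S=(\ket\psi,\{E_{(x_1,x_2)a}\},\{F_{(y_1,y_2)b}\})$ be any optimal (hence perfect, by Lemma \ref{G1orG2pseudotele}) quantum strategy for the $(\mathcal{G}_1\lor\mathcal{G}_2)$-game. Since $\mathcal{G}_1$ has no perfect quantum strategy, Lemma \ref{perfectqstratG1} forces $\bra\psi E_{(x_1,x_2)a}\otimes F_{(y_1,y_2)b}\ket\psi=0$ for all $a\in O_{A,1},b\in O_{B,1}$; combined with the fact that cross terms ($a\in O_{A,i},b\in O_{B,j}$ with $i\neq j$) vanish for any perfect strategy, this will let me argue (as in the proof of Lemma \ref{perfectqstratG1}) that the $\mathcal{G}_1$-outcomes act as zero on $\mathrm{supp}_A(\ket\psi)\otimes\mathrm{supp}_B(\ket\psi)$. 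So after passing to a full-Schmidt-rank local dilation via Lemma \ref{lem:DilationfullSchmidtrank} (the needed invariance of the supports follows from Lemma \ref{lem:invariance} applied to the relation $(E\otimes 1)\ket\psi=(1\otimes F)\ket\psi$ one gets from synchronicity of $\mathcal{G}_2$ and the perfect strategy, cf. Lemma \ref{samefordifferentindex}(i)), the surviving operators $\{E_{(x_1,x_2)a}\mid a\in O_{A,2}\}$ and $\{F_{(y_1,y_2)b}\mid b\in O_{B,2}\}$ are genuine POVMs.

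Next I would use Lemma \ref{samefordifferentindex}(iii): on the full-Schmidt-rank dilation, $E_{(x_1,x_2)a}$ is independent of $x_1$ and $F_{(y_1,y_2)b}$ independent of $y_1$, so write $E_{x_2a}:=E_{(x_1,x_2)a}$ and $F_{y_2b}:=F_{(y_1,y_2)b}$. These, together with $\ket\psi$, now form a perfect quantum strategy for $\mathcal{G}_2$ itself: for inputs $(x_2,y_2)$ the win condition of the $(\mathcal{G}_1\lor\mathcal{G}_2)$-game restricted to $\mathcal{G}_2$-outputs is exactly $V_{\mathcal{G}_2}$, and $\mathcal{G}_2$ being pseudo-telepathy means this perfect $\mathcal{G}_2$-strategy is optimal. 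Since $\mathcal{G}_2$ self-tests $S_2$, there are local isometries $U_A,U_B$ and an $\ket{aux}$ realizing the local-dilation equations for the $\mathcal{G}_2$-data. These same isometries witness that $\tilde S_2$ is a local dilation of the full-Schmidt-rank strategy: equations \eqref{selfteststate}–\eqref{selftestmB} for $\mathcal{G}_2$-outputs are precisely the $\mathcal{G}_2$ self-test equations, and for $\mathcal{G}_1$-outputs both sides are zero since $E_{(x_1,x_2)a}=0=\tilde E_{(x_1,x_2)a}$ there. Composing with the dilation from Lemma \ref{lem:DilationfullSchmidtrank} via transitivity (Lemma \ref{transitivity}) gives that $\tilde S_2$ is a local dilation of the original $S$, i.e. the $(\mathcal{G}_1\lor\mathcal{G}_2)$-game self-tests $\tilde S_2$.

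For non-robustness, I would exhibit a sequence of strategies approaching the quantum value $1$ but not approaching $\tilde S_2$. Take the sequence $(S_1^{(n)})_n$ of $\mathcal{G}_1$-strategies with winning probabilities $p_n\to 1$, and define a $(\mathcal{G}_1\lor\mathcal{G}_2)$-strategy $S^{(n)}$ in which Alice and Bob always choose to answer the $\mathcal{G}_1$-question and play $S_1^{(n)}$; its winning probability is $p_n\to 1=\omega^*(\mathcal{G}_1\lor\mathcal{G}_2)$, so $S^{(n)}$ is $\delta_n$-optimal with $\delta_n\to 0$. However, in $S^{(n)}$ all outcomes lie in $O_{A,1}\dot\cup O_{B,1}$, so $(E^{(n)}_{(x_1,x_2)a}\otimes 1)\ket{\psi^{(n)}}=0$ for every $a\in O_{A,2}$, whereas $(\tilde E_{(x_1,x_2)a}\otimes 1)\ket{\tilde\psi}\otimes\ket{aux}=(\hat E_{x_2a}\otimes 1)\ket{\tilde\psi}\otimes\ket{aux}$ is nonzero for suitable $x_2,a$ (since $\sum_a\hat E_{x_2a}=1$ and $\ket{\tilde\psi}\neq 0$). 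Hence for no isometry $U$ can $\|U(E^{(n)}_{(x_1,x_2)a}\otimes 1)\ket{\psi^{(n)}}-[(\tilde E_{x_2a}\otimes 1)\ket{\tilde\psi}]\otimes\ket{aux}\|$ be made smaller than the fixed positive quantity $\|(\hat E_{x_2a}\otimes 1)\ket{\tilde\psi}\|$, which contradicts the robust self-testing definition with $\varepsilon$ below that threshold. Thus the self-test is not robust.

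The main obstacle I anticipate is the bookkeeping in the self-testing part: carefully justifying that on a full-Schmidt-rank dilation the $\mathcal{G}_1$-outcome operators are identically zero and the $\mathcal{G}_2$-outcome operators genuinely close up to POVMs on each block, so that one really obtains a bona fide perfect $\mathcal{G}_2$-strategy to feed into the hypothesis that $\mathcal{G}_2$ self-tests $S_2$; the non-robustness direction is comparatively soft, needing only one explicit near-optimal sequence together with the observation that the self-tested strategy assigns nonzero weight to $\mathcal{G}_2$-outcomes while the near-optimal strategies assign none.
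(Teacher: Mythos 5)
Your proposal follows the paper's proof almost exactly: the same reduction via Lemma \ref{perfectqstratG1} to kill the $O_1$-outcome operators, the same passage to a full-Schmidt-rank local dilation via Lemma \ref{lem:invariance} and Lemma \ref{lem:DilationfullSchmidtrank} (using Lemma \ref{samefordifferentindex}(i) for the invariance of the supports under the $O_2$-outcome operators), the same use of Lemma \ref{samefordifferentindex}(iii) to remove the $x_1,y_1$-dependence so that the $\mathcal{G}_2$-self-test of $S_2$ can be invoked, transitivity (Lemma \ref{transitivity}) to transfer the dilation back, and the same explicit near-optimal sequence from $\mathcal{G}_1$ to refute robustness via the fixed gap $\|(\hat E_{x_2a_0}\otimes 1)\ket{\tilde\psi}\|>0$. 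One tiny inaccuracy: Lemma \ref{G1orG2pseudotele} as stated assumes $\mathcal{G}_1$ is the pseudo-telepathy game, whereas here it is $\mathcal{G}_2$; you should apply it with the roles swapped (which works since $\mathcal{G}_1$ has no perfect classical strategy either), but this does not affect the argument.
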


\begin{proof}
Let $S'=(\ket{\psi'},\{E'_{(x_1,x_2)a}\}_{(x_1,x_2)},\{F'_{(y_1,y_2)b}\}_{(y_1,y_2)})$ be a perfect quantum strategy for the $(\mathcal{G}_1\lor \mathcal{G}_2)$-game. To show that the $(\mathcal{G}_1 \lor \mathcal{G}_2)$-game is a self-test for the strategy $\tilde{S}_2$, we have to prove that $\tilde{S}_2$ is a local dilation of $S'$.

By assumption, the game $\mathcal{G}_1$ has no perfect quantum strategy, thus we get $\bra{\psi'}E_{(x_1,x_2)a}'\otimes F_{(y_1,y_2)b}' \ket{\psi'}=0$ for all $a \in O_{A,1}$, $b \in O_{B,1}$ by Lemma \ref{perfectqstratG1}.  Summing over all $b \in O_{B,1}\dot\cup O_{B,2}$ and $a \in O_{A,1}\dot\cup O_{A,2}$, respectively, yields $\bra{\psi'}E_{(x_1,x_2)a}'\otimes 1 \ket{\psi'}=0$, $\bra{\psi'}1\otimes F_{(y_1,y_2)b}' \ket{\psi'}=0$ for $a \in O_{A,1}$, $b \in O_{B,1}$. Therefore, using $(E'_{(x_1,x_2)a})^2\leq E'_{(x_1,x_2)a}$, we have 
\begin{align*}
    \|(E'_{(x_1,x_2)a}\otimes 1)\ket{\psi'}\|^2&=\bra{\psi'}(E'_{(x_1,x_2)a})^2\otimes 1)\ket{\psi'}\\
    &\leq \bra{\psi'}E'_{(x_1,x_2)a}\otimes 1)\ket{\psi'}\\
    &=0. 
\end{align*}
We conclude $(E'_{(x_1,x_2)a}\otimes 1)\ket{\psi'}=0$ for all $a\in O_{A,1}$. We similarly get $(1 \otimes F'_{(y_1,y_2)b})\ket{\psi'}=0$ for all $b\in O_{B,1}$. This especially yields that $\mathrm{supp}_A(\ket{\psi'})$ and $\mathrm{supp}_B(\ket{\psi'})$ are invariant under those $E_{(x_1,x_2)a}'$  and $F_{(y_1,y_2)b}'$, respectively. By Lemma \ref{samefordifferentindex} $(i)$, we know $(E'_{(x_1,x_2)a}\otimes 1)\ket{\psi'}=(1 \otimes F'_{(y_1, x_2)a})\ket{\psi'}$ for all $x_1\in I_{A,1}, y_1\in I_{B,1}$, $x_2 \in I_2$ and $a\in O_2$. Therefore, Lemma \ref{lem:invariance} yields that $\mathrm{supp}_A(\psi')$ is invariant under $E_{(x_1,x_2)a}'$ and $\mathrm{supp}_B(\psi')$ is invariant under $F_{(y_1,y_2)b}'$ in those cases. We conclude that $\mathrm{supp}_A(\ket{\psi'})$ and $\mathrm{supp}_B(\ket{\psi'})$ are invariant under all $E_{(x_1,x_2)a}'$  and $F_{(y_1,y_2)b}'$, respectively. Lemma \ref{lem:DilationfullSchmidtrank} now yields that there exists a perfect quantum strategy  $S=(\ket{\psi},\{E_{(x_1,x_2)a}\}_{(x_1,x_2)},\{F_{(y_1,y_2)b}\}_{(y_1,y_2)})$ such that $S$ is a local dilation of $S'$ and $\ket{\psi}$ has full Schmidt rank, \emph{i.e.} $\ket{\psi}=\sum_{i=1}^d\lambda_i e_i \otimes e_i$ with $\lambda_i >0$ for all $i \in [d]$.

Let $\varphi=\mathrm{diag}(\lambda_i)$, where $\lambda_i$ are the Schmidt coefficients of $\ket{\psi}$. Once again, by Lemma \ref{perfectqstratG1}, we have $\bra{\psi}E_{(x_1,x_2)a}\otimes F_{(y_1,y_2)b} \ket{\psi}=0$ for all $a \in O_{A,1}$, $b \in O_{B,1}$. Summing over all $b \in O_{B,1}\dot\cup O_{B,2}$ and $a \in O_{A,1}\dot\cup O_{A,2}$, respectively, yields
\begin{align*}
    \mathrm{Tr}(E_{(x_1,x_2)a}\varphi^2)&=\bra{\psi}E_{(x_1,x_2)a}\otimes 1 \ket{\psi}=0,\\
    \mathrm{Tr}((\varphi^*)^2 (F_{(y_1,y_2)b})^{\tp})&=\bra{\psi}1\otimes F_{(y_1,y_2)b} \ket{\psi}=0
\end{align*}
for $a \in O_{A,1}$, $b \in O_{B,1}$. Since $\ket{\psi}$ has full Schmidt rank, we obtain $E_{(x_1,x_2)a}=0$ and $F_{(y_1,y_2)b}=0$ for all $a \in O_{A,1}$, $b \in O_{B,1}$. Therefore, $\{E_{(x_1,x_2)a}\,|\, a \in O_{A,2}\}$ and $\{F_{(y_1,y_2)b}\,|\, b \in O_{B,2}\}$ are POVMs. For fixed $x_1 \in I_{A,1}$, $y_1 \in I_{B,1}$, we get that $(\ket{\psi},\{E_{(x_1,x_2)a}\}_{(x_1,x_2)},\allowbreak\{F_{(y_1,y_2)b}\}_{(y_1,y_2)})$ is a perfect quantum strategy for $\mathcal{G}_2$.
Because $\mathcal{G}_2$ is a self-test for $S_2$, we know that for fixed $x_1,y_1$ there exists
Hilbert spaces $\mathcal{H}_{A,aux}$ and $\mathcal{H}_{B,aux}$, a state $\ket{aux}\in \mathcal{H}_{A,aux}\otimes\mathcal{H}_{B,aux}$ and isometries $U_A: \mathcal{H}_A \to \tilde{\mathcal{H}}_A \otimes \mathcal{H}_{A,aux}$, $U_B:\mathcal{H}_B \to \tilde{\mathcal{H}}_B \otimes \mathcal{H}_{B,aux}$ such that with $U:=U_A \otimes U_B$ it holds
\begin{align*}
    U\ket{\psi}&= \ket{\tilde{\psi}}\otimes \ket{aux},\\
    U(E_{(x_1,x_2)a}\otimes 1)\ket{\psi}&=[(\hat{E}_{x_2a}\otimes 1)\tilde{\ket{\psi}}]\otimes \ket{aux},\\
    U(1\otimes F_{(y_1,y_2)b} )\ket{\psi}&=[(1\otimes \hat{F}_{(y_1,y_2)b})\tilde{\ket{\psi}}]\otimes \ket{aux}.
\end{align*}
Note that from Lemma \ref{samefordifferentindex} $(iii)$, we have $E_{(x_1,x_2)a}=E_{(x_3,x_2)a}$ for $x_1,x_3\in I_{A,1}$, $x_2 \in I_{2}$, $a\in O_2$ and $F_{(y_1,y_2)b}=F_{(y_3,y_2)b}$ for $y_1,y_3\in I_{B,1}$, $y_2 \in I_{2}$, $b \in O_2$. Thus, we get 
\begin{align*}
    U(E_{(x_3,x_2)a}\otimes 1)\ket{\psi}&=U(E_{(x_1,x_2)a}\otimes 1)\ket{\psi}\\
    &=[(\hat{E}_{x_2a}\otimes 1)\tilde{\ket{\psi}}]\otimes \ket{aux}\\
    &=[(\tilde{E}_{(x_3,x_2)a}\otimes 1)\tilde{\ket{\psi}}]\otimes \ket{aux}
\end{align*}
for all $(x_3,x_2)\in I_{A,1}\times I_{A,2}$, $a \in O_2$. We similarly get 
 $U(1\otimes F_{(y_3,y_2)b} )\ket{\psi}=[(1\otimes \tilde{F}_{(y_3,y_2)b})\tilde{\ket{\psi}}]\otimes \ket{aux}$ for all $(y_3,y_2)\in I_{B,1}\times I_{B,2}$, $a \in O_2$. Since we know $E_{(x_3,x_2)a}=0$ and $F_{(y_3,y_2)b}=0$ for all $a \in O_{1}$, $b \in O_{1}$, we deduce that $\tilde{S_2}$ is a local dilation of $S$.
Summarizing, we have that $\tilde{S_2}$ is a local dilation of $S$ and $S$ is a local dilation of $S'$. By Lemma \ref{transitivity}, we get that $\tilde{S_2}$ is a local dilation of $S'$, thus the $(\mathcal{G}_1 \lor \mathcal{G}_2)$-game is a self-test for $\tilde{S}_2$. 
 
 It remains to show that this self-test is not robust. Since there exists a sequence of quantum strategies for $\mathcal{G}_1$ whose winning probability converges to $1$, for every $\delta>0$ there is a quantum strategy $\hat{S}_{\delta}=(\ket{\psi^{(\delta)}},\{\hat{E}^{(\delta)}_{x_1a}\}_{x_1},\{\hat{F}^{(\delta)}_{y_1b}\}_{y_1}))$ with winning probability at least $1-\delta$. By defining
 \begin{align*}
    E^{(\delta)}_{(x_1,x_2)a}:=\begin{cases}\hat{E}^{(\delta)}_{x_1a} \text{ for } a \in O_{A,1},\\ 0 \text{ otherwise,}\end{cases}
    F^{(\delta)}_{(y_1,y_2)b}:=\begin{cases}\hat{F}^{(\delta)}_{y_1b} \text{ for } b \in O_{B,1},\\ 0 \text{ otherwise,}\end{cases}
\end{align*}
we obtain a strategy $S_{\delta}=(\ket{\psi^{(\delta)}},\{E^{(\delta)}_{(x_1,x_2)a}\}_{(x_1,x_2)},\{F^{(\delta)}_{(y_1,y_2)b}\}_{(y_1,y_2)}))$ for the $(\mathcal{G}_1 \lor \mathcal{G}_2)$-game with the same winning probability as $\hat{S}^{(\delta)}$ (\emph{i.e.} at least $1-\delta$). Since we have $E^{(\delta)}_{(x_1,x_2)a}=0$ for all $a \in O_{A,2}$ and all $\delta>0$, we see that
\begin{align*}
    \|U(E^{(\delta)}_{(x_1,x_2)a} \otimes 1)\ket{\psi^{(\delta)}}-[(\tilde{E}_{(x_1,x_2)a}\otimes 1)\tilde{\ket{\psi}}]\otimes \ket{aux'} \|=\|[(\tilde{E}_{(x_1,x_2)a}\otimes 1)\tilde{\ket{\psi}}]\otimes \ket{aux'}\|
\end{align*}
for all $a \in O_{A,2}$, all $\delta>0$, all suitable isometries $U$ and all auxiliary states $\ket{aux'}$. Since we have $\tilde{E}_{(x_1,x_2)a}=0$ for all $a \in O_{A,1}$, we know that there is $a_0 \in O_{A,2}$ such that $\tilde{E}_{(x_1,x_2)a_0}\neq 0$ and thus 
\begin{align*}
    \|[(\tilde{E}_{(x_1,x_2)a_0}\otimes 1)\tilde{\ket{\psi}}]\otimes \ket{aux'}\|>\varepsilon'
\end{align*}
for some $\varepsilon'>0$. Summarizing, we found an $\varepsilon'>0$ such that for all $\delta>0$, we have a $\delta$-optimal strategy $S^{(\delta)}$ such that 
\begin{align*}
       \|U(E^{(\delta)}_{(x_1,x_2)a_0} \otimes 1)\ket{\psi^{(\delta)}}-[(\tilde{E}_{(x_1,x_2)a_0}\otimes 1)\tilde{\ket{\psi}}]\otimes \ket{aux'} \|>\varepsilon'
\end{align*}
for all isometries $U$. This shows that the $(\mathcal{G}_1\lor \mathcal{G}_2)$-game is a non-robust self-test for $\tilde{S}_2$.
\end{proof}

\begin{example}\label{ex:nonrobust}
By \cite{SlofstraQcor}, there exists a linear constraint system game that has no perfect quantum strategy, but a sequence of strategies whose winning probabilities converge to $1$. The proof is constructive, the linear system has $184$ equations and $235$ variables. Let $\mathcal{G}_1$ be this linear constraint system game. We have $|I_{A,1}|=184$, $|I_{B,1}|=235$ and $|O_{A,1}|=8$, $|O_{B,1}|=2$. We let $\mathcal{G}_2$ be the synchronous version of the magic square game (see Subsection \ref{syncmagicsquare}), for which we know that it is synchronous pseudo-telepathy game. We have $|I_{A,2}|=|I_{B,2}|=6$ and $|O_{A,2}|=|O_{B,2}|=8$. By Corollary \ref{cor:Selftestsyncmagicsquare}, we know that it self-tests the perfect quantum strategy $S_2$. Theorem \ref{thm:nonrobustselftest} shows that the $(\mathcal{G}_1\lor \mathcal{G}_2)$-game is a non-robust self-test for the strategy $\tilde{S}_2$. For this game, we have $|I_{A}|=1104$, $|I_B|=1410$ and $|O_{A}|=16$, $|O_B|=10$. 
\end{example}

\section{Games that do not self-test states}\label{sect:gamesnotselfteststates}

We will now construct games that do not self-test any state. We will once more use the $(\mathcal{G}_1\lor \mathcal{G}_2)$-game. We first show that a game does not self-test any state if it has two optimal strategies using states of coprime Schmidt rank. 

\begin{lemma}\label{nonselftest}
Let $\mathcal{G}$ be a nonlocal game such that $\omega^*(\mathcal{G})>\omega(\mathcal{G})$. Let 
\[S_1=\left(\ket{\psi_1}\in \mathcal{H}_A^{(1)}\otimes \mathcal{H}_B^{(1)} , \{E_{xa}^{(1)}\}_x, \{F_{yb}^{(1)}\}_y\right), \quad
S_2=\left(\ket{\psi_2} \in \mathcal{H}_A^{(2)}\otimes \mathcal{H}_B^{(2)}, \{E_{xa}^{(2)}\}_x, \{F_{yb}^{(2)}\}_y\right)\]
be two optimal quantum strategies. If the Schmidt ranks of $\ket{\psi_1}$ and $\ket{\psi_2}$ are coprime (i.e. $\mathrm{gcd}(n_1, n_2)=1$ for Schmidt ranks $n_1, n_2$), then $\mathcal{G}$ does not self-test any state $\ket{\tilde{\psi}}$ of an optimal quantum strategy.
\end{lemma}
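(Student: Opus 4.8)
The plan is to argue by contradiction, extracting from the hypothetical self-test a divisibility constraint on the Schmidt rank of the self-tested state and then ruling out the only surviving possibility. Suppose $\mathcal{G}$ self-tests the state $\ket{\tilde\psi}$ of some optimal quantum strategy $\tilde S=(\ket{\tilde\psi}\in\tilde{\mathcal H}_A\otimes\tilde{\mathcal H}_B,\{\tilde E_{xa}\}_x,\{\tilde F_{yb}\}_y)$. Applying the definition of state self-testing to each of the two optimal strategies $S_1$ and $S_2$ yields, for $i\in\{1,2\}$, auxiliary Hilbert spaces, an auxiliary state $\ket{aux_i}$, and local isometries $U_A^{(i)},U_B^{(i)}$ such that, writing $U^{(i)}=U_A^{(i)}\otimes U_B^{(i)}$,
\[
U^{(i)}\ket{\psi_i}=\ket{\tilde\psi}\otimes\ket{aux_i}.
\]

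First I would record two elementary facts about the Schmidt rank $\mathrm{SR}(\cdot)$: it is invariant under local isometries, and it is multiplicative under tensor products taken with respect to the same bipartition. For the latter, if $\ket{\phi}=\sum_k\sigma_k\,a_k\otimes b_k$ and $\ket{\chi}=\sum_\ell\tau_\ell\,c_\ell\otimes d_\ell$ are Schmidt decompositions, then $\{a_k\otimes c_\ell\}_{k,\ell}$ and $\{b_k\otimes d_\ell\}_{k,\ell}$ are orthonormal and the coefficients $\sigma_k\tau_\ell$ are strictly positive, so—using the identification of Hilbert spaces from the Remark following the definition of local dilation, which groups Alice's two registers together and Bob's two registers together—the state $\ket{\tilde\psi}\otimes\ket{aux_i}$ has Schmidt rank $\mathrm{SR}(\ket{\tilde\psi})\cdot\mathrm{SR}(\ket{aux_i})$. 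Since $U_A^{(i)},U_B^{(i)}$ are isometries, $U^{(i)}\ket{\psi_i}$ has the same Schmidt rank $n_i$ as $\ket{\psi_i}$. Hence $\mathrm{SR}(\ket{\tilde\psi})$ divides $n_i$ for $i=1,2$, so it divides $\gcd(n_1,n_2)=1$, forcing $\mathrm{SR}(\ket{\tilde\psi})=1$; that is, $\ket{\tilde\psi}=\ket{\alpha}\otimes\ket{\beta}$ is a product state.

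The final step is to contradict optimality of $\tilde S$. Because $\ket{\tilde\psi}$ is a product state, the correlation produced by $\tilde S$ factorizes, $\bra{\tilde\psi}(\tilde E_{xa}\otimes\tilde F_{yb})\ket{\tilde\psi}=\bra{\alpha}\tilde E_{xa}\ket{\alpha}\,\bra{\beta}\tilde F_{yb}\ket{\beta}$, into a product of local probability distributions. Such a correlation is realized by a classical strategy—Alice samples $a$ from $\bra{\alpha}\tilde E_{x\,\cdot}\ket{\alpha}$ using private randomness, Bob samples $b$ from $\bra{\beta}\tilde F_{y\,\cdot}\ket{\beta}$—so the winning probability of $\tilde S$ is at most $\omega(\mathcal{G})$. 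But $\tilde S$ is an optimal quantum strategy, so its winning probability equals $\omega^*(\mathcal{G})$, giving $\omega^*(\mathcal{G})\le\omega(\mathcal{G})$ and contradicting the hypothesis $\omega^*(\mathcal{G})>\omega(\mathcal{G})$. Therefore no state of an optimal quantum strategy can be self-tested by $\mathcal{G}$.

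I expect the only mildly delicate point to be the bookkeeping of tensor factors in the multiplicativity of the Schmidt rank: after applying $U_A^{(i)}$ and $U_B^{(i)}$ the state lives in $(\tilde{\mathcal H}_A\otimes\mathcal H_{A,aux}^{(i)})\otimes(\tilde{\mathcal H}_B\otimes\mathcal H_{B,aux}^{(i)})$ with Alice's and Bob's registers correctly grouped, which is precisely the identification fixed in the Remark, and one should be explicit that the Schmidt decomposition is taken across the $A$ versus $B$ cut. Everything else follows directly from the definitions of local dilation, state self-testing, and the classical value.
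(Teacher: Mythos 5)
Your proposal is correct and follows essentially the same approach as the paper: both arguments rest on the invariance of Schmidt rank under local isometries and its multiplicativity under tensor products across the $A/B$ cut, and both derive a divisibility constraint that is incompatible with coprimality. The only cosmetic difference is the ordering: the paper asserts up front that the self-tested state must have Schmidt rank $d>1$ (briefly noting that otherwise $\omega^*=\omega$) and then obtains a contradiction from $d\mid n_1$, $d\mid n_2$, whereas you first deduce $\mathrm{SR}(\ket{\tilde\psi})=1$ from $\gcd(n_1,n_2)=1$ and then spell out the (correct) argument that a product-state strategy yields a local correlation, contradicting $\omega^*(\mathcal G)>\omega(\mathcal G)$.
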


\begin{proof}
We prove this lemma by contradiction. Assume that $\mathcal{G}$ self-tests the state $\ket{\tilde{\psi}}\in \tilde{\mathcal{H}}_A\otimes \tilde{\mathcal{H}}_B$. The state $\ket{\tilde{\psi}}$ has Schmidt rank $d>1$ as otherwise the classical value and the quantum value of $\mathcal{G}$ coincide. Let $n_1, n_2$ be the Schmidt ranks of the states $\ket{\psi_1}\in \mathcal{H}_A^{(1)}\otimes \mathcal{H}_B^{(1)}$ and $\ket{\psi_2}\in \mathcal{H}_A^{(2)}\otimes \mathcal{H}_B^{(2)}$. Since $\mathcal{G}$ self-tests $\ket{\tilde{\psi}}$, we get isometries $U_A^{(i)}: \mathcal{H}_A^{(i)} \to \tilde{\mathcal{H}}_A \otimes \mathcal{H}_{A,aux}^{(i)}$, $U_B:\mathcal{H}_B^{(i)} \to \tilde{\mathcal{H}}_B \otimes \mathcal{H}_{B,aux}^{(i)}$ and states $\ket{aux_i}\in \mathcal{H}_{A,aux}^{(i)}\otimes\mathcal{H}_{B,aux}^{(i)}$, $i=1,2$, such that with $U_i=U_A^{(i)}\otimes U_B^{(i)}$ it holds that
\begin{align*}
    U_1\ket{\psi_1}= \ket{\tilde{\psi}}\otimes \ket{aux_1}
    \intertext{and}
    U_2\ket{\psi_2}= \ket{\tilde{\psi}}\otimes \ket{aux_2}.
\end{align*}
Note that since we have $U_i=U_A^{(i)}\otimes U_B^{(i)}$, the states $U_i\ket{\psi_i}$ have the same the Schmidt rank with respect to the bipartition $\tilde{\mathcal{H}}_A \otimes \mathcal{H}_{A,aux}^{(i)}$, $\tilde{\mathcal{H}}_B \otimes \mathcal{H}_{B,aux}^{(i)}$ as $\ket{\psi_i}$ with respect to $\mathcal{H}_A^{(i)}$, $\mathcal{H}_B^{(i)}$. Furthermore, the Schmidt rank of $\ket{\tilde{\psi}}\otimes \ket{aux_i}$ with respect to the bipartition $\tilde{\mathcal{H}}_A \otimes \mathcal{H}_{A,aux}^{(i)}$, $\tilde{\mathcal{H}}_B \otimes \mathcal{H}_{B,aux}^{(i)}$ is equal to $d c_i$ for $c_i$ being the Schmidt rank of $\ket{aux_i}$. Thus, comparing the Schmidt ranks with respect to $\tilde{\mathcal{H}}_A \otimes \mathcal{H}_{A,aux}^{(i)}$, $\tilde{\mathcal{H}}_B \otimes \mathcal{H}_{B,aux}^{(i)}$ in the equations above yields $n_1=d c_1$ and $n_2=d c_2$. This shows that $d>1$ is a common divisor of $n_1$ and $n_2$ which contradicts our assumption that $n_1$ and $n_2$ are coprime.
\end{proof}

The next theorem shows that the $(\mathcal{G}_1\lor \mathcal{G}_2)$-game has two perfect quantum strategies using states with coprime Schmidt rank if the games $\mathcal{G}_1$ and $\mathcal{G}_2$ have perfect quantum strategies with states of coprime Schmidt rank. Thus we are just left with finding two nonlocal games whose perfect quantum strategies have states with coprime Schmidt rank. 

\begin{theorem}\label{thm:2pseud}
Let $\mathcal{G}_1$ and $\mathcal{G}_2$ be pseudo-telepathy games with perfect quantum strategies 
\[S_1=\left(\ket{\psi_1}\in \mathcal{H}_A^{(1)}\otimes \mathcal{H}_B^{(1)} , \{E_{xa}^{(1)}\}_x, \{F_{yb}^{(1)}\}_y\right),
S_2=\left(\ket{\psi_2} \in \mathcal{H}_A^{(2)}\otimes \mathcal{H}_B^{(2)}, \{E_{xa}^{(2)}\}_x, \{F_{yb}^{(2)}\}_y\right),\]
respectively. Assume that $\ket{\psi_1}$ and $\ket{\psi_2}$ have coprime Schmidt ranks. Then the $(\mathcal{G}_1\lor \mathcal{G}_2)$-game does not self-test any state. 
\end{theorem}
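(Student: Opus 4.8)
The plan is to exhibit two perfect quantum strategies for the $(\mathcal{G}_1 \lor \mathcal{G}_2)$-game whose shared states have coprime Schmidt ranks, and then invoke Lemma~\ref{nonselftest}. First I would record that the $(\mathcal{G}_1 \lor \mathcal{G}_2)$-game has a genuine quantum advantage: since $\mathcal{G}_1$ is pseudo-telepathy and $\omega(\mathcal{G}_2) < 1$, Lemma~\ref{G1orG2pseudotele} applies and tells us that $(\mathcal{G}_1 \lor \mathcal{G}_2)$ is itself pseudo-telepathy, so $\omega^*(\mathcal{G}_1 \lor \mathcal{G}_2) = 1 > \omega(\mathcal{G}_1 \lor \mathcal{G}_2)$. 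This is exactly the hypothesis $\omega^*(\mathcal{G}) > \omega(\mathcal{G})$ needed to apply Lemma~\ref{nonselftest}.

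Next I would build the two strategies. From $S_1$ define a strategy $\tilde{S}_1 = (\ket{\psi_1}, \{\tilde{E}_{(x_1,x_2)a}\}_{(x_1,x_2)}, \{\tilde{F}_{(y_1,y_2)b}\}_{(y_1,y_2)})$ for the $(\mathcal{G}_1 \lor \mathcal{G}_2)$-game by setting $\tilde{E}_{(x_1,x_2)a} = E^{(1)}_{x_1 a}$ for $a \in O_{A,1}$ and $\tilde{E}_{(x_1,x_2)a} = 0$ for $a \in O_{A,2}$, and analogously for Bob; that is, both players ignore the $\mathcal{G}_2$-question, always ``choose'' game $\mathcal{G}_1$, and play $S_1$. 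Since $\sum_{a \in O_{A,1}} E^{(1)}_{x_1 a} = 1$, these are genuine POVMs, so $\tilde{S}_1$ is a valid quantum strategy whose shared state $\ket{\psi_1}$ has Schmidt rank $n_1$. For a question pair $((x_1,x_2),(y_1,y_2))$ both players answer from game~$1$, and the probability of answers $(a,b)$ equals $\bra{\psi_1} E^{(1)}_{x_1 a} \otimes F^{(1)}_{y_1 b} \ket{\psi_1}$, which vanishes unless $V_{\mathcal{G}_1}(x_1,y_1,a,b) = 1$ because $S_1$ is perfect for $\mathcal{G}_1$; by the definition of the verification function $V$ of the $(\mathcal{G}_1 \lor \mathcal{G}_2)$-game this means condition (1) (same game) and condition (2) (winning answers) are always met, so $\tilde{S}_1$ is perfect. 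The same construction applied to $S_2$ produces a perfect strategy $\tilde{S}_2$ whose state $\ket{\psi_2}$ has Schmidt rank $n_2$.

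Finally, since $\omega^*(\mathcal{G}_1 \lor \mathcal{G}_2) = 1$, both $\tilde{S}_1$ and $\tilde{S}_2$ are optimal quantum strategies for the $(\mathcal{G}_1 \lor \mathcal{G}_2)$-game, and by hypothesis $\mathrm{gcd}(n_1,n_2) = 1$. Lemma~\ref{nonselftest} then applies verbatim and yields that the $(\mathcal{G}_1 \lor \mathcal{G}_2)$-game does not self-test any state, which is the claim.

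I do not expect a serious obstacle here. The only points requiring care are checking that the two ``play only one of the two games'' strategies are legitimate perfect quantum strategies for the combined game (immediate from the definition of the verification function $V$ and perfectness of $S_1$, $S_2$) and confirming the quantum-over-classical gap (which is Lemma~\ref{G1orG2pseudotele}). Everything else is a direct application of Lemma~\ref{nonselftest}.
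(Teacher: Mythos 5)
Your proposal is correct and takes essentially the same route as the paper: construct the two ``play only $\mathcal{G}_i$'' strategies from $S_1$ and $S_2$, observe they are perfect for the $(\mathcal{G}_1 \lor \mathcal{G}_2)$-game, and invoke Lemma~\ref{nonselftest}. The only difference is that you are slightly more careful than the paper in explicitly verifying the hypothesis $\omega^*(\mathcal{G}_1 \lor \mathcal{G}_2) > \omega(\mathcal{G}_1 \lor \mathcal{G}_2)$ of Lemma~\ref{nonselftest} via Lemma~\ref{G1orG2pseudotele}, a step the paper leaves implicit.
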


\begin{proof}
We get perfect quantum strategies $\tilde{S_1}=(\ket{\psi_1}, \{\tilde{E}_{(x_1,x_2)a}^{(1)}\}_{(x_1,x_2)},\allowbreak \{\tilde{F}_{(y_1,y_2)b}^{(1)}\}_{(y_1,y_2)})$ and $\tilde{S_2}=(\ket{\psi_2}, \{\tilde{E}_{(x_1,x_2)a}^{(2)}\}_{(x_1,x_2)}, \{\tilde{F}_{(y_1,y_2)b}^{(2)}\}_{(y_1,y_2)})$ for the $(\mathcal{G}_1\lor \mathcal{G}_2)$-game by defining 
\begin{align*}
    \tilde{E}_{(x_1,x_2)a}^{(i)}:=\begin{cases}E_{x_i a} \text{ for } a \in O_{A,i},\\0 \text{ otherwise,}\end{cases}
    \tilde{F}_{(y_1,y_2)b}^{(i)}:=\begin{cases}F_{y_i a} \text{ for } a \in O_{B,i},\\0 \text{ otherwise,}\end{cases}
\end{align*}
for $i=1,2$. Since $\ket{\psi_1}$ and $\ket{\psi_2}$ have coprime Schmidt rank, Lemma \ref{nonselftest} shows that the $(\mathcal{G}_1\lor \mathcal{G}_2)$-game does not self-test any state. 
\end{proof}

In the following, we will construct an explicit game fulfilling the conditions of Theorem \ref{thm:2pseud}. Note that if we look at strategies involving maximally entangled states on $\C^{d_i\times d_i}$, then it suffices to find two such strategies with coprime $d_1$ and $d_2$. In our case, we will have $d_1=3$, $d_2=4$. We can get a perfect quantum strategy with $d_2=4$ from the magic square game (see Subsection \ref{syncmagicsquare}). For $d_1=3$, we will use an independent set game with a graph coming from a $3$-dimensional weak Kochen-Specker set. This will be explained in the next subsections. 

\subsection{Independent set game}\label{sect:indepset}

In this subsection we discuss the independent set game, which was introduced in \cite{Qhom}. We will see the connection between quantum independent sets and perfect quantum strategies for the game. For us, a graph $G$ is always finite, simple and undirected. Thus, it consists of a finite vertex set $V(G)$ and an edge set $E(G)$ which is a set of unordered pairs of vertices.  

\begin{definition}
Let $G$ be a graph. An independent set of size $t$ in the graph $G$ is a set of vertices $\{v_1, \dots, v_t\} \in V(G)$ such that $(v_i, v_j)\notin E(G)$ for all $i\neq j$. The \emph{independence number} $\alpha(G)$ denotes the size of the largest independent set in $G$. 
\end{definition}

For a natural number $t\in \N$ and a graph $G$, the $(G,t)$-\emph{independent set game} is played with two players Alice and Bob, and a referee. Alice and Bob try to convince the referee that they know an independent set of size $t$ of the graph $G$. The game is played as follows. The referee sends the players natural numbers $x_A, x_B \in [t]$ and the players answer with vertices $v_A, v_B \in V(G)$. In order to win the $(G,t)$-independent set game, the following conditions must be met:
\begin{itemize}
\item[(1)] If $x_A=x_B$, then $v_A=v_B$,
\item[(2)] If $x_A\neq x_B$, then $v_A \neq v_B$ and $(v_A, v_B) \notin E(G)$.
\end{itemize}
The players can agree on a strategy beforehand, but cannot communicate during the game. 

Note that the $(G,t)$-independent set game can be thought of as the $(K_t, \Bar{G})$ - homomorphism game, see \cite{Qhom}.

\begin{definition}\label{defqindepset}
Let $G$ be a graph. A \emph{quantum independent set} of size $t$ in $G$ is a collection $P=\{P_{xu}\}_{x\in [t],u \in V(G)}$ of projections $P_{xu}\in \C^{d\times d}$ such that
\begin{itemize}
    \item[(i)] $\sum_{u\in V(G)} P_{xu}= 1_{\C^{d\times d}}$ for all $x\in [t]$,
    \item[(ii)] $P_{xu}P_{yv}=0$ for $(u,v)\in E(G)$ and all $x,y\in [t]$,
    \item[(iii)] $P_{xu}P_{yu}=0$ for all $x\neq y$, $u\in V(G)$.
\end{itemize}
The {quantum independence number} $\alpha_q(G)$ denotes the maximum number $t$ such that there exists a quantum independent set of size $t$ in $G$.
\end{definition}

The following lemma follows from \cite[Section 2.1\& 2.2]{Qhom}.

\begin{lemma}\label{specificstrat}
Let $G$ be a graph and let $P=\{P_{xu}\}_{x\in [t],u \in V(G)}$ be a quantum independent set of size $t$ in $G$ of projections $P_{xu}\in \C^{d\times d}$. 
\begin{itemize}
    \item[(i)] The strategy $(\frac{1}{\sqrt{d}}\sum_{i=1}^d e_i \otimes e_i, \{P_{xu}\}_x, \{(P_{yv})^{\tp}\}_y)$ is a perfect quantum strategy of the $(G,t)$-independent set game.
    \item[(ii)] If $t> \alpha(G)$, then the $(G,t)$-independent set game is a pseudo-telepathy game.\label{pseudo}
\end{itemize}
\end{lemma}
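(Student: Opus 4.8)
The plan is to verify directly that the proposed strategy satisfies the two winning conditions of the $(G,t)$-independent set game with probability one, and then for part $(ii)$ to argue that no classical strategy can do the same when $t > \alpha(G)$. Throughout I will use the maximally entangled state $\ket{\psi_d} = \frac{1}{\sqrt d}\sum_{i=1}^d e_i\otimes e_i$ and the standard identity $(X\otimes 1)\ket{\psi_d} = (1\otimes X^{\tp})\ket{\psi_d}$ for any $X\in \C^{d\times d}$, together with $\bra{\psi_d}(X\otimes Y^{\tp})\ket{\psi_d} = \frac{1}{d}\mathrm{Tr}(XY)$.

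For part $(i)$, first I would confirm this is a valid quantum strategy: by condition $(i)$ of Definition \ref{defqindepset} each $\{P_{xu}\}_{u\in V(G)}$ is a PVM, and $\{(P_{yv})^{\tp}\}_v$ is a PVM as well since transposition preserves projections and sums. Next I would compute the winning probability. On question pair $(x,x)$, the players answer $(u,v)$ with probability $\bra{\psi_d}(P_{xu}\otimes (P_{xv})^{\tp})\ket{\psi_d} = \frac{1}{d}\mathrm{Tr}(P_{xu}P_{xv})$, which by orthogonality of the projections in a PVM vanishes unless $u=v$; hence condition $(1)$ is met with certainty. On question pair $(x,y)$ with $x\neq y$, the pair $(u,v)$ is produced with probability $\frac{1}{d}\mathrm{Tr}(P_{xu}P_{yv})$. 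If $u=v$ this is zero by condition $(iii)$, and if $(u,v)\in E(G)$ this is zero by condition $(ii)$; so only answer pairs satisfying condition $(2)$ occur. Therefore every question pair is won with probability one, establishing $(i)$.

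For part $(ii)$, I would show that if $t > \alpha(G)$ then there is no perfect classical (equivalently, perfect deterministic) strategy, so combined with $(i)$ the game is pseudo-telepathy. A perfect deterministic strategy assigns to each $x\in[t]$ a vertex $f(x)\in V(G)$ (Alice and Bob must use the same function, since condition $(1)$ forces agreement on equal questions). Condition $(2)$ then forces, for all $x\neq y$, that $f(x)\neq f(y)$ and $(f(x),f(y))\notin E(G)$; that is, $\{f(1),\dots,f(t)\}$ is an independent set of size $t$ in $G$. This contradicts $t > \alpha(G)$. Hence no perfect classical strategy exists, and by $(i)$ a perfect quantum strategy does, so the game is pseudo-telepathy.

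I do not expect a genuine obstacle here: the computation in $(i)$ is a routine application of the maximally-entangled-state trace identity together with the three defining properties of a quantum independent set, and $(ii)$ is an elementary counting argument once one observes that a perfect deterministic strategy is exactly an independent $t$-set. The only point requiring a little care is the reduction in $(ii)$ from classical to deterministic strategies and the observation that condition $(1)$ pins down a single common vertex function for both players; this is standard for synchronous-type games and can be invoked as in Lemma \ref{G1orG2pseudotele}.
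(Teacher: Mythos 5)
Your proof is correct. The paper does not actually prove this lemma---it simply states that it ``follows from \cite[Section 2.1 \& 2.2]{Qhom}''---so there is no in-paper argument to compare against, and your write-up usefully fills in what the paper leaves implicit. Your argument is the natural one: for part $(i)$, the computation $\bra{\psi_d}\bigl(P_{xu}\otimes (P_{yv})^{\tp}\bigr)\ket{\psi_d}=\tfrac{1}{d}\mathrm{Tr}(P_{xu}P_{yv})$ reduces each potential losing event to a vanishing trace, which you correctly handle via PVM orthogonality (for condition $(1)$ of the game) and via conditions $(ii)$ and $(iii)$ of Definition~\ref{defqindepset} (for condition $(2)$); for part $(ii)$, a perfect deterministic strategy forces Alice and Bob to share a single vertex-labelling $f:[t]\to V(G)$ by the synchronicity condition, and condition $(2)$ makes $f$ injective with non-adjacent image, i.e.\ a classical independent set of size $t$, contradicting $t>\alpha(G)$. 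The reduction from general classical strategies to deterministic ones (a perfect convex mixture forces every deterministic strategy in its support to be perfect) is indeed standard and you are right to invoke it as in Lemma~\ref{G1orG2pseudotele}. No gaps.
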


By the previous lemma, we see that quantum independent sets yield perfect quantum strategies with maximally entangled state for the independent set game. Thus, our goal is to construct a graph that has quantum independent sets with projections in odd dimension. This will be done in the next subsection using odd-dimensional Kochen-Specker sets. 

\subsection{Kochen-Specker sets}\label{sect:KS}

To get a counterexample for state self-testing, we use Kochen-Specker sets to construct an explicit independent set game having a perfect quantum strategy with a state of Schmidt rank $3$. Kochen-Specker sets are sets of vectors that provide proofs of the (Bell-)Kochen-Specker theorem \cite{Bell,KS}.
Let $S\subseteq \C^n$ be a set of vectors. A function $f:S\to \{0,1\}$ is a \emph{marking function} for $S$ if for all orthonormal bases $B\subseteq S$, we have $\sum_{v\in B}f(v)=1$.

\begin{definition}
Let $S\subseteq \C^n$ be a set of unit vectors. 
\begin{itemize}
    \item[(i)] The set $S$ is a \emph{Kochen-Specker set} if there is no marking function for $S$.
    \item[(ii)] The set $S$ is a \emph{weak Kochen-Specker set} \cite{RW} if for all marking functions $f$ for $S$ there exist orthogonal vectors $u,v \in S$ such that $f(u)=f(v)=1$.
\end{itemize}
\end{definition}

The above notions were generalized in \cite{MSS} to sets of projections. Let $\mathcal{Q}_n\subseteq \C^{n\times n}$ be the set of all $n \times n$ projections. 
A marking function $f$ for $S \subsetneq \mathcal{Q}_n$ is a function $f:S \to \{0,1\}$ such that for all $M \subseteq S$ with $\sum_{p\in M} p=1_{\C^{n\times n}}$, we have $\sum_{p \in M}f(p)=1$.

\begin{definition}\cite{MSS}
A set $S \subsetneq \mathcal{Q}_n$ is a \emph{projective Kochen-Specker set} if for all marking functions $f$ for $S$, there exists $p,p'\in S$ for which $pp'=0$ and $f(p)=f(p')=1$.
\end{definition}

Let $\{v_1, \dots, v_k\}\subseteq \C^{n}$ be a weak Kochen-Specker set. Then we get a projective Kochen-Specker set by considering the rank one projections $\{v_1v_1\ct, \dots, v_kv_k\ct\} \subseteq \C^{n\times n}$.

\begin{definition}\label{orthogonalitygraph}
Let $S$ be a projective Kochen-Specker set. Let $S_1=\{p_{11}, \dots, p_{1i_1}\}$, \dots, $S_k=\{p_{k1}, \dots, p_{ki_k}\}$ be all subsets of $S$ such that $\sum_{b \in[i_a]}p_{ab}=1$. We define the graph $G_S$ as follows: Let $V(G_S)=\{(a,b)\,|\, a\in [k], b \in [i_a]\}$, where $((a,b),(c,d))\in E(G_S)$ if and only $p_{ab}p_{cd}=0$. Note that $G_S$ is the orthogonality graph of the multiset $S_1 \dot\cup \dots \dot\cup S_k$.
\end{definition}

\begin{remark}
Note that we may have $p_{ab}=p_{cd}$ for some $a,b,c,d$ in the above definition. 
\end{remark}

The next lemma shows that given a projective Kochen-Specker set $S$ we can construct a quantum independent set in the orthogonality graph $G_S$.

\begin{lemma}\label{explicitqindepset}
Let $S \subsetneq \mathcal{Q}_n$ be a projective Kochen-Specker set. Let $S_1=\{p_{11}, \dots, p_{1i_1}\}$, \dots, $S_k=\{p_{k1}, \dots, p_{ki_k}\}$ be all subsets of $S$ such that $\sum_{b \in[i_a]}p_{ab}=1$ and $G_S$ as in Definition \ref{orthogonalitygraph}. Then the collection $Q=\{Q_{j(a,b)}\}_{j\in [k],(a,b)\in V(G_S)}$ with $Q_{j(a,b)}:=\delta_{aj}p_{ab}$ is a quantum independent set of size $k$ of $G_S$.
\end{lemma}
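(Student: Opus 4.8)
The plan is to verify directly that $Q=\{Q_{j(a,b)}\}$ satisfies the three defining conditions of a quantum independent set of size $k$ in $G_S$ (Definition \ref{defqindepset}), where each $Q_{j(a,b)} = \delta_{aj}p_{ab}$ is a projection in $\C^{n\times n}$ (it equals either $p_{ab}$, which is a projection by hypothesis, or $0$).

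First I would check condition $(i)$: fix $j \in [k]$ and compute $\sum_{(a,b)\in V(G_S)} Q_{j(a,b)} = \sum_{a\in[k]}\sum_{b\in[i_a]} \delta_{aj}p_{ab} = \sum_{b\in[i_j]} p_{jb} = 1_{\C^{n\times n}}$, where the last equality is exactly the defining property of the subset $S_j$. Next, condition $(iii)$: for $j \neq j'$ and a fixed vertex $(a,b)$, one of the Kronecker deltas $\delta_{aj}$, $\delta_{aj'}$ must vanish, so $Q_{j(a,b)}Q_{j'(a,b)} = \delta_{aj}\delta_{aj'}\, p_{ab}^2 = 0$. Finally, condition $(ii)$: take $((a,b),(c,d)) \in E(G_S)$, which by Definition \ref{orthogonalitygraph} means $p_{ab}p_{cd}=0$; then for any $j, j' \in [k]$ we have $Q_{j(a,b)}Q_{j'(c,d)} = \delta_{aj}\delta_{cj'}\, p_{ab}p_{cd} = 0$.

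Since all three conditions hold and the index set for $j$ runs over $[k]$, this exhibits a quantum independent set of size $k$, completing the proof. I do not expect any genuine obstacle here: the construction is essentially a direct unwinding of the definitions, and the only mild subtlety worth flagging is the possibility (noted in the Remark after Definition \ref{orthogonalitygraph}) that $p_{ab} = p_{cd}$ for distinct vertices — but this causes no trouble, since the verification of conditions $(ii)$ and $(iii)$ only uses the orthogonality relations encoded in $E(G_S)$ and the Kronecker deltas, never injectivity of $(a,b)\mapsto p_{ab}$. One should just be a little careful to treat the vertices as formal index pairs rather than as the projections themselves.
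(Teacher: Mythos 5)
Your proof is correct and proceeds exactly as the paper's: directly checking conditions (i)--(iii) of the definition via the Kronecker deltas and the orthogonality relations defining $E(G_S)$. The extra remark that the argument never relies on injectivity of $(a,b)\mapsto p_{ab}$ is a correct and sensible observation but adds nothing beyond the paper's argument.
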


\begin{proof}
We check conditions (i)-(iii) of Definition \ref{defqindepset}. For (i), we compute 
\begin{align*}
    \sum_{(a,b)\in V(G_S)}Q_{j(a,b)}=\sum_{b \in [i_j]} p_{jb}=1
\end{align*}
by definition of $Q$ and choice of $S_j$. Condition (ii) is fulfilled since for $((a,b),(c,d)) \in E(G_S)$, we know $Q_{j(a,b)}Q_{l(c,d)}=\delta_{aj}\delta_{cl}p_{ab}p_{cd}=0$, since $p_{ab}p_{cd}=0$ by definition of $G_S$. For (iii), we have $Q_{j(a,b)}Q_{l(a,b)}=\delta_{aj}\delta_{al}p_{ab}=0$ for $j\neq l$, since $\delta_{aj}\delta_{al}=0$ for $j\neq l$.
\end{proof}

From previous work it is already known that the size of the quantum independent set $Q$ from Lemma \ref{explicitqindepset} is larger than the independence number of the orthogonality graph. 

\begin{theorem}\cite[Theorem 3.4.4]{Scarpa}\label{biggerthanindep}
Let $S$ be a projective Kochen-Specker set and let $S_1$, \dots, $S_k$ and $G_S$ be as in Definition \ref{orthogonalitygraph}. Then $k > \alpha(G_S)$.
\end{theorem}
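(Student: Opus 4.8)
### Proof strategy for Theorem \ref{biggerthanindep}

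The plan is to show the contrapositive in spirit: from any independent set in $G_S$ of size $k$, I would manufacture a marking function for $S$ that violates the defining property of a projective Kochen-Specker set, i.e.\ a marking function $f$ with no orthogonal pair $p,p'$ satisfying $f(p)=f(p')=1$. Since $|S_1 \dot\cup \cdots \dot\cup S_k|$ spans $k$ ``bases'' (subsets summing to $1$) and each vertex of $G_S$ is one element of one such basis, the key observation is that an independent set meeting the relevant constraints picks out exactly one element from each $S_a$ in a mutually non-orthogonal way.

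First I would recall that, by Lemma \ref{explicitqindepset}, the collection $Q=\{Q_{j(a,b)}\}$ is a quantum independent set of size $k$ in $G_S$, so in particular $\alpha_q(G_S)\geq k$; but I cannot use that to bound $\alpha(G_S)$ directly, so instead I argue at the level of the classical independence number. Suppose for contradiction that $\alpha(G_S)\geq k$, and let $I=\{(a_1,b_1),\dots,(a_k,b_k)\}$ be an independent set of size $k$. The first step is to show that the indices $a_1,\dots,a_k$ are all distinct: there are exactly $k$ subsets $S_1,\dots,S_k$, so if two chosen vertices came from the same $S_a$, say $(a,b)$ and $(a,b')$ with $b\neq b'$, then since $\sum_{c} p_{ac}=1$ we have in particular $p_{ab}p_{ab'}$ summing into a resolution of identity; one needs here that distinct elements of the same orthonormal-basis-like set $S_a$ are orthogonal, hence $((a,b),(a,b'))\in E(G_S)$, contradicting independence. (If $p_{ab}=p_{ab'}$ as projections, they are the same vertex, so $b=b'$ anyway.) Thus after relabelling $I=\{(1,b_1),(2,b_2),\dots,(k,b_k)\}$, one vertex from each $S_a$.

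Next I would define $f:S\to\{0,1\}$ by setting $f(p)=1$ precisely when $p$ equals one of the chosen projections $p_{a b_a}$, and $f(p)=0$ otherwise (taking care: if a projection $p\in S$ appears in several $S_a$'s, define $f(p)=1$ iff $p=p_{ab_a}$ for \emph{some} $a$ in the chosen list). I must check two things. (1) $f$ is a marking function: for any $M\subseteq S$ with $\sum_{p\in M}p=1$, the set $M$ is one of the $S_a$'s by the very definition of the $S_a$'s as \emph{all} such subsets, so $\sum_{p\in M}f(p)=f(p_{a b_a})+\sum_{b\neq b_a}f(p_{ab})$; the first term is $1$ and the remaining terms are $0$ provided no $p_{ab}$ with $b\neq b_a$ coincides with a chosen projection $p_{c b_c}$ from a different index $c$. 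This last point is where I expect the main friction, and it is resolved exactly by independence of $I$: if $p_{ab}=p_{c b_c}$ with $(a,b)\neq(c,b_c)$ then, being equal nonzero projections, they are certainly not orthogonal, but there may still be an edge issue — one has to argue instead that this would force $p_{a b_a}$ and $p_{a b}=p_{c b_c}$ to both be in $S_a$ and distinct, hence orthogonal, hence an edge between $(a,b_a)$ and $(a,b)$, contradicting independence of $I$. So $f$ is well-defined and a marking function. (2) $f$ has no orthogonal witnessing pair: if $f(p)=f(p')=1$ then $p=p_{a b_a}$ and $p'=p_{c b_c}$ for chosen vertices $(a,b_a),(c,b_c)\in I$; since $I$ is independent, $((a,b_a),(c,b_c))\notin E(G_S)$, which by definition of $G_S$ means $p p'\neq 0$. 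Hence $f$ contradicts $S$ being a projective Kochen-Specker set, and therefore $\alpha(G_S)<k$, i.e.\ $k>\alpha(G_S)$.

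The main obstacle, as flagged, is bookkeeping around the possibility that the same projection occurs in several of the $S_a$'s (the remark just before the theorem explicitly warns $p_{ab}=p_{cd}$ can happen), so that the vertex set $V(G_S)$ is genuinely a multiset-index set rather than a set of projections; one has to be careful that ``$f$ defined on projections'' pulls back consistently to all vertices and that the marking-function condition, which is phrased on subsets $M\subseteq S$ of \emph{projections}, interacts correctly with the graph defined on \emph{vertices}. Once one is disciplined about distinguishing vertices $(a,b)$ from their underlying projections $p_{ab}$, every step reduces to: distinct elements within one $S_a$ are orthogonal (edge), and the independence of $I$ forbids both intra-$S_a$ collisions and orthogonal chosen pairs. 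This is essentially the argument of \cite[Theorem 3.4.4]{Scarpa}, and I would present it in that streamlined form.
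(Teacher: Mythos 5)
Your overall strategy is the right one (and the natural one): from an independent set of size $k$ you manufacture a marking function with no orthogonal marked pair, contradicting the defining property of a projective Kochen--Specker set. The reduction to one vertex per $S_a$, the observation that distinct projections within a single $S_a$ are mutually orthogonal (since $p+q\le 1$ forces $pq=0$ for projections), the definition of $f$, and step~(2) are all fine. The paper itself does not spell out a proof (it cites Scarpa), so I am evaluating your argument on its own terms.

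There is, however, a genuine misstep in your verification that $f$ is a marking function. You want to rule out $p_{ab}=p_{cb_c}$ for some $b\neq b_a$ and $c\neq a$, and you correctly deduce that $p_{ab}$ and $p_{ab_a}$ are distinct elements of $S_a$, hence orthogonal, hence $\{(a,b_a),(a,b)\}\in E(G_S)$. But you then claim this contradicts independence of $I$. It does not: $(a,b_a)\in I$, but $(a,b)\notin I$ (you have $b\neq b_a$, and $I$ contains exactly the vertices $(a,b_a)$). The edge set of $G_S$ lives on vertices, not on the underlying projections, so an edge to a vertex outside $I$ is harmless. The correct conclusion is one step further: since $p_{ab}=p_{cb_c}$ and $p_{ab_a}p_{ab}=0$, also $p_{ab_a}p_{cb_c}=0$, so $\{(a,b_a),(c,b_c)\}\in E(G_S)$; and \emph{both} of these vertices are in $I$, which does contradict independence. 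You had all the ingredients (and even flagged exactly this spot as the ``main friction''), but the final wiring as written appeals to the wrong edge. With that replacement the proof is complete and correct.
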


Combining Lemma \ref{specificstrat}, Lemma \ref{explicitqindepset} and Theorem \ref{biggerthanindep}, we have the following corollary. 

\begin{cor}\label{cor:indepks}
Let $S\subsetneq \mathcal{Q}_n$ be a projective Kochen-Specker set and $G_S$ as in Definition \ref{orthogonalitygraph}. Then the $(G_S, k)$-independent set game is a pseudo-telepathy game with a perfect quantum strategy using the maximally entangled state $\ket{\psi_n}=\frac{1}{\sqrt{n}}\sum_{i=1}^ne_i \otimes e_i$.
\end{cor}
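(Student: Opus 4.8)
The plan is to chain together the three preceding results, essentially as bookkeeping. First I would invoke Lemma \ref{explicitqindepset}: since $S \subsetneq \mathcal{Q}_n$ is a projective Kochen-Specker set, listing all subsets $S_1, \dots, S_k$ of $S$ with $\sum_{b\in[i_a]}p_{ab}=1$ and forming the orthogonality graph $G_S$ as in Definition \ref{orthogonalitygraph}, the collection $Q=\{Q_{j(a,b)}\}_{j\in[k],(a,b)\in V(G_S)}$ with $Q_{j(a,b)}:=\delta_{aj}p_{ab}$ is a quantum independent set of size $k$ in $G_S$. The point to record is that each $Q_{j(a,b)}$ is a projection in $\C^{n\times n}$ (it equals either $p_{ab}\in\mathcal{Q}_n$ or $0$), so the ambient dimension $d$ in Definition \ref{defqindepset} is exactly $n$.

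Next I would feed this quantum independent set into Lemma \ref{specificstrat}~$(i)$ with $t=k$ and $d=n$: that lemma yields that
\begin{align*}
\left(\tfrac{1}{\sqrt{n}}\sum_{i=1}^n e_i\otimes e_i,\ \{Q_{j(a,b)}\}_j,\ \{Q_{j(a,b)}^{\tp}\}_j\right)
\end{align*}
is a perfect quantum strategy of the $(G_S,k)$-independent set game. Since $\tfrac{1}{\sqrt{n}}\sum_{i=1}^n e_i\otimes e_i=\ket{\psi_n}$ is the maximally entangled state on $\C^n\otimes\C^n$, this is precisely the claimed strategy.

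Finally, to upgrade \emph{``has a perfect quantum strategy''} to \emph{``pseudo-telepathy''}, I would apply Theorem \ref{biggerthanindep}, which gives $k>\alpha(G_S)$, and then Lemma \ref{specificstrat}~$(ii)$, whose hypothesis $t>\alpha(G)$ is now satisfied with $t=k$ and $G=G_S$; it asserts that the $(G_S,k)$-independent set game has no perfect classical strategy. Together with the perfect quantum strategy just constructed, this shows the game is a pseudo-telepathy game, completing the proof.

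\textbf{Main obstacle.} There is essentially no hard step: the corollary is a combination of Lemmas \ref{specificstrat} and \ref{explicitqindepset} with Theorem \ref{biggerthanindep}. The only thing requiring a moment's care is matching the parameters --- that the size of the quantum independent set, the number of questions $t$ in the game, and the number of bases $S_1,\dots,S_k$ are all the same $k$, and that the projection dimension $n$ coming from $\mathcal{Q}_n$ is what fixes the Schmidt rank of the self-tested state as $\ket{\psi_n}$.
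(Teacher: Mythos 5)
Your proof is correct and follows exactly the paper's argument: the paper's own justification for this corollary is literally the one-line remark that it follows by ``combining Lemma \ref{specificstrat}, Lemma \ref{explicitqindepset} and Theorem \ref{biggerthanindep},'' which is the same chain you spell out. Your extra care in noting that the ambient dimension $d$ in Definition \ref{defqindepset} is $n$, so the state is $\ket{\psi_n}$, is a useful explicit check that the paper leaves implicit.
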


We will now give an example of a $(\mathcal{G}_1\lor \mathcal{G}_2)$-game that does not self-test any state. Note that a $3$-dimensional weak Kochen-Specker set yields a projective Kochen-Specker set $S \subsetneq \mathcal{Q}_3$ and thus the $(G_S,k)$-independent set game has a quantum strategy using the state $\ket{\psi_3}$. 

\begin{example}\label{ex:nostateselftest}
Consider Peres' $3$-dimensional weak Kochen-Specker set $S_1$ with $33$ vectors forming $16$ bases \cite{Peres33}. We get a projective Kochen-Specker set by considering the associated rank-$1$ projections. The orthogonality graph $G_{S_1}$ has $48$ vertices. Furthermore, we get $\alpha(G_{S_1})=15, \alpha_q(G_{S_1})\geq 16$ by using Sage \cite{sagemath} to compute $\alpha(G_{S_i})$ and Lemma \ref{explicitqindepset} for a lower bound on $\alpha_q(G_{S_i})$. By Corollary \ref{cor:indepks}, we know that the $(G_{S_1},16)$-independent set game has a perfect quantum strategy with a state of Schmidt rank $3$. We let $\mathcal{G}_1$ be the $(G_{S_1},16)$-independent set game and let $\mathcal{G}_2$ be the magic square game considered in Subsection \ref{syncmagicsquare}. We know that $\mathcal{G}_2$ has a perfect quantum strategy with a state of Schmidt rank $4$ from Theorem \ref{thm:projselftestms}. Therefore, the $(\mathcal{G}_1\lor \mathcal{G}_2)$-game does not self-test any state by Theorem \ref{thm:2pseud}. In this case, we have $|I_A|=|I_B|=48$, $|O_A|=|O_B|=52$.
\end{example}

\paragraph{Acknowledgments.}\phantom{a}\newline
The authors thank the referees for useful comments and suggestions. S.S. has received funding from the European Union's Horizon 2020 research and innovation programme under the Marie Sklodowska-Curie grant agreement No. 101030346. 
L.M. is funded by the European Union under the Grant Agreement No 101078107, QInteract and VILLUM FONDEN via the QMATH Centre of Excellence (Grant No 10059) and Villum Young Investigator grant (No 37532). This project was funded within the QuantERA II Programme that has received funding from the European Union’s Horizon 2020 research and innovation programme under Grant Agreement No 101017733.

\bibliographystyle{plainurl}
\bibliography{noselftest}

\begin{thebibliography}{10}

\bibitem{BASA}
Flavio Baccari, Remigiusz Augusiak, Ivan {\v{S}}upi{\'c}, and Antonio
  Ac{\'\i}n.
\newblock Device-independent certification of genuinely entangled subspaces.
\newblock {\em Physical Review Letters}, 125(26):260507, 2020.
\newblock \href {https://doi.org/10.1103/physrevlett.125.260507}
  {\path{doi:10.1103/physrevlett.125.260507}}.

\bibitem{Bell64}
John~S. Bell.
\newblock On the {E}instein {P}odolsky {R}osen paradox.
\newblock {\em Phys. Phys. Fiz.}, 1(3):195--200, 1964.
\newblock \href {https://doi.org/10.1103/PhysicsPhysiqueFizika.1.195}
  {\path{doi:10.1103/PhysicsPhysiqueFizika.1.195}}.

\bibitem{Bell}
John~S. Bell.
\newblock On the problem of hidden variables in quantum mechanics.
\newblock {\em Rev. Mod. Phys.}, 38:447--452, Jul 1966.
\newblock \href {https://doi.org/10.1103/RevModPhys.38.447}
  {\path{doi:10.1103/RevModPhys.38.447}}.

\bibitem{Bla}
B.~Blackadar.
\newblock {\em Operator algebras}, volume 122 of {\em Encyclopaedia of
  Mathematical Sciences}.
\newblock Springer-Verlag, Berlin, 2006.
\newblock Theory of $C^*$-algebras and von Neumann algebras, Operator Algebras
  and Non-commutative Geometry, III.
\newblock \href {https://doi.org/10.1007/3-540-28517-2}
  {\path{doi:10.1007/3-540-28517-2}}.

\bibitem{BBT}
Gilles Brassard, Anne Broadbent, and Alain Tapp.
\newblock Quantum {P}seudo-{T}elepathy.
\newblock {\em Foundations of Physics}, 35(11):1877–1907, Nov 2005.
\newblock \href {https://doi.org/10.1007/s10701-005-7353-4}
  {\path{doi:10.1007/s10701-005-7353-4}}.

\bibitem{CHTW}
R.~Cleve, P.~Hoyer, B.~Toner, and J.~Watrous.
\newblock Consequences and limits of nonlocal strategies.
\newblock In {\em Proceedings. 19th IEEE Annual Conference on Computational
  Complexity, 2004.}, pages 236--249, 2004.
\newblock \href {https://doi.org/10.1109/CCC.2004.1313847}
  {\path{doi:10.1109/CCC.2004.1313847}}.

\bibitem{CMMNS}
David Cui, Arthur Mehta, Hamoon Mousavi, and Seyed~Sajjad Nezhadi.
\newblock A generalization of {CHSH} and the algebraic structure of optimal
  strategies.
\newblock {\em Quantum}, 4:346, 2020.
\newblock \href {https://doi.org/10.22331/q-2020-10-21-346}
  {\path{doi:10.22331/q-2020-10-21-346}}.

\bibitem{Fitzsimons}
Joseph Fitzsimons, Zhengfeng Ji, Thomas Vidick, and Henry Yuen.
\newblock Quantum proof systems for iterated exponential time, and beyond.
\newblock In {\em Proceedings of the 51st Annual ACM SIGACT Symposium on Theory
  of Computing}, pages 473--480, 2019.
\newblock \href {https://doi.org/10.1145/3313276.3316343}
  {\path{doi:10.1145/3313276.3316343}}.

\bibitem{FA}
Ir{\'{e}}n{\'{e}}e Fr{\'{e}}rot and Antonio Ac{\'{\i}}n.
\newblock Coarse-grained self-testing.
\newblock {\em Physical Review Letters}, 127(24), 2021.
\newblock \href {https://doi.org/10.1103/physrevlett.127.240401}
  {\path{doi:10.1103/physrevlett.127.240401}}.

\bibitem{GeoQCor}
Koon~Tong Goh, Jedrzej Kaniewski, Elie Wolfe, Tamás Vértesi, Xingyao Wu,
  Yu~Cai, Yeong-Cherng Liang, and Valerio Scarani.
\newblock Geometry of the set of quantum correlations.
\newblock {\em Physical Review A}, 97(2), 2018.
\newblock \href {https://doi.org/10.1103/PhysRevA.97.022104}
  {\path{doi:10.1103/PhysRevA.97.022104}}.

\bibitem{GowersHatami}
William~Timothy Gowers and Omid Hatami.
\newblock Inverse and stability theorems for approximate representations of
  finite groups.
\newblock {\em Sbornik: Mathematics}, 208(12):1784, 2017.
\newblock \href {https://doi.org/10.4213/sm8872} {\path{doi:10.4213/sm8872}}.

\bibitem{mipre}
Zhengfeng Ji, Anand Natarajan, Thomas Vidick, John Wright, and Henry Yuen.
\newblock M{I}{P}*={R}{E}, 2020.
\newblock \href {https://doi.org/10.48550/arXiv.2001.04383}
  {\path{doi:10.48550/arXiv.2001.04383}}.

\bibitem{kaniewski2020weak}
J{\k{e}}drzej Kaniewski.
\newblock Weak form of self-testing.
\newblock {\em Physical Review Research}, 2(3):033420, 2020.
\newblock \href {https://doi.org/10.1103/PhysRevResearch.2.033420}
  {\path{doi:10.1103/PhysRevResearch.2.033420}}.

\bibitem{kaniewski2019maximal}
J{\k{e}}drzej Kaniewski, Ivan {\v{S}}upi{\'c}, Jordi Tura, Flavio Baccari,
  Alexia Salavrakos, and Remigiusz Augusiak.
\newblock Maximal nonlocality from maximal entanglement and mutually unbiased
  bases, and self-testing of two-qutrit quantum systems.
\newblock {\em Quantum}, 3:198, 2019.
\newblock \href {https://doi.org/10.22331/q-2019-10-24-198}
  {\path{doi:10.22331/q-2019-10-24-198}}.

\bibitem{KS}
S~Kochen and EP~Specker.
\newblock The problem of hidden variables in quantum mechanics.
\newblock {\em J. Math. Mech.}, 17:59--87, 1967.
\newblock \href {https://doi.org/10.1512/iumj.1968.17.17004}
  {\path{doi:10.1512/iumj.1968.17.17004}}.

\bibitem{MNP}
Laura Man{\v{c}}inska, Thor~Gabelgaard Nielsen, and Jitendra Prakash.
\newblock Glued magic games self-test maximally entangled states, 2021.
\newblock \href {https://doi.org/10.48550/arXiv.2105.10658}
  {\path{doi:10.48550/arXiv.2105.10658}}.

\bibitem{Qhom}
Laura Man\v{c}inska and David~E. Roberson.
\newblock Quantum homomorphisms.
\newblock {\em J. Combin. Theory Ser. B}, 118:228--267, 2016.
\newblock \href {https://doi.org/10.1016/j.jctb.2015.12.009}
  {\path{doi:10.1016/j.jctb.2015.12.009}}.

\bibitem{MSS}
Laura Man\v{c}inska, Giannicola Scarpa, and Simone Severini.
\newblock New separations in zero-error channel capacity through projective
  {K}ochen-{S}pecker sets and quantum coloring.
\newblock {\em IEEE Trans. Inform. Theory}, 59(6):4025--4032, 2013.
\newblock \href {https://doi.org/10.1109/TIT.2013.2248031}
  {\path{doi:10.1109/TIT.2013.2248031}}.

\bibitem{MPS}
Laura Mančinska, Jitendra Prakash, and Christopher Schafhauser.
\newblock Constant-sized robust self-tests for states and measurements of
  unbounded dimension, 2021.
\newblock \href {https://doi.org/10.48550/arXiv.2103.01729}
  {\path{doi:10.48550/arXiv.2103.01729}}.

\bibitem{MY}
Dominic Mayers and Andrew Yao.
\newblock Self testing quantum apparatus.
\newblock {\em Quantum Information \& Computation}, 4(4):273--286, 2004.
\newblock \href {https://doi.org/10.26421/QIC4.4-3}
  {\path{doi:10.26421/QIC4.4-3}}.

\bibitem{NatarajanVidick}
Anand Natarajan and Thomas Vidick.
\newblock Low-degree testing for quantum states, and a quantum entangled games
  {PCP} for {QMA}.
\newblock In {\em 2018 IEEE 59th Annual Symposium on Foundations of Computer
  Science (FOCS)}, pages 731--742, 2018.
\newblock \href {https://doi.org/10.1109/focs.2018.00075}
  {\path{doi:10.1109/focs.2018.00075}}.

\bibitem{NatarajanWright}
Anand Natarajan and John Wright.
\newblock {NEEXP} is contained in {MIP}.
\newblock In {\em 2019 IEEE 60th Annual Symposium on Foundations of Computer
  Science (FOCS)}, pages 510--518, 2019.
\newblock \href {https://doi.org/10.1109/FOCS.2019.00039}
  {\path{doi:10.1109/FOCS.2019.00039}}.

\bibitem{paulsen2016estimating}
Vern~I Paulsen, Simone Severini, Daniel Stahlke, Ivan~G Todorov, and Andreas
  Winter.
\newblock Estimating quantum chromatic numbers.
\newblock {\em Journal of Functional Analysis}, 270(6):2188--2222, 2016.
\newblock \href {https://doi.org/10.1016/j.jfa.2016.01.010}
  {\path{doi:10.1016/j.jfa.2016.01.010}}.

\bibitem{Peres33}
Asher Peres.
\newblock Two simple proofs of the {K}ochen-{S}pecker theorem.
\newblock {\em J. Phys. A}, 24(4):L175--L178, 1991.
\newblock \href {https://doi.org/10.1088/0305-4470/24/4/003}
  {\path{doi:10.1088/0305-4470/24/4/003}}.

\bibitem{RW}
R.~Renner and S.~Wolf.
\newblock Quantum {P}seudo-{T}elepathy and the {K}ochen-{S}pecker {T}heorem.
\newblock In {\em International Symposium onInformation Theory, 2004. ISIT
  2004. Proceedings.}, pages 322--322, 2004.
\newblock \href {https://doi.org/10.1109/ISIT.2004.1365359}
  {\path{doi:10.1109/ISIT.2004.1365359}}.

\bibitem{Scarpa}
Giannicola Scarpa.
\newblock {\em Quantum entanglement in non-local games, graph parameters and
  zero-error information theory}.
\newblock PhD thesis, The University of Amsterdam, 2013.
\newblock URL: \url{https://eprints.illc.uva.nl/id/eprint/2121}.

\bibitem{slofstra2011lower}
William Slofstra.
\newblock Lower bounds on the entanglement needed to play xor non-local games.
\newblock {\em Journal of Mathematical Physics}, 52(10):102202, 2011.
\newblock \href {https://doi.org/10.1063/1.3652924}
  {\path{doi:10.1063/1.3652924}}.

\bibitem{SlofstraQcor}
William Slofstra.
\newblock The set of quantum correlations is not closed.
\newblock {\em Forum Math. Pi}, 7:e1, 41, 2019.
\newblock \href {https://doi.org/10.1017/fmp.2018.3}
  {\path{doi:10.1017/fmp.2018.3}}.

\bibitem{Sigurd}
Sigurd Storgaard.
\newblock Certification of quantum systems through self testing.
\newblock {\em Master's thesis at the University of Copenhagen}, 2022.

\bibitem{SB}
Ivan {\v{S}}upi{\'{c}} and Joseph Bowles.
\newblock Self-testing of quantum systems: a review.
\newblock {\em {Quantum}}, 4:337, September 2020.
\newblock \href {https://doi.org/10.22331/q-2020-09-30-337}
  {\path{doi:10.22331/q-2020-09-30-337}}.

\bibitem{tavakoli2021mutually}
Armin Tavakoli, M{\'a}t{\'e} Farkas, Denis Rosset, Jean-Daniel Bancal, and
  Jedrzej Kaniewski.
\newblock Mutually unbiased bases and symmetric informationally complete
  measurements in bell experiments.
\newblock {\em Science advances}, 7(7):eabc3847, 2021.
\newblock \href {https://doi.org/10.1126/sciadv.abc3847}
  {\path{doi:10.1126/sciadv.abc3847}}.

\bibitem{sagemath}
{The Sage Developers}.
\newblock {\em {S}ageMath, the {S}age {M}athematics {S}oftware {S}ystem
  ({V}ersion 8.9)}, 2019.
\newblock URL: \url{https://www.sagemath.org}.

\bibitem{VidickGH}
Thomas Vidick.
\newblock A simplified analysis of robust self-testing for {$n$} {EPR} pairs,
  2018.
\newblock Available at \url{http://users.cms.caltech.edu/~vidick}.

\bibitem{WBMS}
Xingyao Wu, Jean-Daniel Bancal, Matthew McKague, and Valerio Scarani.
\newblock Device-independent parallel self-testing of two singlets.
\newblock {\em Physical Review A}, 93(6), Jun 2016.
\newblock \href {https://doi.org/10.1103/physreva.93.062121}
  {\path{doi:10.1103/physreva.93.062121}}.

\end{thebibliography}

\end{document}